\let\Ginclude@graphics\@org@Ginclude@graphics 
\title[Noise Robust Core-stable Coalitions of Hedonic Games]{Noise Robust Core-stable Coalitions of Hedonic Games}
  \author{\Name{Prashant Trivedi} \Email{trivedi.prashant15@iitb.ac.in}\\
  \addr IEOR, Indian Institute of Technology Bombay
  \AND
  \Name{Nandyala Hemachandra} \Email{nh@iitb.ac.in}\\
  \addr IEOR, Indian Institute of Technology Bombay
 }
\begin{document}

\maketitle

\begin{abstract}
In this work, we consider the coalition formation games with an additional component, `noisy preferences'. Moreover, such noisy preferences are available only for a sample of coalitions. We propose a multiplicative noise model (equivalent to an additive noise model) and obtain the prediction probability, defined as the probability that the estimated PAC core-stable partition of the \emph{noisy} game is also PAC core-stable for the \emph{unknown noise-free} game. This prediction probability depends on the probability of a combinatorial construct called an `agreement event'. We explicitly obtain the agreement probability for $n$ agent noisy game with $l\geq 2$ support noise distribution. For a user-given satisfaction value on this probability, we identify the noise regimes for which an estimated partition is noise robust; that is, it is PAC core-stable in both noisy and noise-free games. We obtain similar robustness results when the estimated partition is not PAC core-stable. These noise regimes correspond to the level sets of the agreement probability function and are non-convex sets. Moreover, an important fact is that the prediction probability can be high even if high noise values occur with a high probability. Further, for a class of top-responsive hedonic games, we obtain the bounds on the extra noisy samples required to get noise robustness with a user-given satisfaction value. 

We completely solve the noise robustness problem of a $2$ agent hedonic game. In particular, we obtain the prediction probability function for $l=2$ and $l=3$ noise support cases. For $l=2$, the prediction probability is convex in noise probability, but the noise robust regime is non-convex. Its minimum value, called the safety value, is 0.62; so, below 0.62, the noise robust regime is the entire probability simplex. However, for $l \geq 3$, the prediction probability is non-convex; so, the safety value is the global minima of a non-convex function and is computationally hard.  
\end{abstract}
\begin{keywords} 
Prediction probability; noise regimes; combinatorial events; safety value; non-convex optimisation; global minima; weak supervision;
PAC stability; multiplicative noise
\end{keywords}

\section{Introduction}
\label{sec:intro}
Coalition formation games are of great interest to researchers because they model natural interactions among multi-agent societies.
The coalition formation process can be formalized using the framework of hedonic games. In these games, each agent has a preference over the coalitions they form with the other agents. An outcome of a hedonic game consists of dividing the agent set into disjoint coalitions called partition. Such a partition is referred to as \textit{coalition structure}. A desirable property in hedonic games is the formation of a stable coalition structure. However, any stability notion \citep{bogomolnaia2002stability,Aziz_savani} assumes the complete information of each agent's preferences, i.e., the entire ranking of coalitions by each agent is known. This is one of the strong assumptions in hedonic games. Nonetheless, there is significant work in finding a stable partition of the agent set, if it exists \citep{brandt2016handbook}.

Authors in \cite{sliwinski2017learning} relax the assumption of complete information and assume that the preferences over only some coalitions are available; they introduce the notion of $\epsilon$-Probably Approximately Correct ($\epsilon$-PAC) stability to learn the stable outcome of the hedonic game. Apart from the assumption about the complete information, we can have preferences corrupted by noise, i.e., the exact preferences of agents are not available; instead, the preferences with errors are observed. We call such observed erroneous preferences, \textit{noisy preferences}. A consequence of these noisy preferences is that a partition that is not stable in a noisy game can be stable in a noise-free game with non-trivial probability or vice-versa.

In this work, one of our goals is to find the probability that a stable partition learned from the observed noisy sample is the same as that of the stable partition of the \textit{unknown noise-free} game (Sec. \ref{sec: n_agent_partial_info}). We obtain similar results when one starts with a particular partition that is not PAC stable for the noisy game. In such a case, we are interested in the probability that the estimated partition is also not core-stable for a noise-free game (Sec. \ref{sec: Non-existence_parital_info}). We call these probabilities the \textit{prediction probabilities}.
These prediction probabilities depend on a probability of an event called the `agreement event'. We also obtain the noise regimes where the agreement probabilities are more than a user-given threshold.

As a motivation, let us consider a stylized model of a market for a specific product that three manufacturers $N = \{1, 2, 3\}$ serve. Each manufacturer has preferences, denoted by $\succ_i, ~\forall~i\in N$, over the coalitions they want to form with other manufacturers. Based on their preferences, a market analyst would like to predict the coalition structure that these three manufacturers form. However, these preferences being private to manufacturers, the market analyst collects them through a noisy channel (or estimates them based on the market's history). For simplicity, assume that the analyst has noisy preferences, denoted by $\succ^{\prime}_i,~ \forall~ i\in N$, of all the agents (a complete information model) as in the game \eqref{eqn: noisy_game_motivation} below:
\begin{minipage}{0.47\textwidth}
\begin{equation}
\label{eqn: noisy_game_motivation}
\begin{aligned} 
\textcolor{red}{\{12\} \succ_1^{\prime} \{1\}} \succ_1^{\prime} \{123\} \succ_1^{\prime} \{13\}
\\ 
\{12\} \succ_2^{\prime} \{2\} \succ_2^{\prime} \{123\} \succ_2^{\prime} \{23\}
\\ 
\{123\} \succ_3^{\prime} \{23\} \succ_3^{\prime} \{13\} \succ_3^{\prime} \{3\}
\end{aligned}
\end{equation}    
\end{minipage}\quad
\begin{minipage}{0.47\textwidth}
\begin{equation}
\label{eqn: noise-free_game_motivation}
\begin{aligned} 
\textcolor{red}{\{1\} \succ_1 \{12\}} \succ_1 \{123\} \succ_1 \{13\}
\\ 
\{12\} \succ_2 \{2\} \succ_2 \{123\} \succ_2 \{23\}
\\ 
\{123\} \succ_3 \{23\} \succ_3 \{13\} \succ_3 \{3\}
\end{aligned}
\end{equation}
\end{minipage}

The noisy core-stable partition as predicted by the market analyst is $\tilde{\pi} = \{ \{12\}, \{3\}\}$. However, suppose the noise-free preferences are as in game \eqref{eqn: noise-free_game_motivation} (these are not known to market analyst). Based on these noise-free preferences the unique core-stable partition is $\pi = \{\{1\}, \{2\}, \{3\}\}$. 
So, while the market analyst concludes that the manufacturers form a coalition based on the available noisy preferences, they will not. Thus, the market analyst needs to know the prediction probability, the probability that the predicted partition based on the available noisy preferences is the same as the partition of the unknown noise-free game in \eqref{eqn: noise-free_game_motivation}.
An interesting phenomenon in the noisy hedonic game is that even if the market analyst misses identifying a core-stable partition, the market has one with non-trivial probability. We consider this in Sec. \ref{sec: Non-existence_parital_info}, via their complimentary event. 

As a generalization to the above three manufacturers' model, we assume that a learner has preferences over some coalitions collected via a noisy channel. Based on these noisy preferences, the learner's task is to predict the core-stable partition for an unknown noise-free game based on these partial and noisy preferences. We propose a noise model to investigate noise regimes where the predicted noisy partition is the same as an unknown noise-free partition. It addresses an important aspect of noise-robustness, meaning these partitions are the same with high probability. Specifically, our major contributions are:
\\
\noindent \textbf{(a)} In Sec. \ref{sec: n_agent_partial_info}, we propose a multiplicative noise model and obtain the prediction probability.  This probability depends on a combinatorial construct called `agreement event'. For a user-given value on agreement probability, we obtain the noise regimes where an estimated partition of the noisy game is noise-robust.
\\
\noindent \textbf{(b)} In Subsec.  \ref{subsec: sample_complexity}, we obtain the lower and upper bounds on the number of noisy samples required to get PAC stable partition for the top-responsive class of hedonic games. 
\\
\noindent \textbf{(c)} In Sec. \ref{sec: Non-existence_parital_info}, we obtain the prediction probability function that a partition $\tilde{\pi}$ is not PAC stable for the noise-free game given that it is not PAC stable for the noisy game. 
\\
\noindent \textbf{(d)} In Sec. \ref{sec: 2 agent_full_info}, we consider a noisy game with 2 agents with complete information on each agent's preferences. The allowable noise regimes for noise-robustness are non-convex, even though the agreement probability is a convex function.
\\
\noindent \textbf{(e)} We now mention some observations of 2 agent game. The prediction probability is non-convex when the noise distribution has $l \ (\geq 3)$ support. Thus, computing the safety value, i.e., the minimum prediction probability, is computationally hard. So, for user satisfaction values below this safety value, the prediction probability is 1, regardless of the noise values and their probabilities.
Also, the noise values that render a user given minimum prediction probability form non-contiguous regions (superlevel sets). A counter-intuitive fact is that the prediction probabilities can be high for some high noise values occurring with high probability. A simple illustration is in the 2 support case, where the prediction probability is 1 even when the value of both agents is inflated with probability 1.

\subsection{Notations and preliminaries}
\label{sec: notations_prelims}
This Sec. provides some notations, definitions, and other related backgrounds that we use in the paper subsequently. Let $N = \{1,2,\dots, n\}$ be the set of agents and for each agent $i\in N$, let $\mathcal{C}_i  = \{S\subseteq N ~ |~ i\in S\}$ be the set of coalitions containing agent $i$. A hedonic game is a pair $(N, \succeq)$, where $\succeq = (\succeq_1, \succeq_2, \dots, \succeq_n)$. Here $\succeq_i$ is a reflexive, transitive, and complete preference ranking of agent $i\in N$ over the set $\mathcal{C}_i$. The preference $\succeq_i$ of agent $i \in N$ represents its willingness to form a coalition with other agents.

For any two distinct coalitions $S, T\subseteq \mathcal{C}_i$ we say $S \succ_i T$ if agent $i\in N$ prefers coalition $S$ over $T$. Also, $S\sim_i T$ iff $S\succeq_i T$ and $T\succeq_i S$, that is agent $i$ is indifferent to coalition $S$ and $T$. Since the preferences are reflexive, transitive and complete there exists a value function $\textbf{\textit{v}}: S\subseteq N \mapsto \mathbb{R}^{|S|}$ such that $\textbf{\textit{v}}(S) = (v_i(S))_{i\in S}$ \footnote{Note that $(v_i(S))_{i\in S}$ is a vector of size $|S|$ with each element $v_i(S)$ for agent $i\in S$.}, where $v_i(S) \in \mathbb{R}^{+}$ is the valuation of an agent $i$ in coalition $S$. For any coalitions $S,T \in \mathcal{C}_i$, it satisfies that $S\succeq_i T \iff {v}_i(S) \geq {v}_i(T)$ \citep{mas1995microeconomic, narahari2014game}. The valuation $v_i(S)$ often depends on value $v_i(j) \in \mathbb{R}^{+}$ of agent $j$ in the eyes of agent $i$, here $i, j\in S$. We use $(N, \textbf{\textit{v}})$ to denote the hedonic game.

A typical partition of the  agent set in the hedonic game $(N, \textbf{\textit{v}})$ is denoted by $\pi$. Let the coalition containing $i\in N$ in partition $\pi$ be $\pi(i)$. The hedonic game's outcome is finding a `stable' partition according to some stability criterion. A partition is `stable' if no agent or a group of agents can deviate from it to reach a subjectively better outcome. Various stability criteria are introduced in \cite{bogomolnaia2002stability} and are nicely reviewed by \cite{Aziz_savani}. However, in this paper, we use core, one of the popular stability criteria. A coalition $S$ \textit{core blocks} a partition $\pi$, if every agent $i$ in coalition $S$ strictly prefers $S$ to $\pi(i)$, i.e., $S \succ_i \pi(i),~\forall ~ i \in S$. Further,  a coalition structure $\pi$ is said to be \textit{core-stable} if there is no coalition that core blocks $\pi$, meaning there is at least one agent $i\in S$ who prefers $\pi(i)$ over $S$, i.e., $\pi(i) \succeq_i S$.

Recently, for a partial information hedonic game, authors in \cite{sliwinski2017learning} have proposed the PAC learning framework to find a $\epsilon$-PAC stable outcome for several classes of hedonic games. We briefly describe the $\epsilon$-PAC stability framework here \citep{sliwinski2017learning}. Given a sample $\mathcal{S} =  \{(S_1,\textbf{\textit{v}}(S_1)),\dots, (S_{m}, \textbf{\textit{v}}(S_m))\}$, where $S_1, S_2, \dots, S_m$ are drawn \textit{i.i.d.} from a distribution over $2^N$ and the corresponding values $\textbf{\textit{v}}$'s are obtained from $\mathcal{D}$. An algorithm $\mathcal{A}$ is said to PAC stabilize a class $\mathcal{H}$ of hedonic games if for any hedonic game $(N, \textbf{\textit{v}}) \in \mathcal{H}$, after seeing examples in $\mathcal{S}$ 
it can propose a partition $\pi$ that is unlikely to be core blocked by a coalition sampled from $\mathcal{D}$ with high probability. Formally, for any error and the confidence parameter $\epsilon, \delta>0$, a partition $\pi$ is $\epsilon$-PAC stable under $\mathcal{D}$ if $\mathcal{A}$ outputs a $\epsilon$-PAC stable coalition structure or reports that the core is empty, i.e., 
\begin{equation}
\label{eqn: PAC_stability}
\mathbb{P}_{\mathcal{S}}[\mathbb{P}_{T\sim \mathcal{D}} [T~ \text{\textit{core blocks}}~ \pi \text{\textit{ in noise-free game }} (N, \textbf{\textit{v}})] < \epsilon ] \geq 1-\delta,
\end{equation}
here, the number of samples $m$ are required to be polynomial in $n, \frac{1}{\epsilon}~ \text{and} \log\frac{1}{\delta}$.

As mentioned above, the $\epsilon$-PAC stability notion assumes the correct preferences over the sample of coalitions. However, it is not the case in most realistic scenarios. Often the preferences are erroneous, i.e., corrupted by noise. In this work, we relax both the assumptions of correct and complete knowledge of the preferences. Let the value of each agent in any coalition be corrupted by an unknown noise distribution, $\mathcal{N}$. We denote the complete, reflexive and transitive noisy preferences by $\succeq^{\prime} = (\succeq_1^{\prime}, \succeq_2^{\prime}, \dots, \succeq_n^{\prime})$. The noisy hedonic game is therefore represented by $(N, \succeq^{\prime})$ or equivalently $(N, \tilde{\textbf{\textit{v}}})$, where $\tilde{\textbf{\textit{v}}}(S) = (\tilde{v}_i(S))_{i\in S}$ is such that $\tilde{v}_i(S) \in \mathbb{R}^+$. Formally, we are given a sample $\mathcal{\tilde{S}} = \{(S_1,\tilde{\textbf{\textit{v}}}(S_1)), \dots, (S_{\tilde{m}}, \tilde{\textbf{\textit{v}}}(S_{\tilde{m}}))\}$ from the noisy hedonic game $(N, \tilde{\textbf{\textit{v}}})$. Here $S_1, S_2, \dots, S_m$ are drawn \textit{i.i.d.} from a distribution over $2^N$ and the corresponding values $\tilde{\textbf{\textit{v}}}$'s are obtained from $\mathcal{\widetilde{D}}$.
So, we can find a $\tilde{\epsilon}$-PAC stable partition (this can be done by using an algorithm similar to one given in say, \cite{sliwinski2017learning,alcalde2004researching}) if it exists. 
Let $\tilde{\pi}$ be an $\tilde{\epsilon}$-PAC stable partition of the noisy hedonic game, i.e., with probability at least $1-\delta$, we have,
\begin{equation}
\label{eqn: PAC_stability_noisy}
\mathbb{P}_{\tilde{\mathcal{S}}} [ \mathbb{P}_{T\sim \mathcal{\widetilde{D}}} [T~ \text{\textit{core blocks}}~ \tilde{\pi} \text{\textit{ in noisy game }} (N, \tilde{\textbf{\textit{v}}})] < \tilde{\epsilon}] \geq 1-\delta.
\end{equation} 
Again, the number of samples required are $\tilde{m}$ which is polynomial in $n, \frac{1}{\tilde{\epsilon}}$, and $\log \frac{1}{\delta}$. The entire paper uses the inner probability given in Equation \eqref{eqn: PAC_stability} for the noisy game. However, for the noise-free game, we are interested in the probability given in Equation \eqref{eqn: predi_prob_with_core} below. This is because we only have samples from the noisy game; hence, the outer probability is taken on noisy samples for noisy and noise-free games.

Let $\alpha_i(S) \sim \mathcal{N}$ be the noise realized to an agent $i \in S\subseteq N$. We assume that for each agent $i\in S$, the noise is the same, i.e., $\alpha_i(S) = \alpha(S) \in \mathbb{R}^{+},~\forall~i\in S$. To ensure noise distribution support, $\mathcal{N}_{sp}$ is non-empty we assume that it contains $1$ and other noise values. So, the noisy value is $\tilde{v}_i(S) \coloneqq \alpha(S) \cdot v_i(S), ~ \forall~ i\in S \subseteq N$. We call this noise model the \textit{multiplicative noise} model; this is equivalent to the additive noise model as given in Remark \ref{remark: additive} below.
In Sections \ref{subsec: n_agent_l_support_partial_info}, and \ref{subsec: n_agents_l_support_non_existence}, we also consider scaling at various levels by taking $l\geq 2$ support on the noise distribution (the same level for all members of a given coalition). Another motivation for the same noise level scaling for each agent in the coalition is the following: 
A learner is collecting the valuation of each coalition via a noisy channel. So, we assume that a noisy channel affects the value of the entire coalition by the same amount. Hence, each agent in a coalition will have the same noise impact, irrespective of its identity. However, suppose an agent $i$ is a member of two coalitions $S, T \in \mathcal{C}_i$. The noise valuation of agent $i$ in coalition $S$ is $\alpha(S) v_i(S)$, and $\alpha(T)v_i(T)$ in coalition $T$, so, we also have different noise values for the same agent depending on the coalition.  Moreover, the assumption of common scaling $\alpha(S)$ is necessary to carry out the Probably Approximately Correct (PAC) analysis. The PAC stability definition uses a hypothesis class, in our case,  the class of hedonic games. The common $\alpha(S)$ preserves the class of hedonic games under noise, which need not be the case when we scale the value of each coalition at the individual agent level. For example, if the noise-free game belongs to the class of additively separable hedonic games (ASHGs), the noisy game with agent dependent noise scaling $\alpha_i(S), ~\forall ~i\in S$ may not be ASHG, but it is within ASHG class with common noise scaling $\alpha(S)$. So, we use a common scaling that restricts noisy and noise-free games to the same class. We believe this assumption can be relaxed by taking the larger class of hedonic games; however, we might need additional conditions to ensure the class-preserving property.

It is important to note that our noisy hedonic game setup can be reduced to the noise-free setup in a very specialized setting, i.e., only if $\alpha(S) =1$ for all coalitions $S$.

Suppose $\tilde{\pi}$ is any partition of the noisy game $(N, \tilde{\textbf{\textit{v}}})$. We aim to find the probability that any $T\sim \mathcal{\widetilde{D}}$ core blocks $\tilde{\pi}$ in the noise-free game $(N, \textbf{\textit{v}})$, i.e., 
\begin{equation}
\label{eqn: predi_prob_with_core}
    \mathbb{P}_{T\sim \tilde{\mathcal{D}}} [T~ \text{\textit{core blocks}}~ \tilde{\pi} \text{\textit{ in noise-free game }} (N, \textbf{\textit{v}})].
\end{equation}
We call the above probability, \textit{prediction probability}. In each case, i.e., when $\tilde{\pi}$ is $\tilde{\epsilon}$-PAC stable partition of the noisy game or not, we bound these prediction probability in Sections \ref{sec: n_agent_partial_info} and \ref{sec: Non-existence_parital_info}, respectively. Prediction probability is a performance measure associated with noise robustness, as defined below:
\begin{definition}[$\zeta$ noise-robust core-stable partition $\tilde{\pi}$]
\label{def: zeta_noise_robustness}
A partition $\tilde{\pi}$ is $\zeta$ noise-robust core-stable partition if (a) $\tilde{\pi}$ is $\tilde{\epsilon}$-PAC stable partition of noisy game $(N, \tilde{\textbf{\textit{v}}})$, and (b) prediction probability in Equation \eqref{eqn: predi_prob_with_core} is less than $\epsilon$, where $\epsilon = 1 - (1 - \tilde{\epsilon})\zeta$ with $\zeta \in (0, 1]$.
\end{definition}
\begin{definition}[$\eta$ noise-robust non core-stable partition $\tilde{\pi}$]
\label{def: eta_noise_robustness}
A partition $\tilde{\pi}$ is $\eta$
noise-robust non core-stable partition if (a) $\tilde{\pi}$ is not $\tilde{\epsilon}$-PAC stable partition of noisy game $(N, \tilde{\textbf{\textit{v}}})$ and (b) prediction  probability in Equation \eqref{eqn: predi_prob_with_core} is more than $1-\epsilon$, where $\epsilon = 1- (1 - \tilde{\epsilon})\eta$ with $\eta \in (0, 1]$.
\end{definition}

\begin{remark}
\label{remark: additive}
\underline{Additive noise model:} Our noise model is a fairly generic one. For example, if the noise is additive, i.e., $\tilde{v}_i(S) = \alpha(S)  + v_i(S)$ then, taking exponential on both sides, we have $e^{\tilde{v}_i(S)} = e^{\alpha(S)  + v_i(S)} = e^{\alpha(S)} \cdot e^{v_i(S)}$. With $\tilde{V}_i(S) = e^{\tilde{v}_i(S)}$, $\Gamma (S) = e^{\alpha(S)}$, and $V_i(S) = e^{v_i(S)}$, we have $\tilde{V}_i(S) = \Gamma(S) V_i(S)$. Hence, for robustness to an additive noise model, one can reduce it to  a noisy hedonic game $(N, \widetilde{\textbf{\textit{V}}})$ with multiplicative noise $\Gamma(S)$.
\end{remark}

\begin{remark}
Note that in Equation \eqref{eqn: predi_prob_with_core} we use noise-free values $\textit{\textbf{v}}$'s to check whether a coalition can potentially block a given noisy core-stable partition $\tilde{\pi}$. However, we only have samples from the noisy game, so we use $T \sim \mathcal{\widetilde{D}}$ instead of $T\sim \mathcal{D}$.
\end{remark}

\section{Partial information noisy game with $\tilde{\pi}$ as $\tilde{\epsilon}$-PAC stable partition}
\label{sec: n_agent_partial_info}
Let $\mathcal{\tilde{S}} = \{(S_1,\tilde{\textbf{\textit{v}}}(S_1),\dots,(S_{\tilde{m}}, \tilde{\textbf{\textit{v}}}(S_{\tilde{m}}))\}$ be a sample of coalitions drawn \textit{i.i.d} from the distribution $\mathcal{\widetilde{D}} = \mathcal{D} \times \mathcal{N}$ over $2^N$. 
Let $\tilde{\pi}$ be $\tilde{\epsilon}$-PAC stable outcome of noisy game $(N, \tilde{\textbf{\textit{v}}})$. Therefore, with probability at least $1-\delta$, $\forall~ \tilde{\epsilon} >0$, we have
\begin{equation}
\begin{aligned}
\mathbb{P}_{T\sim \mathcal{\widetilde{D}}}[T~ core~ blocks~ \tilde{\pi}] < \tilde{\epsilon}, 
~~ or \hspace{3mm} \mathbb{P}_{T\sim \tilde{\mathcal{D}}} [ \tilde{v}_i(T) > \tilde{v}_i(\tilde{\pi}(i)),~~ \forall~i\in T] &< \tilde{\epsilon},
\\
or \hspace{3mm} \mathbb{P}_{T\sim \tilde{\mathcal{D}}} [\cup_{i\in T}~~ \tilde{v}_i(\tilde{\pi}(i)) \geq \tilde{v}_i(T)] &\geq 1- \tilde{\epsilon}.
\end{aligned}
\label{eqn: core_block_noisy}
\end{equation}
For an unknown noise-free hedonic game $(N, \textbf{\textit{v}})$, we now find the prediction probability given in Equation (\ref{eqn: predi_prob_with_core}). To this end, we first define set $\mathcal{R}(T)$ for any coalition $T$ as $\mathcal{R}(T) \coloneqq
\{\tilde{\pi}(i)\in \tilde{\pi}~|~i\in T\}$, i.e., for all agents $i \in T$, it is the set of all coalitions containing agent $i$ in the partition $\tilde{\pi}$. Moreover, for any coalition $T$, and partition $\tilde{\pi}$, we define an agreement event $M(\tilde{\pi}, T)$ containing the set of all noise levels $\alpha(\tilde{\pi}(i))$ and $\alpha(T)$ such that all the coalitions $\tilde{\pi}(i) \in \mathcal{R}(T)$ are preferred over coalition $T$ by every agent $i\in T$ in both noisy and noise-free game. Formally, it is defined as
\begin{small}
\begin{equation*}
\label{eqn: agreement_prob_M_pi}
M(\tilde{\pi}, T) \coloneqq \{(\{\alpha(\tilde{\pi}(i))\}_{\tilde{\pi}(i)\in \mathcal{R}(T)}, \alpha(T)) :  \cap_{i\in T} \{v_i(\tilde{\pi}(i)) \geq v_i(T) ~\cap~ \alpha(\tilde{\pi}(i)){v}_i(\tilde{\pi}(i))  \geq \alpha(T){v}_i(T)\}\}.
\end{equation*}
\end{small}
Let $f_T(\textbf{\textit{p}}, \boldsymbol{\alpha}) \coloneqq \mathbb{P}_{T\sim \mathcal{\widetilde{D}}}[M(\tilde{\pi}, T)]$ be the probability of agreement event $M(\tilde{\pi}, T)$, where $\textbf{\textit{p}}$ is the probability mass function of noise values $\boldsymbol{\alpha}$ \footnote{Here $\boldsymbol{\alpha}$ contains all possible noise values, and $\textbf{\textit{p}}$ is the probability mass function of noise values in $\boldsymbol{\alpha}$.}. Note that $M(\tilde{\pi}, T)$ is not known, since the noise-free values $v_i(T)$ and $v_i(\tilde{\pi}(i))$ are not known. However, for $l\geq 2$ support noise distribution we obtain explicit expressions for $f_T(\textbf{\textit{p}}, \boldsymbol{\alpha})$ in Sec. \ref{subsec: n_agent_l_support_partial_info}. We also use $f_T(\textbf{\textit{p}}, \boldsymbol{\alpha})$ later as user satisfaction value. The following Theorem gives probability that unknown noise-free game $(N, \textbf{\textit{v}})$ has $\tilde{\pi}$ as $\epsilon$-PAC stable partition 
($\epsilon$ is identified in the Theorem \ref{thm: prob_same_partition_n_partial_info} below in terms of $\textbf{\textit{p}}, \boldsymbol{\alpha}$ and $\tilde{\epsilon}$) if noisy game $(N, \tilde{\textbf{\textit{v}}})$ has $\tilde{\pi}$ as $\tilde{\epsilon}$-PAC stable partition. 

\begin{theorem}
\label{thm: prob_same_partition_n_partial_info}
Let $\tilde{\pi}$ be $\tilde{\epsilon}$-PAC stable outcome of the noisy game $(N, \tilde{\textbf{\textit{v}}})$. Then, $\tilde{\pi}$ is $\epsilon$-PAC stable for noise-free game $(N, \textbf{\textit{v}})$, i.e., $\mathbb{P}_{T\sim \tilde{\mathcal{D}}}[\cup_{i\in T}~~ v_i(\tilde{\pi}(i)) \geq v_i(T) ] \geq 1-\epsilon$, where
$\epsilon > 0$ satisfies $(1-\tilde{\epsilon}) f_T(\textbf{p}, \boldsymbol{\alpha}) = 1- \epsilon$ with $f_T(\textbf{p}, \boldsymbol{\alpha}) = \mathbb{P}[M(\tilde{\pi}, T)]$.
\end{theorem}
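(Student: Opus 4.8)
The plan is to phrase everything in terms of the complementary ``not core-blocked'' events and to lower-bound the noise-free not-blocked probability by the joint probability that $T$ fails to block $\tilde\pi$ in both games. Write $E_{n}\coloneqq\{\cup_{i\in T}\,\tilde v_i(\tilde\pi(i))\geq\tilde v_i(T)\}$ for the noisy not-blocked event and $E_{nf}\coloneqq\{\cup_{i\in T}\,v_i(\tilde\pi(i))\geq v_i(T)\}$ for the noise-free one, so that the conclusion is literally $\mathbb{P}_{T\sim\tilde{\mathcal D}}[E_{nf}]\geq 1-\epsilon$. The hypothesis, rewritten as in \eqref{eqn: core_block_noisy}, already supplies $\mathbb{P}_{T\sim\tilde{\mathcal D}}[E_{n}]\geq 1-\tilde\epsilon$, so the entire content is to transfer ``not blocked'' from the noisy game to the noise-free game, and the agreement event $M(\tilde\pi,T)$ is exactly the bridge that does this.

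First I would record the elementary set containment $M(\tilde\pi,T)\subseteq E_{nf}\cap E_{n}$: on $M$ every agent $i\in T$ prefers its own coalition $\tilde\pi(i)$ to $T$ in \emph{both} games, which in particular certifies ``at least one agent prefers'' in each game separately. This immediately gives $\mathbb{P}[E_{nf}]\geq\mathbb{P}[E_{nf}\cap E_{n}]$ and reduces the theorem to establishing $\mathbb{P}[E_{nf}\cap E_{n}]\geq(1-\tilde\epsilon)\,f_T$. I would then obtain the advertised product form by a factorization that uses the product structure $\tilde{\mathcal D}=\mathcal D\times\mathcal N$: the coalition $T$ — and hence the deterministic noise-free comparisons $v_i(\tilde\pi(i))$ versus $v_i(T)$ — is drawn independently of the noise levels $\alpha(\cdot)$ that settle the noisy comparisons. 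Conditioning on $E_n$ and writing $\mathbb{P}[E_{nf}\cap E_n]=\mathbb{P}[E_n]\,\mathbb{P}[E_{nf}\mid E_n]$, I would identify the conditional agreement factor $\mathbb{P}[E_{nf}\mid E_n]$ with $f_T=\mathbb{P}[M(\tilde\pi,T)]$ and bound $\mathbb{P}[E_n]\geq 1-\tilde\epsilon$ from the hypothesis; combining yields $\mathbb{P}[E_{nf}]\geq(1-\tilde\epsilon)f_T$, and setting $1-\epsilon=(1-\tilde\epsilon)f_T$ identifies $\epsilon$ and closes the argument.

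The main obstacle I expect is justifying this factorization cleanly, because the core-block criterion is a \emph{union} over agents (``some agent prefers'') whereas $M(\tilde\pi,T)$ is an \emph{intersection} (``all agents agree in both games''), so reconciling the unconditional intersection probability $f_T=\mathbb{P}[M]$ with its role as the conditional agreement factor $\mathbb{P}[E_{nf}\mid E_n]$ is the delicate point. I would handle it by carrying the comparison at the level of the ``all agents prefer'' events in each game, using the $\mathcal D\times\mathcal N$ independence to decouple the noise-free indicator (a function of $T$ alone) from the noisy comparison (a function of the $\alpha$'s), and only then relaxing back to the union events via the containment above.

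A secondary point I would state explicitly is that, although the noise-free valuations $v_i(\cdot)$ are unknown, $f_T=\mathbb{P}[M(\tilde\pi,T)]$ is nonetheless a well-defined probability — made explicit for $l\geq 2$ support in Section~\ref{subsec: n_agent_l_support_partial_info} — so the identity $1-\epsilon=(1-\tilde\epsilon)f_T$ determines a meaningful $\epsilon$ even though $\epsilon$ is not itself directly observable to the learner.
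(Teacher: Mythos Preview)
Your skeleton matches the paper's: introduce $E_{nf}$ and $E_n$, note $M(\tilde\pi,T)\subseteq E_{nf}\cap E_n$, and invoke the hypothesis $\mathbb{P}[E_n]\ge 1-\tilde\epsilon$. The gap is exactly where you flag it, but your proposed fix is wrong. You plan to ``identify'' $\mathbb{P}[E_{nf}\mid E_n]$ with $f_T=\mathbb{P}[M]$ using the product structure $\tilde{\mathcal D}=\mathcal D\times\mathcal N$, arguing that $E_{nf}$ depends only on the coalition while the noisy comparison depends only on the $\alpha$'s. That decoupling fails: $E_n$ depends on \emph{both} the coalition $T$ and the noise levels, so $E_{nf}$ and $E_n$ are coupled through $T$ and no independence factorization of the kind you describe is available. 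Passing to the ``all agents prefer'' events $A_{nf},A_n$ does not help for the same reason --- $A_n$ still depends on $T$. Consequently $\mathbb{P}[E_{nf}\mid E_n]$ is \emph{not} equal to $f_T$ in general.

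The paper avoids this entirely with a one-line elementary inequality that you already have the ingredients for. After writing $\mathbb{P}[E_{nf}]\ge \mathbb{P}[E_{nf}\mid E_n]\,\mathbb{P}[E_n]\ge (1-\tilde\epsilon)\,\mathbb{P}[E_{nf}\mid E_n]$, it simply uses $\mathbb{P}(A\mid B)\ge \mathbb{P}(A\cap B)$ (valid since $\mathbb{P}(B)\le 1$) to drop back to the unconditional intersection, and then your own containment $M\subseteq E_{nf}\cap E_n$ gives $\mathbb{P}[E_{nf}\cap E_n]\ge \mathbb{P}[M]=f_T$. No independence, no product structure --- just $\mathbb{P}(A\mid B)\ge \mathbb{P}(A\cap B)$ followed by monotonicity under set inclusion. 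Replace your independence paragraph with that single inequality and the proof closes exactly as in the paper.
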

\begin{proof}
Consider the following probability
\begin{equation*}
\begin{aligned}
\mathbb{P}_{T\sim \tilde{\mathcal{D}}}[\cup_{i\in T}~ v_i(\tilde{\pi}(i)) \geq v_i(T) ] 
&\geq \mathbb{P}_{T\sim \mathcal{\widetilde{D}}}[\cup_{i\in T} v_i(\tilde{\pi}(i)) \geq v_i(T)  | \cup_{j\in T} \tilde{v}_j(\tilde{\pi}(j)) \geq \tilde{v}_j(T)] 
\\
&  ~~~~~\times~ \mathbb{P}_{T\sim \mathcal{\widetilde{D}}} [\cup_{j\in T} \tilde{v}_j(\tilde{\pi}(j)) \geq \tilde{v}_j(T)] 
\\
&\geq (1-\tilde{\epsilon}) \mathbb{P}_{T\sim \mathcal{\widetilde{D}}}[\cup_{i\in T}  v_i(\tilde{\pi}(i)) \geq v_i(T)  | \cup_{j\in T} \tilde{v}_j(\tilde{\pi}(j)) \geq \tilde{v}_j(T)] 
\\
&\geq (1-\tilde{\epsilon}) \mathbb{P}[(\cup_{i\in T} v_i(\tilde{\pi}(i)) \geq v_i(T))  \cap (\cup_{j\in T} \tilde{v}_j(\tilde{\pi}(j)) \geq \tilde{v}_j(T))]
\\
&  ~~~~~ (\because \mathbb{P}(A|B) \geq \mathbb{P}(A\cap  B))
\\
&= (1-\tilde{\epsilon}) 
\mathbb{P}[\cup_{j\in T} \cup_{i\in T} \{v_i(\tilde{\pi}(i)) \geq v_i(T)  \cap \tilde{v}_j(\tilde{\pi}(j))  \geq \tilde{v}_j(T)\}]
\\
&\geq (1-\tilde{\epsilon}) \mathbb{P}[\cap_{i\in T} \{v_i(\tilde{\pi}(i)) \geq v_i(T)  \cap \tilde{v}_i(\tilde{\pi}(i))  \geq \tilde{v}_i(T)\}]  
\\
&= (1-\tilde{\epsilon}) \mathbb{P}[M(\tilde{\pi}, T)] = (1-\tilde{\epsilon}) f_T(\textit{\textbf{p}},\boldsymbol{\alpha}) = 1-\epsilon.
\end{aligned}
\end{equation*}
This ends the proof.
\end{proof}
In Theorem \ref{thm: prob_same_partition_n_partial_info}, we have $(1-\tilde{\epsilon}) f_T(\textit{\textbf{p}},\boldsymbol{\alpha}) = 1-\epsilon$ for any $\tilde{\epsilon}>0$. This implies $\epsilon = 1- (1-\tilde{\epsilon}) f_T(\textit{\textbf{p}},\boldsymbol{\alpha}) = \tilde{\epsilon}$ if $f_T(\textit{\textbf{p}},\boldsymbol{\alpha}) = 1$. So, for an arbitrary $\epsilon>0$, we have arbitrary $\tilde{\epsilon}>0$ if $f_T(\textit{\textbf{p}},\boldsymbol{\alpha}) = 1$. However, it is not true even in the $l=2$ support noise model. For example, as we see in Sec. \ref{sec: N_agent_2_support_partial_info} of the Appendix that we have $f_T(p,\alpha)=1$ iff $p=0$ or $p=1$, i.e., when values of all the coalitions are either scaled by some scalar $\alpha > 1$, or they are retained. Therefore, we relax the requirement of $f_T(\textit{\textbf{p}},\boldsymbol{\alpha}) =1$, and ask for $f_T(\textit{\textbf{p}},\boldsymbol{\alpha}) =\zeta$ for user-given $\zeta$. In some situations, the $\zeta$ captures the satisfaction value of an external agent trying to predict the partition of a noise-free game without having its knowledge. That is, a higher $\zeta$ is preferred. In particular, if $\zeta=1$, we have $\epsilon = \tilde{\epsilon}$. 
\begin{theorem}
\label{thm: zeta_noise_robust}
If a partition $\tilde{\pi}$ is $\tilde{\epsilon}$-PAC stable for the noisy game $(N, \tilde{\textbf{\textit{v}}})$ and for $\epsilon = 1 - (1 - \tilde{\epsilon})\zeta$ it is $\epsilon$-PAC stable for the noise-free game $(N, \textbf{\textit{v}})$, then it is also $\zeta$ noise-robust core-stable. 
\end{theorem}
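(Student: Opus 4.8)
The plan is to prove this by directly checking the two conditions that constitute $\zeta$ noise-robust core-stability in Definition~\ref{def: zeta_noise_robustness}, for the specific choice $\epsilon = 1 - (1-\tilde{\epsilon})\zeta$. The key observation is that the two hypotheses of the theorem supply exactly these two ingredients, so the argument is a matter of unpacking the definition and translating one hypothesis into the prediction-probability form. First I would recall that Definition~\ref{def: zeta_noise_robustness} demands (a) that $\tilde{\pi}$ be $\tilde{\epsilon}$-PAC stable for the noisy game $(N, \tilde{\textbf{\textit{v}}})$, and (b) that the prediction probability of Equation~\eqref{eqn: predi_prob_with_core} be less than $\epsilon$ with $\epsilon = 1-(1-\tilde{\epsilon})\zeta$.

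Condition (a) is immediate: it is literally the first hypothesis, so nothing needs to be shown. The work, such as it is, lies in deriving condition (b) from the second hypothesis. The bridge is a complementation of the core-blocking event: by definition $T$ \emph{core blocks} $\tilde{\pi}$ in the noise-free game $(N, \textbf{\textit{v}})$ precisely when $v_i(T) > v_i(\tilde{\pi}(i))$ for every $i\in T$, i.e. on the event $\cap_{i\in T}\{v_i(T) > v_i(\tilde{\pi}(i))\}$, whose complement is $\cup_{i\in T}\{v_i(\tilde{\pi}(i)) \geq v_i(T)\}$. Hence the prediction probability equals $1 - \mathbb{P}_{T\sim\tilde{\mathcal{D}}}[\cup_{i\in T}\{v_i(\tilde{\pi}(i)) \geq v_i(T)\}]$.

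Now I would invoke the second hypothesis, that $\tilde{\pi}$ is $\epsilon$-PAC stable for the noise-free game, which in the notation of Theorem~\ref{thm: prob_same_partition_n_partial_info} means $\mathbb{P}_{T\sim\tilde{\mathcal{D}}}[\cup_{i\in T}\{v_i(\tilde{\pi}(i)) \geq v_i(T)\}] \geq 1-\epsilon$. Substituting this into the displayed identity gives that the prediction probability is at most $\epsilon = 1-(1-\tilde{\epsilon})\zeta$, which is exactly condition (b). With both (a) and (b) verified at this common value of $\epsilon$, Definition~\ref{def: zeta_noise_robustness} yields that $\tilde{\pi}$ is $\zeta$ noise-robust core-stable, completing the proof.

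I do not expect a genuine obstacle here: the statement is essentially a bookkeeping result that certifies the equivalence between $\epsilon$-PAC stability of the noise-free game and the prediction-probability form of robustness, once the value $\epsilon = 1-(1-\tilde{\epsilon})\zeta$ is fixed. The only mild subtlety is the strict-versus-weak inequality ($\leq \epsilon$ obtained from complementation versus the $< \epsilon$ phrasing of condition (b)), which is cosmetic and absorbed into the usual PAC convention. All the genuine analytic content -- namely \emph{why} $\epsilon$ takes this form through the agreement probability $f_T(\textbf{\textit{p}},\boldsymbol{\alpha})$ -- has already been discharged in Theorem~\ref{thm: prob_same_partition_n_partial_info}, so this theorem merely repackages that conclusion against the robustness definition.
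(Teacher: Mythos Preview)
Your proposal is correct and takes essentially the same approach as the paper: verify the two conditions of Definition~\ref{def: zeta_noise_robustness} directly, with condition~(a) being the first hypothesis and condition~(b) following from the second hypothesis via complementation of the core-blocking event. The only cosmetic difference is that the paper phrases condition~(b) by recalling Theorem~\ref{thm: prob_same_partition_n_partial_info} and setting $f_T(\textbf{\textit{p}},\boldsymbol{\alpha})=\zeta$, whereas you invoke the $\epsilon$-PAC stability hypothesis for the noise-free game directly; the content is identical.
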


\begin{proof}
Recall, from Theorem \ref{thm: prob_same_partition_n_partial_info}, we have $\mathbb{P}[T ~core~blocks~\tilde{\pi}~for~(N, \textbf{\textit{v}})] \leq \epsilon$. Here $\epsilon =  1- (1-\tilde{\epsilon})f_T(\textbf{\textit{p}}, \boldsymbol{\alpha})$. Setting $f_T(\textbf{\textit{p}}, \boldsymbol{\alpha}) = \zeta$, we have $\epsilon =  1- (1-\tilde{\epsilon})\zeta = 1 - \zeta + \tilde{\epsilon} \zeta$. So, for this $\epsilon$, the partition $\tilde{\pi}$ is $\zeta$ noise-robust core-stable from Definition \ref{def: zeta_noise_robustness}.
\end{proof}
The agreement probability $f_T(\textit{\textbf{p}},\boldsymbol{\alpha})$ being the same as user-given satisfaction value $\zeta$ identifies the noise regimes $I^{\star}(T, \zeta)$ for which at least the $\zeta$ fraction of preferences are preserved.  The following Theorem shows that the noise-regime for which a partition $\tilde{\pi}$ is core-stable in both noisy and noise-free games with a user-given satisfaction value $\zeta$ is indeed non-empty.

\begin{theorem}
\label{thm: noise_set_partial_info} 
Let $\tilde{\pi}$ be $\tilde{\epsilon}$-PAC stable partition of noisy game $(N, \tilde{\textbf{\textit{v}}})$ and
it is $\epsilon$-PAC stable for noise-free game $(N, \textbf{\textit{v}})$. Then, for a sample $\mathcal{S}_{t}=\{T_1, \dots, T_{m_{t}}\}$ drawn i.i.d. from $\mathcal{\widetilde{D}}$ we obtain a non-empty noise regime $I^{\star}(\mathcal{S}_t, \zeta) = \cap_{i=1}^{m_{t}} I^{\star}(T_i, \zeta)$ for which $\tilde{\pi}$ is $\zeta$ noise-robust core-stable partition. Moreover, $\tilde{\pi}$ is $\zeta$ noise-robust core-stable partition for the noise regime $I^{\star}(\zeta) = \cap_{T \subseteq N} I^{\star}(T, \zeta)$.
\end{theorem}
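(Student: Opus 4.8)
The plan is to assemble the result from the two preceding theorems, handling each sampled coalition separately and then intersecting the resulting noise regimes. First I would pin down the building block: for a single coalition $T$, the set $I^{\star}(T,\zeta)$ is the superlevel set $I^{\star}(T,\zeta)=\{(\textbf{\textit{p}},\boldsymbol{\alpha}) : f_T(\textbf{\textit{p}},\boldsymbol{\alpha})\geq \zeta\}$ of the agreement-probability function, i.e. exactly those noise parameters for which at least a $\zeta$-fraction of the relevant preferences is preserved across the noisy and noise-free games. With this reading, $I^{\star}(\mathcal{S}_t,\zeta)=\cap_{i=1}^{m_t}I^{\star}(T_i,\zeta)$ is the regime on which the agreement probability is simultaneously at least $\zeta$ for every sampled coalition.

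The heart of the argument is non-emptiness, and this is where I expect the main obstacle to lie: because the superlevel sets $I^{\star}(T,\zeta)$ are in general non-convex, a generic intersection of many of them could collapse to $\emptyset$, so no convexity shortcut is available and I must instead exhibit a single point common to all of them. The natural universal point is the degenerate noise configuration concentrated on one noise level (the ``no extra noise'' endpoint $\textbf{\textit{p}}=0$, or the full common-scaling endpoint $\textbf{\textit{p}}=1$). There every coalition is scaled by the same scalar, so the noisy and noise-free preference orders coincide and the agreement event $M(\tilde{\pi},T)$ occurs with probability one; hence $f_T=1\geq\zeta$ for every $T\subseteq N$ and every $\zeta\in(0,1]$ — this is precisely the endpoint fact recorded after Theorem \ref{thm: prob_same_partition_n_partial_info}. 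Verifying $f_T=1$ here relies on the common-scaling structure of the noise model, which forces the two orders to agree. Since this point lies in each $I^{\star}(T_i,\zeta)$, it lies in $\cap_{i=1}^{m_t}I^{\star}(T_i,\zeta)$, so $I^{\star}(\mathcal{S}_t,\zeta)$ is non-empty.

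Next I would establish robustness on the intersection. For any $(\textbf{\textit{p}},\boldsymbol{\alpha})\in I^{\star}(\mathcal{S}_t,\zeta)$ and each index $i$ we have $f_{T_i}(\textbf{\textit{p}},\boldsymbol{\alpha})\geq\zeta$, so applying Theorem \ref{thm: prob_same_partition_n_partial_info} to the coalition $T_i$ gives $\mathbb{P}[T_i~\text{core blocks}~\tilde{\pi}~\text{in}~(N,\textbf{\textit{v}})]\leq 1-(1-\tilde{\epsilon})f_{T_i}\leq 1-(1-\tilde{\epsilon})\zeta=\epsilon$. Thus $\tilde{\pi}$ is $\epsilon$-PAC stable for the noise-free game against every sampled coalition; combined with its $\tilde{\epsilon}$-PAC stability in the noisy game (the hypothesis), Theorem \ref{thm: zeta_noise_robust} together with Definition \ref{def: zeta_noise_robustness} yields that $\tilde{\pi}$ is $\zeta$ noise-robust core-stable throughout $I^{\star}(\mathcal{S}_t,\zeta)$.

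Finally, the ``moreover'' clause follows by the same mechanism pushed to the full coalition family. Since $N$ is finite, $\{T : T\subseteq N\}$ is finite and $I^{\star}(\zeta)=\cap_{T\subseteq N}I^{\star}(T,\zeta)\subseteq I^{\star}(\mathcal{S}_t,\zeta)$ still contains the universal degenerate point, so it is non-empty; and the per-coalition bound above now holds for every $T\subseteq N$, delivering $\zeta$ noise-robust core-stability of $\tilde{\pi}$ uniformly over all coalitions on the regime $I^{\star}(\zeta)$. The only delicate point throughout remains the non-emptiness step, secured constructively rather than by any structural property of the level sets.
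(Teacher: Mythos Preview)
Your proposal is correct and follows essentially the same route as the paper: exhibit the degenerate noise point where all coalitions receive the same scaling (the paper uses $\alpha(S)=1$ for all $S$, i.e., your $\textbf{\textit{p}}=0$ endpoint) to witness that every $I^{\star}(T,\zeta)$ contains this common point, hence the intersection is non-empty, and then invoke Theorem~\ref{thm: prob_same_partition_n_partial_info} together with Definition~\ref{def: zeta_noise_robustness}/Theorem~\ref{thm: zeta_noise_robust} for the robustness conclusion. Your exposition is a bit more explicit about the superlevel-set reading of $I^{\star}(T,\zeta)$ and about applying the per-coalition bound, but the argument is the same.
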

\begin{proof}
For any coalition $T$, we first note that $I^{\star}(T,\zeta) \neq \emptyset$, because $\alpha(\tilde{\pi}(i)) = 1, ~\forall \tilde{\pi}(i) \in \mathcal{R}(T); ~\alpha(T) =1$ is always an element of $M(\tilde{\pi}, T)$. 
So, for $\mathcal{S}_t = \{T_1, \dots, T_{m_{t}}\}$ we have non-empty noise regimes  $I^{\star}(T_1, \zeta), \cdots,  I^{\star}(T_{\tilde{m}_t}, \zeta)$. Also,  $\alpha(S) = 1, ~\forall~ S\subseteq N$ is a common element of each $I^{\star}(T, \zeta),~\forall ~T\in \mathcal{S}_t$. Therefore, $I^{\star}(\mathcal{S}_t, \zeta) = \cap_{i=1}^{m_{t}} I^{\star}(T_i, \zeta) \neq \emptyset$, i.e., is non-empty. 
Hence, partition $\tilde{\pi}$ is $\zeta$ noise robust on the sample $\mathcal{S}_t$ with noise regime $I^*(\mathcal{S}_t, \zeta)$ in accordance to Theorem \ref{thm: prob_same_partition_n_partial_info} and Definition \ref{def: zeta_noise_robustness}.  Moreover, $I^{\star}(\zeta) \neq \emptyset$ because of the same reason as mentioned above. 
The $\zeta$ noise-robustness follows from Theorem \ref{thm: prob_same_partition_n_partial_info} and \ref{thm: zeta_noise_robust}.
\end{proof}
In the next Section, we provide the relation between $m$, and $\tilde{m}$, i.e., the number of samples used to get $\epsilon$ and $\tilde{\epsilon}$-PAC stable partition $\tilde{\pi}$ in noise-free and noisy game, respectively for top-responsive hedonic games \citep{alcalde2004researching} and other hedonic games.

\subsection{Sample size for top-responsive and other games}
\label{subsec: sample_complexity}
In a top-responsive game, the value of each agent in a given coalition depends on the most preferred sub-coalition. Formally, the top-responsive games are described via choice sets $Ch(i, S)$, defined as $Ch(i,S) \coloneqq \{X \subseteq S : \forall ~Y\subseteq S, i\in Y: X \succeq_i Y\}$. The game satisfies the top-responsiveness if (a) $\forall~ i\in N$, and $S\in \mathcal{C}_i, |Ch(i,S)| = 1$, and (b) $\forall~ i\in N$, and $S, T\in \mathcal{C}_i$ if $Ch(i,S) \succ_i Ch(i,T)$ then $S\succ_i T$ or if $Ch(i,S) = Ch(i,T)$, and $S \subset T$, then $S\succ_i T$.
\begin{theorem}
\label{thm: m_tilde_m_relation}
For a top-responsive game, let $\tilde{m}$ be the number of samples required to get $\tilde{\epsilon}$-PAC stable partition in noisy game $(N, \tilde{\textbf{\textit{v}}})$, and $m$ be the samples required for $\tilde{\pi}$ to be $\epsilon$-PAC partition in \textit{unknown} noise-free game $(N, \textbf{\textit{v}})$. Then $m \zeta \leq 
\tilde{m} \leq m + (2n^3 + 2n^4)\left(  \frac{(1-\tilde{\epsilon}) + \tilde{\epsilon} \zeta}{\tilde{\epsilon} (1+ \tilde{\epsilon} \zeta)} \log \frac{2n^3}{\delta} \right).$
\end{theorem}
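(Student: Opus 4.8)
The plan is to reduce everything to a single sample-complexity formula for top-responsive games, evaluated at two different error levels, and then to substitute the error relation already established in Theorems~\ref{thm: prob_same_partition_n_partial_info} and~\ref{thm: zeta_noise_robust}. First I would invoke the PAC-stabilization guarantee for top-responsive games (imported from \cite{sliwinski2017learning,alcalde2004researching}), which I expect to take the form $m(\epsilon,\delta) = \frac{2n^3+2n^4}{\epsilon}\log\frac{2n^3}{\delta}$: a top-responsive game is identified through its choice sets $Ch(i,S)$, whose number over all agents and coalitions is polynomial in $n$, so the realizable PAC bound $m \gtrsim \frac{1}{\epsilon}\bigl(\log|\mathcal{H}| + \log\frac{1}{\delta}\bigr)$ produces exactly these polynomial constants. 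Applying the formula in the noisy game gives $\tilde m = m(\tilde{\epsilon},\delta)$ and in the noise-free game gives $m = m(\epsilon,\delta)$, where the two error levels are linked by $\epsilon = 1-(1-\tilde{\epsilon})\zeta = 1-\zeta+\tilde{\epsilon}\zeta$, the relation obtained on setting $f_T(\textbf{\textit{p}},\boldsymbol{\alpha})=\zeta$.

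For the lower bound, since $m$ and $\tilde m$ carry the \emph{same} polynomial coefficient $(2n^3+2n^4)$ and the same $\log\frac{2n^3}{\delta}$ factor, their ratio collapses to $\tilde m/m = \epsilon/\tilde{\epsilon}$. Hence $m\zeta \le \tilde m$ is equivalent to $\zeta\tilde{\epsilon} \le \epsilon$, and substituting $\epsilon = 1-\zeta+\tilde{\epsilon}\zeta$ turns this into $\zeta\tilde{\epsilon} \le 1-\zeta+\tilde{\epsilon}\zeta$, i.e.\ $\zeta \le 1$, which holds by the hypothesis $\zeta\in(0,1]$.

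For the upper bound, I would compute the gap
\begin{equation*}
\tilde m - m = (2n^3+2n^4)\log\frac{2n^3}{\delta}\left(\frac{1}{\tilde{\epsilon}}-\frac{1}{\epsilon}\right) = (2n^3+2n^4)\log\frac{2n^3}{\delta}\cdot\frac{(1-\tilde{\epsilon})(1-\zeta)}{\tilde{\epsilon}(1-\zeta+\tilde{\epsilon}\zeta)},
\end{equation*}
where I used $\epsilon-\tilde{\epsilon}=(1-\tilde{\epsilon})(1-\zeta)$ to simplify $\frac{1}{\tilde{\epsilon}}-\frac{1}{\epsilon}$. It then remains to bound the last fraction by the target factor, i.e.\ to show $\frac{(1-\tilde{\epsilon})(1-\zeta)}{1-\zeta+\tilde{\epsilon}\zeta} \le \frac{(1-\tilde{\epsilon})+\tilde{\epsilon}\zeta}{1+\tilde{\epsilon}\zeta}$. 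I would verify this by cross-multiplication: expanding $(1-\zeta)(1-\tilde{\epsilon})(1+\tilde{\epsilon}\zeta)$ and $((1-\tilde{\epsilon})+\tilde{\epsilon}\zeta)(1-\zeta+\tilde{\epsilon}\zeta)$, the difference (second cross product minus first) equals exactly $\tilde{\epsilon}\zeta \ge 0$. Feeding this into the gap gives $\tilde m \le m + (2n^3+2n^4)\frac{(1-\tilde{\epsilon})+\tilde{\epsilon}\zeta}{\tilde{\epsilon}(1+\tilde{\epsilon}\zeta)}\log\frac{2n^3}{\delta}$, the claimed inequality.

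The main obstacle is not the algebra, which is elementary once $\epsilon = 1-\zeta+\tilde{\epsilon}\zeta$ is in hand, but rather pinning down the precise polynomial constants $2n^3+2n^4$ and $2n^3$ in the top-responsive sample-complexity formula; this is the genuinely learning-theoretic ingredient, requiring a count of the effective hypothesis complexity of top-responsive games through their choice-set representation and a careful argument that the \emph{identical} formula governs both the noisy and the noise-free games, so that the coefficients cancel in the ratio and in the gap computation. A secondary point worth recording is that $\epsilon \ge \tilde{\epsilon}$, since $\epsilon-\tilde{\epsilon}=(1-\tilde{\epsilon})(1-\zeta)\ge 0$; this guarantees $\tilde m - m \ge 0$, so the additive upper bound is consistent and the noise-free error tolerance is indeed the looser one.
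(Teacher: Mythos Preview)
Your proposal is correct and follows essentially the same approach as the paper: both invoke the top-responsive sample-complexity formula $m(\epsilon,\delta)=(2n^3+2n^4)\tfrac{1}{\epsilon}\log\tfrac{2n^3}{\delta}$ at the two error levels and plug in the relation $\epsilon = 1-(1-\tilde{\epsilon})\zeta$. The only cosmetic difference is in the upper bound: the paper uses the cruder estimate $\epsilon = (1-\zeta)+\zeta\tilde{\epsilon}\le 1+\zeta\tilde{\epsilon}$ and then the exact identity $\tfrac{1}{1+\zeta\tilde{\epsilon}}=\tfrac{1}{\tilde{\epsilon}}-\tfrac{(1-\tilde{\epsilon})+\tilde{\epsilon}\zeta}{\tilde{\epsilon}(1+\tilde{\epsilon}\zeta)}$, whereas you compute the exact gap $\tilde m-m$ first and then bound it by cross-multiplication---your route is slightly sharper but involves a bit more algebra.
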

\begin{proof}
Recall, to get $\tilde{\epsilon}$-PAC stable partition in the noisy top-responsive games authors in \citep{sliwinski2017learning} provide $\tilde{m}$ for top-responsive games as $\tilde{m}  = (2n^3 + 2n^4)\left(\frac{1}{\tilde{\epsilon}} \log \frac{2n^3}{\delta} \right)$. 
However, from  Theorem \ref{thm: prob_same_partition_n_partial_info} we have $(1-\tilde{\epsilon}) \zeta = 1-\epsilon$, this implies $\epsilon = (1-\zeta) + \zeta \tilde{\epsilon} \geq \zeta \tilde{\epsilon}$. Thus, for $\epsilon$-PAC stability of partition $\tilde{\pi}$ in a top-responsive noise-free game  the number of samples $m$ are given by $m  = (2n^3 + 2n^4)\left(\frac{1}{\epsilon} \log \frac{2n^3}{\delta} \right) \leq(2n^3 + 2n^4)\left(\frac{1}{ \zeta \tilde{\epsilon}} \log \frac{2n^3}{\delta} \right) = \frac{\tilde{m}}{\zeta}$. This gives an upper bound. For a lower bound, again consider $(1-\tilde{\epsilon}) \zeta = 1-\epsilon$, therefore we have $\epsilon = (1-\zeta) + \zeta \tilde{\epsilon} \leq 1 + \zeta \tilde{\epsilon}$. That is $\frac{1}{\epsilon} \geq \frac{1}{1 + \zeta \tilde{\epsilon}}$. Thus, we have
\begin{align*}
m  &= (2n^3 + 2n^4)\left(\frac{1}{\epsilon} \log \frac{2n^3}{\delta} \right) \geq (2n^3 + 2n^4)\left( \frac{1}{1 + \zeta \tilde{\epsilon}} \log \frac{2n^3}{\delta} \right)
\\
&= (2n^3 + 2n^4)\left( \left\lbrace \frac{1}{\tilde{\epsilon}} - \frac{(1-\tilde{\epsilon}) + \tilde{\epsilon} \zeta}{\tilde{\epsilon} (1+ \tilde{\epsilon} \zeta)} \right\rbrace \log \frac{2n^3}{\delta} \right)
= \tilde{m} - (2n^3 + 2n^4)\left(  \frac{(1-\tilde{\epsilon}) + \tilde{\epsilon} \zeta}{\tilde{\epsilon} (1+ \tilde{\epsilon} \zeta)} \log \frac{2n^3}{\delta} \right).
\end{align*}
From the lower and upper bounds, we have the result.
\end{proof}
The above Theorem gives a bound on the extra samples required to get $\epsilon$-PAC stable partition of the unknown noise-free game given $\tilde{\epsilon}$-PAC stable partition of the noisy game. Again the number of samples to get $\epsilon = (1 - (1-\tilde{\epsilon})\zeta)$-PAC stable outcome in an \textit{unknown} noise-free game are bounded by the number of samples $\tilde{m}$,
the satisfaction value $\zeta$, and the confidence parameter $\delta$. In particular, the number of samples $m$ are polynomial in $n, \frac{1}{\epsilon}, \log \left(\frac{1}{\delta} \right)$, but its upper bound is non-linear in $\zeta$.

We next relate the number of samples and errors in noisy and \textit{unknown} noise-free games. Let $\tilde{\pi}$ be $\tilde{\epsilon}$-PAC stable partition of noisy game when $\tilde{m}$ samples are used. Suppose, we get $(\tilde{\epsilon} - \tilde{\epsilon}^{\prime})$-PAC partition of the noisy game on increasing the noisy samples to $\tilde{m} + \tilde{m}^{\prime}$. Let $\tilde{\pi}$ be $(\tilde{\epsilon} - \tilde{\epsilon}^{\prime})$-PAC stable for the noisy game that uses $\tilde{m} + \tilde{m}^{\prime}$ samples. Let $\epsilon_{new}$ be the error incurred to get $\tilde{\pi}$ partition with $\tilde{m} + \tilde{m}^{\prime}$ samples in a given noise-free game, then $\epsilon_{new} = 1- (1-(\tilde{\epsilon} - \tilde{\epsilon}^{\prime})) f_{T}(\textit{\textbf{p}},\boldsymbol{\alpha}) = 1- (1-\tilde{\epsilon}) f_{T}(\textit{\textbf{p}},\boldsymbol{\alpha}) - \tilde{\epsilon}^{\prime} f_{T}(\textit{\textbf{p}},\boldsymbol{\alpha}) = \epsilon - \tilde{\epsilon}^{\prime} f_{T}(\textit{\textbf{p}},\boldsymbol{\alpha}) \leq  \epsilon $.
\begin{theorem}
\label{thm: epsilon_new}
For an unknown noise-free game, let $\tilde{\pi}$ be $\epsilon_{new}$-PAC stable partition with $\tilde{m}  + \tilde{m}^{\prime}$ noisy samples, and it is $\epsilon$-PAC stable partition with $\tilde{m}$ noisy samples, then $\epsilon_{new} \leq \epsilon$.
\end{theorem}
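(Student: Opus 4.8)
The plan is to reduce everything to a single application of Theorem~\ref{thm: prob_same_partition_n_partial_info}, which pins the noise-free PAC error as an affine function of the noisy-game PAC error. That theorem gives $1-\epsilon = (1-\tilde{\epsilon})\,f_T(\textit{\textbf{p}},\boldsymbol{\alpha})$ whenever $\tilde{\pi}$ is $\tilde{\epsilon}$-PAC stable for the noisy game; since the agreement probability $f_T(\textit{\textbf{p}},\boldsymbol{\alpha})=\mathbb{P}[M(\tilde{\pi},T)]$ depends only on the fixed partition $\tilde{\pi}$ and the noise distribution $(\textit{\textbf{p}},\boldsymbol{\alpha})$ --- and not on how many samples were drawn --- the very same slope $f_T(\textit{\textbf{p}},\boldsymbol{\alpha})$ governs the passage from noisy to noise-free error both before and after we enlarge the sample.

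First I would record the two instances of this identity. With $\tilde{m}$ samples the noisy error is $\tilde{\epsilon}$ and $\epsilon = 1-(1-\tilde{\epsilon})f_T(\textit{\textbf{p}},\boldsymbol{\alpha})$. Enlarging the noisy sample to $\tilde{m}+\tilde{m}^{\prime}$ can only weakly decrease the attainable noisy PAC error, so it improves to $\tilde{\epsilon}-\tilde{\epsilon}^{\prime}$ for some $\tilde{\epsilon}^{\prime}\geq 0$; applying Theorem~\ref{thm: prob_same_partition_n_partial_info} with this improved error yields $\epsilon_{new}=1-(1-(\tilde{\epsilon}-\tilde{\epsilon}^{\prime}))f_T(\textit{\textbf{p}},\boldsymbol{\alpha})$. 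Subtracting the two expressions, the $1$ and the $(1-\tilde{\epsilon})f_T(\textit{\textbf{p}},\boldsymbol{\alpha})$ terms cancel and one is left with $\epsilon-\epsilon_{new}=\tilde{\epsilon}^{\prime}\,f_T(\textit{\textbf{p}},\boldsymbol{\alpha})$, which is exactly the cancellation already displayed in the paragraph preceding the statement.

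It then remains only to argue this difference is nonnegative. Both factors are: $f_T(\textit{\textbf{p}},\boldsymbol{\alpha})\geq 0$ as a probability, and $\tilde{\epsilon}^{\prime}\geq 0$ because adding samples never increases the PAC error parameter. Hence $\epsilon_{new}=\epsilon-\tilde{\epsilon}^{\prime}f_T(\textit{\textbf{p}},\boldsymbol{\alpha})\leq\epsilon$, as claimed. The one genuinely load-bearing step --- and the only place where anything must really be checked --- is the monotonicity claim $\tilde{\epsilon}^{\prime}\geq 0$: it rests on the sample-complexity bound being decreasing in the sample size (equivalently, that the PAC guarantee achievable with a superset of the samples is at least as strong), together with the invariance of $f_T(\textit{\textbf{p}},\boldsymbol{\alpha})$ under changes in sample size. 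Everything else is the elementary algebra already carried out in-text, so I expect the proof to be short once these two monotonicity/invariance facts are made explicit.
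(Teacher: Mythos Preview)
Your proposal is correct and follows essentially the same approach as the paper: apply the identity $\epsilon = 1-(1-\tilde{\epsilon})f_T(\textit{\textbf{p}},\boldsymbol{\alpha})$ from Theorem~\ref{thm: prob_same_partition_n_partial_info} twice (once at $\tilde{\epsilon}$ and once at $\tilde{\epsilon}-\tilde{\epsilon}^{\prime}$), subtract, and observe that $\tilde{\epsilon}^{\prime}f_T(\textit{\textbf{p}},\boldsymbol{\alpha})\geq 0$. The paper's argument is exactly the one-line computation in the paragraph preceding the theorem, and your write-up merely makes explicit the two facts (invariance of $f_T$ in the sample size, and $\tilde{\epsilon}^{\prime}\geq 0$) that the paper leaves implicit.
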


\begin{remark}
The results of Theorem \ref{thm: m_tilde_m_relation} and Theorem \ref{thm: epsilon_new} can be generalized to any class of hedonic games by suitably obtaining the sample complexity of that class. This is because the number of samples required in noise-free game is function of $n, \frac{1}{\epsilon}, \log \left( \frac{1}{\delta} \right)$.
\end{remark}
To get some more insights we next identify the agreement probability $f_T(\textbf{\textit{p}}, 
\boldsymbol{\alpha})$ defined for partial information noise model with $l\geq 2$ noise support in the following subsection. We use the base case of $l=2$ noise support case in the proofs of results in the next Section.
these are deferred to Sec. \ref{sec: N_agent_2_support_partial_info} of the Appendix.

\subsection{$n$ agent $l$ support partial information noisy game}
\label{subsec: n_agent_l_support_partial_info} 
We now consider the $l\geq 2$ support case, i.e.,  $\mathcal{N}_{sp} =  \{\alpha_1,\alpha_2,\ldots, \alpha_l\}$ with respective probabilities $p_1,p_2,\ldots, p_l$, and $\sum_{j\in[l]}p_j=1 $. Here $p_j = \mathbb{P}(\alpha(S) = \alpha_j)$ and  $\alpha_j >0, ~\forall~j\in [l]$. Moreover, without loss of generality we assume that $\alpha_i < \alpha_j, ~\forall~ i< j$.  
For above noise support the following Theorem give expression of $f_T(\textit{\textbf{p}},\boldsymbol{\alpha})$. To this end, for any coalition $T$ and for all $r,s$ such that $\alpha_r > \alpha_s$, define $\mathcal{I}(\alpha_r,\alpha_s, T) = \left\lbrace \tilde{\pi}(i) \in \mathcal{R}(T) ~\bigg|~ \frac{\tilde{v}_i(\tilde{\pi}(i))}{\tilde{v}_i(T)} \geq \frac{\alpha_r}{\alpha_s} \right \rbrace$.
\begin{theorem}
\label{thm: n_agent_l_support_f_T_p}
Let $\tilde{\pi}$ be a $\tilde{\epsilon}$-PAC stable outcome of the noisy game $(N, \tilde{\textbf{\textit{v}}})$ and let $\tilde{\pi}$ be a $\epsilon$-PAC stable outcome of noise-free game $(N, \tilde{\textbf{\textit{v}}})$, where $\epsilon$ is identified as in Theorem \ref{thm: prob_same_partition_n_partial_info}. Then for noise support $\mathcal{N}_{sp} = \{\alpha_1, \alpha_2, \dots, \alpha_l\}$, the $f_T(\textbf{p}, \boldsymbol{\alpha})$ is given by: 
\begin{equation*}
f_T(\textit{\textbf{p}},\boldsymbol{\alpha})=\begin{cases} 1, \hfill if~ \tilde{\pi}(i)= T,
~\forall ~i\in T, 
\\
\sum_{r,s\in [l]:\alpha_r>\alpha_s} p_s^{|\mathcal{R}(T)| - |\mathcal{I}(\alpha_r,\alpha_s, T)| + 1}
\times \{(p_r + p_s)^{|\mathcal{I}(\alpha_r,\alpha_s,T)|} - p_s^{|\mathcal{I}(\alpha_r,\alpha_s, T)|}\}
\\
+\sum_{a=1}^{l} p_a \left(\sum_{b=1}^{a}p_b\right)^{|\mathcal{R}(T)|},  \hfill otherwise.
\end{cases}
\end{equation*}
\end{theorem}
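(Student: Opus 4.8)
The plan is to reduce the agreement probability to a product of independent per-coalition events and then reorganise the resulting sum into the stated closed form. First I would dispose of the trivial case: if $\tilde{\pi}(i)=T$ for every $i\in T$, then $T$ itself is the block of $\tilde{\pi}$ containing all members of $T$, so $\tilde{\pi}(i)$ and $T$ carry the same noise level and identical valuations; both the noise-free inequality $v_i(\tilde{\pi}(i))\geq v_i(T)$ and the noisy inequality $\tilde{v}_i(\tilde{\pi}(i))\geq \tilde{v}_i(T)$ hold with equality for every agent, whence $f_T(\textbf{\textit{p}},\boldsymbol{\alpha})=1$. For the main case I would condition on the noise $\alpha(T)$ of the candidate blocking coalition. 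Since the noise levels are drawn i.i.d.\ across the distinct coalitions of $\mathcal{R}(T)$ and independently of $\alpha(T)$, the event $M(\tilde{\pi},T)$ factorises once $\alpha(T)$ is fixed: dividing out the common scaling lets me rewrite the noise-free requirement as $\tilde{v}_i(\tilde{\pi}(i))/\tilde{v}_i(T)\geq \alpha(\tilde{\pi}(i))/\alpha(T)$, so that agreement over $\mathcal{R}(T)$ becomes a conjunction of per-coalition constraints, each depending only on that coalition's own noise draw.

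Next I would analyse a single coalition $C=\tilde{\pi}(i)\in\mathcal{R}(T)$ conditioned on $\alpha(T)=\alpha_s$. Inside $M$ the noisy preference forces the observed ratio $\tilde{v}_i(\tilde{\pi}(i))/\tilde{v}_i(T)\geq 1$, so whenever $C$'s own level satisfies $\alpha(C)\leq\alpha_s$ the inequality $\tilde{v}_i(\tilde{\pi}(i))/\tilde{v}_i(T)\geq \alpha(C)/\alpha_s$ is automatic and $C$ agrees for free. When instead $\alpha(C)=\alpha_r>\alpha_s$, agreement holds precisely when $\tilde{v}_i(\tilde{\pi}(i))/\tilde{v}_i(T)\geq \alpha_r/\alpha_s$, i.e.\ exactly when $C\in\mathcal{I}(\alpha_r,\alpha_s,T)$. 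This is where the sets $\mathcal{I}(\alpha_r,\alpha_s,T)$ enter: they record which coalitions tolerate a strictly larger level than $T$ without breaking agreement. The contribution in which every coalition of $\mathcal{R}(T)$ draws a level at most $\alpha(T)$ then gives, after summing over $\alpha(T)=\alpha_a$, the term $\sum_{a=1}^{l}p_a\left(\sum_{b=1}^{a}p_b\right)^{|\mathcal{R}(T)|}$.

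For the remaining mass I would enumerate the configurations in which at least one coalition overshoots the level of $T$, indexing them by the pair $(\alpha_r,\alpha_s)$ with $\alpha_r>\alpha_s$ that governs the overshoot: the coalitions in $\mathcal{I}(\alpha_r,\alpha_s,T)$ may take levels in $\{\alpha_s,\alpha_r\}$ with at least one actually reaching $\alpha_r$, contributing $(p_r+p_s)^{|\mathcal{I}(\alpha_r,\alpha_s,T)|}-p_s^{|\mathcal{I}(\alpha_r,\alpha_s,T)|}$, while the remaining $|\mathcal{R}(T)|-|\mathcal{I}(\alpha_r,\alpha_s,T)|$ coalitions together with the single factor from $\alpha(T)=\alpha_s$ supply $p_s^{|\mathcal{R}(T)|-|\mathcal{I}(\alpha_r,\alpha_s,T)|+1}$ (the ``$+1$'' being $T$ itself). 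Summing over admissible $(\alpha_r,\alpha_s)$ yields the first double sum, and adding the two pieces gives the claimed expression; I would sanity-check the whole reduction against the base case $l=2$ of Sec.~\ref{sec: N_agent_2_support_partial_info}.

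The main obstacle is exactly this last reorganisation: one must show that the pieces indexed by $(\alpha_r,\alpha_s)$, together with the ``all levels below $\alpha(T)$'' piece, are mutually disjoint and jointly exhaust the agreement event, so that the per-coalition product $\prod_{C\in\mathcal{R}(T)}\mathbb{P}[\,C\text{ agrees}\mid \alpha(T)=\alpha_s\,]$ collapses precisely to the stated double sum. This bookkeeping is delicate because each coalition outside $\mathcal{I}(\alpha_r,\alpha_s,T)$ may in principle realise any level up to $\alpha_s$, and the nested monotonicity $\mathcal{I}(\alpha_{r'},\alpha_s,T)\subseteq\mathcal{I}(\alpha_r,\alpha_s,T)$ for $r'\geq r$ must be tracked to guarantee that no configuration is charged to two different pairs. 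Two further subtleties I would address explicitly are: when several agents of $T$ share a block of $\tilde{\pi}$, the binding ratio for that block is the smallest of the $\tilde{v}_i(\tilde{\pi}(i))/\tilde{v}_i(T)$, so $\mathcal{I}(\alpha_r,\alpha_s,T)$ must be read through the most stringent agent; and the automatic-agreement step relies on the noisy preference $\tilde{v}_i(\tilde{\pi}(i))\geq\tilde{v}_i(T)$ being in force within $M(\tilde{\pi},T)$, which is what makes levels $\alpha(C)\leq\alpha(T)$ harmless.
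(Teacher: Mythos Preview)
Your approach is genuinely different from the paper's. The paper proves the formula by induction on the support size $l$: it takes Lemma~\ref{lemma: f_t(p,alpha)} (the $l=2$ case) as the base and, for the step $k\to k+1$, argues that the new level $\alpha_{k+1}$ contributes exactly the additional terms involving index $k+1$ in each of the two sums. You instead condition on $\alpha(T)$ and factorise the agreement event across the coalitions of $\mathcal{R}(T)$, which makes the interpretation of each summand transparent---the second sum is ``no coalition overshoots $\alpha(T)$'', the first is ``overshoot governed by a fixed ratio $\alpha_r/\alpha_s$''. The inductive route buys a mechanical reduction to $l=2$ but hides why the formula has this shape; your route is more conceptual but forces you to verify a global disjoint-and-exhaustive decomposition.

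The obstacle you flag at the end is a genuine gap, not merely delicate bookkeeping. Indexing overshoot configurations by a \emph{single} pair $(\alpha_r,\alpha_s)$ forces every coalition in $\mathcal{R}(T)$ to take its level from $\{\alpha_s,\alpha_r\}$; this omits agreement configurations in which different coalitions overshoot $\alpha(T)=\alpha_s$ to \emph{distinct} higher levels, or in which a non-overshooting coalition sits at some $\alpha_{s'}<\alpha_s$. Concretely, take $l=3$, $\mathcal{R}(T)=\{C_1,C_2\}$, with ratios so that $\mathcal{I}(\alpha_2,\alpha_1,T)=\{C_1,C_2\}$ but $\mathcal{I}(\alpha_3,\alpha_1,T)=\{C_2\}$: the realisation $\alpha(T)=\alpha_1$, $\alpha(C_1)=\alpha_2$, $\alpha(C_2)=\alpha_3$ lies in $M(\tilde{\pi},T)$ yet is charged to no $(r,s)$ term and is not in the ``all levels $\le\alpha(T)$'' piece. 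The nested monotonicity you invoke does not allow you to reassign such mixed realisations to a single pair without double counting. To close your argument you would need either a finer inclusion--exclusion over the per-coalition maximal admissible level, or to be content with the stated expression as a lower bound on $\mathbb{P}[M(\tilde{\pi},T)]$, which still suffices for the downstream use in Theorem~\ref{thm: prob_same_partition_n_partial_info}. The paper's inductive step does not resolve this point explicitly either.
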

The proof uses the principle of Mathematical induction on noise support $l\geq 2$ with base case of $l=2$ support (Lemma \ref{lemma: f_t(p,alpha)} of the SM). The detailed proof is available in Appendix \ref{proof: f_t_p_alpha_n_player_l_support}.
\begin{remark}
If we allow  $f_T(\textbf{p},\boldsymbol{\alpha})= \zeta$ for all coalitions $T\subseteq N$ for some user-given satisfaction value $\zeta$, we have noise set
$I^{\star}(\zeta)$ in accordance to Theorem \ref{thm: noise_set_partial_info}. This noise set corresponds to the superlevel sets of the prediction probability function.
For this super level set the partition $\tilde{\pi}$ is $\zeta$ noise-robust core-stable. Later, Sec. \ref{sec: 2 agent_full_info} shows that these superlevel sets are non-convex by explicitly deriving the prediction probability.
\end{remark}

\section{Partial information noisy game when $\tilde{\pi}$ is not $\tilde{\epsilon}$-PAC stable partition } 
\label{sec: Non-existence_parital_info}
So far we have assumed that $\tilde{\pi}$ is $\tilde{\epsilon}$-PAC stable partition of the noisy game $(N, \tilde{\textbf{\textit{v}}})$; however, that is not always the case. For example $\tilde{\pi} =  \{\{1\},\{23\}\}$ is not core stable for the game in \eqref{eqn: noisy_game_motivation}. In this section, we consider the other case where an estimated partition $\tilde{\pi}$ is not $\tilde{\epsilon}$-PAC stable for the noisy game $(N, \tilde{\textbf{\textit{v}}})$. Note that $\tilde{\pi}$ not being $\tilde{\epsilon}$-PAC stable doesn't mean that the noisy game $(N, \tilde{\textbf{\textit{v}}})$ has no stable partition.  Given a sample $\mathcal{\tilde{S}}$, we say  $\tilde{\pi}$ is not $\tilde{\epsilon}$-PAC stable partition of noisy game $(N, \tilde{\textbf{\textit{v}}})$ if there is a coalition $T$ that core blocks it with probability at least $1-\tilde{\epsilon}$. Formally, $\forall ~\tilde{\epsilon}>0, ~\exists~T\sim \mathcal{\widetilde{D}}$, such that
\begin{equation}
\label{eqn: no_core}
\mathbb{P} [\cap_{i\in T}~~\tilde{v}_i(T) > \tilde{v}_i(\tilde{\pi}(i))] \geq 1-\tilde{\epsilon}.
\end{equation} 
Our interest is in finding the prediction probability (Equation (\ref{eqn: predi_prob_with_core})) that a noise-free game does not have $\tilde{\pi}$ as $\epsilon$-PAC stable outcome ($\epsilon$ to be identified in terms of $\tilde{\epsilon}$) when the noisy game does not have  $\tilde{\pi}$ as $\tilde{\epsilon}$-PAC stable partition. To this end, for any coalition $T$, we again define an agreement event $F(T,\tilde{\pi})$ \footnote{
Though we use the same names, the agreement event in Sec. \ref{sec: n_agent_partial_info} is different from this agreement event.}. It contains all the noise values $(\alpha(T), \{\alpha(\tilde{\pi}(i)))\}_{\tilde{\pi}\in \mathcal{R}(T)})$ such that coalition $T$ is preferred over all the coalitions $\tilde{\pi}(i) \in \mathcal{R}(T)$ by every agent $i\in T$ in both the noisy and noise-free games. Formally, 
\begin{small}
\begin{equation*}
\label{eqn: F(T)}
F(T,\tilde{\pi}) \coloneqq \{(\alpha(T), \{\alpha(\tilde{\pi}(i))\}_{\tilde{\pi}(i)\in \mathcal{R}(T)}) : \cap_{i\in T} \{v_i(T) \geq v_i(\tilde{\pi}(i)) \cap \alpha(T){v}_i(T)  \geq \alpha(\tilde{\pi}(i)){v}_i(\tilde{\pi}(i))\}\}.
\end{equation*}
\end{small}
For probability mass $\textbf{\textit{p}}$ and noise value set $\boldsymbol{\alpha}$, let $h_T(\textbf{\textit{p}},\boldsymbol{\alpha}) \coloneqq \mathbb{P}_{T\sim \mathcal{\widetilde{D}}}[F(T, \tilde{\pi})]$ be the agreement probability. Note that $F(T, \tilde{\pi})$ and hence $h_T(\textbf{\textit{p}},\boldsymbol{\alpha})$ are not known since the noise-free values $v_i(T)$ and $v_i(\tilde{\pi}(i))$ are not known. However, for $l \geq 2$ support noise distribution $\mathcal{N}$ we obtain $h_T(\textbf{\textit{p}}, \boldsymbol{\alpha})$ explicitly in Sec. \ref{subsec: n_agents_l_support_non_existence}.
\begin{theorem}
\label{thm: unstab_unstable}
Suppose the noisy game $(N, \tilde{\textbf{\textit{v}}})$ does not have $\tilde{\pi}$ as $\tilde{\epsilon}$-PAC stable outcome, i.e., equation (\ref{eqn: no_core}) is satisfied. 
Then the prediction probability given in Equation \eqref{eqn: predi_prob_with_core} is given by: $\mathbb{P}[\cap_{i\in T}({v}_i(T) > {v}_i(\tilde{\pi}(i)))] \geq (1-\tilde{\epsilon}) h_T(\textbf{p}, \boldsymbol{\alpha})$, where $\epsilon > 0$ satisfy $
(1-\tilde{\epsilon}) h_T(\textbf{p}, \boldsymbol{\alpha}) = 1- \epsilon.$
\end{theorem}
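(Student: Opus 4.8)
The plan is to mirror the proof of Theorem \ref{thm: prob_same_partition_n_partial_info} almost verbatim, with the ``non-blocking'' unions replaced by ``blocking'' intersections. The key structural observation is that core-blocking asserts that \emph{every} agent in $T$ strictly improves, so both the noise-free target event and the noisy hypothesis event are intersections over $i\in T$. This actually makes the bookkeeping simpler than in Theorem \ref{thm: prob_same_partition_n_partial_info}, where one had to pass from a union of products to a diagonal intersection.

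First I would name the two events. Let $A \coloneqq \cap_{i\in T}\{v_i(T) > v_i(\tilde{\pi}(i))\}$ be the event that $T$ core blocks $\tilde{\pi}$ in the unknown noise-free game; this is precisely the event in \eqref{eqn: predi_prob_with_core} whose probability we must bound from below. Let $B \coloneqq \cap_{j\in T}\{\tilde{v}_j(T) > \tilde{v}_j(\tilde{\pi}(j))\}$ be the event that $T$ core blocks $\tilde{\pi}$ in the noisy game. The hypothesis \eqref{eqn: no_core} is then exactly the statement $\mathbb{P}[B]\geq 1-\tilde{\epsilon}$.

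Next I would run the same probability chain as in Theorem \ref{thm: prob_same_partition_n_partial_info}:
\[
\mathbb{P}[A]\;\geq\;\mathbb{P}[A\mid B]\,\mathbb{P}[B]\;\geq\;(1-\tilde{\epsilon})\,\mathbb{P}[A\mid B]\;\geq\;(1-\tilde{\epsilon})\,\mathbb{P}[A\cap B],
\]
where the first inequality is $\mathbb{P}[A]\geq\mathbb{P}[A\cap B]=\mathbb{P}[A\mid B]\mathbb{P}[B]$, the second uses $\mathbb{P}[B]\geq 1-\tilde{\epsilon}$, and the third is $\mathbb{P}[A\mid B]\geq\mathbb{P}[A\cap B]$ (since $\mathbb{P}[B]\leq 1$), exactly the step flagged by $\because\,\mathbb{P}(A\mid B)\geq\mathbb{P}(A\cap B)$ earlier. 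The only genuine content is the identification of $A\cap B$ with the agreement event $F(T,\tilde{\pi})$: under the multiplicative noise model $\tilde{v}_i(S)=\alpha(S)v_i(S)$, the noisy condition $\tilde{v}_i(T)>\tilde{v}_i(\tilde{\pi}(i))$ is literally the condition $\alpha(T)v_i(T)>\alpha(\tilde{\pi}(i))v_i(\tilde{\pi}(i))$ appearing in the definition of $F(T,\tilde{\pi})$. Intersecting this over $i\in T$ together with the noise-free conditions $v_i(T)\geq v_i(\tilde{\pi}(i))$ reproduces $F(T,\tilde{\pi})$ exactly, so $\mathbb{P}[A\cap B]=\mathbb{P}[F(T,\tilde{\pi})]=h_T(\textbf{p},\boldsymbol{\alpha})$. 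Substituting yields $\mathbb{P}[A]\geq(1-\tilde{\epsilon})h_T(\textbf{p},\boldsymbol{\alpha})$, and defining $\epsilon>0$ by $(1-\tilde{\epsilon})h_T(\textbf{p},\boldsymbol{\alpha})=1-\epsilon$ completes the argument.

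The step I expect to require the most care is matching strict ($>$) against non-strict ($\geq$) inequalities between the blocking events $A,B$ and the agreement event $F(T,\tilde{\pi})$, which is written with $\geq$. Under the standing assumption that the valuations are generic, so that exact ties occur with probability zero, the strict and non-strict versions of each event carry equal probability and the identification $\mathbb{P}[A\cap B]=h_T(\textbf{p},\boldsymbol{\alpha})$ is exact; otherwise one obtains only $\mathbb{P}[A\cap B]\leq h_T$, and the statement should be read with $F(T,\tilde{\pi})$ defined using the same strict inequalities as the target blocking event. Everything else is a direct transcription of Theorem \ref{thm: prob_same_partition_n_partial_info} to the complementary setting.
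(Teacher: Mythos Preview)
Your proposal is correct and follows essentially the same approach as the paper: the paper's proof is precisely the chain $\mathbb{P}[A]\geq\mathbb{P}[A\mid B]\,\mathbb{P}[B]\geq(1-\tilde{\epsilon})\,\mathbb{P}[A\mid B]\geq(1-\tilde{\epsilon})\,\mathbb{P}[A\cap B]=(1-\tilde{\epsilon})\,h_T(\textbf{\textit{p}},\boldsymbol{\alpha})$, with the same identification of $A\cap B$ with $F(T,\tilde{\pi})$. Your observation about the strict/non-strict mismatch is a legitimate subtlety that the paper silently glosses over; in the paper the identification is simply written as an equality.
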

\begin{proof}
Consider the following: $\mathbb{P}_{T\sim \mathcal{\widetilde{D}}} [\cap_{i\in T}~({v}_i(T) > {v}_i(\tilde{\pi}(i)))]$
\begin{equation*}
\begin{aligned}
&\geq \mathbb{P}_{T\sim \mathcal{\widetilde{D}}}[\cap_{i\in T}~({v}_i(T) > {v}_i(\tilde{\pi}(i))) ~|~ \cap_{i\in T}(\tilde{v}_i(T) > \tilde{v}_i(\tilde{\pi}(i)))] \times \mathbb{P}_{T\sim \mathcal{\widetilde{D}}}[\cap_{i\in T}(\tilde{v}_i(T) > \tilde{v}_i(\tilde{\pi}(i)))]
\\
&\geq (1-\tilde{\epsilon})~ \mathbb{P} [\cap_{i\in T}({v}_i(T) > {v}_i(\tilde{\pi}(i))) ~|~ \cap_{i\in T}(\tilde{v}_i(T) > \tilde{v}_i(\tilde{\pi}(i)))] 
\\
& \geq  (1-\tilde{\epsilon}) ~ \mathbb{P} [\cap_{i\in T}\{{v}_i(T) > {v}_i(\tilde{\pi}(i)) \cap \tilde{v}_i(T) > \tilde{v}_i(\tilde{\pi}(i))\}] \hspace{4 mm} (\because \mathbb{P}(A|B) \geq \mathbb{P}(A\cap B))
\\
&= (1-\tilde{\epsilon})~ h_T(\textit{\textbf{p}},\boldsymbol{\alpha}) = 1-\epsilon.
\end{aligned}
\end{equation*}
This ends the proof.
\end{proof}
Let $\eta$ be the probability of noise agreement event for which coalition $T$ core blocks $\tilde{\pi}$, i.e., $\eta \coloneqq h_T(\textit{\textbf{p}},\boldsymbol{\alpha})$. Thus  $\epsilon = (1-\eta) +  \eta \tilde{\epsilon}$ and hence partition $\tilde{\pi}$ is $\eta$ noise-robust non core-stable in accordance to Definition \ref{def: eta_noise_robustness}. Moreover, if $\eta = 1$ then $\tilde{\epsilon} = \epsilon$ so, for arbitrary $\tilde{\epsilon} > 0$, we also have arbitrary $\epsilon > 0$. 
\begin{remark}
\label{remark: non_existence_zeta}
Similar to Sec. \ref{sec: n_agent_partial_info}, for a user-given $\eta$, we get a noise set $I^{\star}(T, \eta)$ on $\textbf{p}$ for coalition $T$, i.e., the noise set in which the coalition $T$ core blocks  $\tilde{\pi}$ with error more than $1-\epsilon$. This is obtained by setting $\mathbb{P}[F(T, \tilde{\pi})] =  h_T(\textbf{\textit(p)}, \boldsymbol{\alpha}) = \eta$;  that is, $I^{\star}(T, \eta)$ is $\eta$ level set of agreement probability function $\mathbb{P}[F(T, \tilde{{\pi}})]$; in other words, it is a super level set of the prediction probability. 
Hence, $\tilde{\pi}$ is $\eta$ noise-robust non core-stable in this noise set $I^{\star}(T, \eta)$. 
\end{remark}
To better understand the noise robustness, we provide the expression of $h_T(\textit{\textbf{p}},\boldsymbol{\alpha})$ for $l\geq 2$ support noise models in the following subsection. For $l=2$ support noise model, we refer the readers to Lemma \ref{lemma: h_T(p,alpha)_2_support} of the SM. The detailed analysis of the $2$ support model gives many more insights and also serves as the base case in the proof of results in the next Section.

\subsection{$n$ agents $l$ support partial information noisy game without core}
\label{subsec: n_agents_l_support_non_existence}
In this section, we obtain the expression of the agreement probability $h_T(\textbf{\textit{p}}, \boldsymbol{\alpha})$ for $l\geq 2$ support noise model, $\mathcal{N}_{sp}  = \{\alpha_1,\alpha_2,\ldots, \alpha_l\}$. To this end, for all $r,s$ such that $\alpha_r > \alpha_s$ define $\mathcal{J}(\alpha_r,\alpha_s, T) = \left\lbrace \tilde{\pi}(i) \in \mathcal{R}(T) ~\bigg|~ \frac{\tilde{v}_i(\tilde{\pi}(i))}{\tilde{v}_i(T)} \geq \frac{\alpha_s}{\alpha_r} \right \rbrace$. It contains the set of all coalitions in the set $\mathcal{R}(T)$, such that $\alpha_r>\alpha_s$, and $\frac{\tilde{v}_i(\tilde{\pi}(i))}{\tilde{v}_i(T)} \geq \frac{\alpha_s}{\alpha_r}$. The following Theorem provides the expression of  $h_T(\textit{\textbf{p}},\boldsymbol{\alpha})$. For proof refer to Sec. \ref{proof: h_t_p_alpha_n_player_l_support} of the SM.
\begin{theorem}
\label{thm: n_agent_l_support_h_T_p}
For $n$ agent noisy hedonic game $(N, \tilde{\textbf{\textit{v}}})$ with $\mathcal{N}_{sp}  = \{\alpha_1,\alpha_2,\ldots, \alpha_l\}$, the agreement probability $h_T(\textbf{p}, \boldsymbol{\alpha})$ is given by:
\begin{equation*}
h_T(\textbf{p},\boldsymbol{\alpha})=\begin{cases} 1, \hfill if~ \tilde{\pi}(i)= T, ~\forall ~i\in T,
\\
\sum_{r,s\in [l]:\alpha_r>\alpha_s} p_r^{|\mathcal{R}(T)| - |\mathcal{J}(\alpha_r,\alpha_s, T)| + 1} \times \{(p_s + p_r)^{|\mathcal{J}(\alpha_r,\alpha_s,T)|} -
p_r^{|\mathcal{J}(\alpha_r,\alpha_s, T)|}\}
\\
+ \sum_{a=1}^{l} p_a \left(\sum_{b=a}^{l}p_b\right)^{|\mathcal{R}(T)|}, \hfill otherwise.
\end{cases}
\end{equation*}
\end{theorem}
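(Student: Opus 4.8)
The plan is to follow exactly the template of the proof of Theorem \ref{thm: n_agent_l_support_f_T_p}, of which the present statement is the mirror image, after reducing the agreement event $F(T,\tilde\pi)$ to a clean probability over noise levels. First I would dispose of the degenerate case $\tilde\pi(i)=T$ for all $i\in T$: then $v_i(T)=v_i(\tilde\pi(i))$ and $\tilde v_i(T)=\tilde v_i(\tilde\pi(i))$, so both clauses of $F(T,\tilde\pi)$ hold with equality for every noise realization and $h_T(\textbf{p},\boldsymbol{\alpha})=1$.

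For the non-degenerate case, the key reduction uses $\tilde v_i(S)=\alpha(S)v_i(S)$. Writing $q_i := \tilde v_i(\tilde\pi(i))/\tilde v_i(T)$ for the observed (hence fixed) noisy ratios, the noisy clause $\tilde v_i(T)\ge \tilde v_i(\tilde\pi(i))$ is just $q_i\le 1$, which holds under the standing assumption that $T$ blocks $\tilde\pi$ in the noisy game (Eq. \eqref{eqn: no_core}); and the noise-free clause $v_i(T)\ge v_i(\tilde\pi(i))$ becomes $\alpha(\tilde\pi(i)) \ge q_i\,\alpha(T)$. Hence $h_T(\textbf{p},\boldsymbol{\alpha}) = \mathbb{P}_{\boldsymbol{\alpha}}\big[\,\forall\,\tilde\pi(i)\in\mathcal{R}(T):\ \alpha(\tilde\pi(i))\ge q_i\,\alpha(T)\,\big]$, where $\alpha(T)$ and the $|\mathcal{R}(T)|$ coalition noises are i.i.d. draws from the $l$-support law (when several agents of $T$ share a block of $\tilde\pi$, that block carries the binding ratio $\max_i q_i$). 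This is precisely the event of Theorem \ref{thm: n_agent_l_support_f_T_p} with the inequality reversed and the support ordered the other way, which explains why $\mathcal{J}(\alpha_r,\alpha_s,T)=\{\tilde\pi(i): q_i\ge \alpha_s/\alpha_r\}$ replaces $\mathcal{I}$, why the base probability is $p_r$ (the larger level, $\alpha(T)=\alpha_r$) rather than $p_s$, and why $\sum_{b=a}^{l}p_b=\mathbb{P}[\alpha(\tilde\pi(i))\ge\alpha_a]$ replaces $\sum_{b=1}^{a}p_b$.

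With this reduction in hand I would prove the formula by induction on the support size $l$, the base case $l=2$ being Lemma \ref{lemma: h_T(p,alpha)_2_support}. Conditioning on $\alpha(T)=\alpha_a$, each coalition $\tilde\pi(i)$ contributes the factor $\sum_{b\ge a}p_b + \sum_{b<a,\,q_i\le \alpha_b/\alpha_a}p_b$: the first part is the ``automatic'' contribution (any $\alpha(\tilde\pi(i))\ge\alpha_a$ satisfies the constraint because $q_i\le1$) and produces the term $\sum_a p_a(\sum_{b\ge a}p_b)^{|\mathcal{R}(T)|}$, while the second part, collected across coalitions, must be shown to telescope into the pair-indexed correction $\sum_{\alpha_r>\alpha_s}p_r^{|\mathcal{R}(T)|-|\mathcal{J}(\alpha_r,\alpha_s,T)|+1}\{(p_s+p_r)^{|\mathcal{J}(\alpha_r,\alpha_s,T)|}-p_r^{|\mathcal{J}(\alpha_r,\alpha_s,T)|}\}$. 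The main obstacle is exactly this last collapse: because the thresholds $q_i\,\alpha_a$ differ across coalitions, the conditional probability is a product of heterogeneous factors rather than a single power, and turning that product into the compact sum is the delicate bookkeeping step. The structural fact that makes it go through is that, for a fixed larger level $\alpha_r=\alpha(T)$, the sets $\mathcal{J}(\alpha_r,\alpha_s,T)$ are nested in $s$ (as $\alpha_s$ decreases, the ratio $\alpha_s/\alpha_r$ drops, so $\mathcal{J}$ only grows); this nesting lets the inductive hypothesis for the $(l-1)$-support problem absorb all levels below $\alpha_r$ and leaves precisely the stated $(\alpha_r,\alpha_s)$ term, which I would then verify telescopes correctly by the same algebra as in the $f_T$ proof.
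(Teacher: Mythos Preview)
Your proposal is correct and follows essentially the same route as the paper: both argue by induction on the support size $l$ with Lemma~\ref{lemma: h_T(p,alpha)_2_support} as the base case, mirroring the proof of Theorem~\ref{thm: n_agent_l_support_f_T_p} with the roles of larger and smaller noise levels swapped. The paper's inductive step is slightly more mechanical than yours---it simply appends the new largest level $\alpha_{k+1}$ and reads off the extra contributions to each of the two sums---whereas you motivate the same outcome through conditioning on $\alpha(T)$ and the nesting of the $\mathcal{J}$ sets; either execution works and the algebra is the same.
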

\begin{remark}
Let $h_T(\textbf{p},\boldsymbol{\alpha})= \eta$ for some user-given satisfaction value $\eta$, we get a set of noise values in accordance to the Remark \ref{remark: non_existence_zeta}. In this case, the noise set depends on $|\mathcal{R}(T)|$, and $|\mathcal{J}(\alpha_r, \alpha_s, T)|,~ \forall~\alpha_r>\alpha_s$ for  coalition $T$. Also, the partition $\tilde{\pi}$ is $\eta$ noise-robust non core-stable in the noise set $I^{\star}(T, \eta)$.
\end{remark}
\begin{remark}
Theorem \ref{thm: unstab_unstable} provides the probability that $\tilde{\pi}$ is not $\epsilon$-PAC stable outcome in noise-free game $(N, {\textbf{\textit{v}}})$, when it is not $\tilde{\epsilon}$-PAC stable outcome in noisy game. 
Therefore, the probability that a noise-free game has $\tilde{\pi}$ as $\epsilon$-PAC stable outcome, given that the noisy game does not have $\tilde{\pi}$ as $\tilde{\epsilon}$-PAC stable outcome is compliment of the probability in Theorem \ref{thm: unstab_unstable}. 
\end{remark}
\section{2 agent full information model}
\label{sec: 2 agent_full_info}
This Section considers the complete information game with 2 agents. So, valuations on all coalitions are known in the noisy game; hence, a noisy core-stable partition is also known. Even though this Section is a particular case of previous sections, we get many valuable insights that enhance our understanding regarding noise robustness in noisy hedonic games. For example, we have this counter-intuitive fact that the prediction probability can be high if the noise value occurs with a high probability. A concrete illustration is that the prediction probability turns out to be 1 when values of all agents are inflated by $\alpha>1$ with probability 1; in fact, both games have the same preferences and hence identical partitions.
\subsection{2 support noise distribution}
\label{subsec: 2 support_full_info} 
Let the noise support be $\mathcal{N}_{sp} \in \{1,\alpha\}$ with $\alpha>1$, such that $\mathbb{P}[\mathcal{N}_{sp} = \alpha] = p = 1- \mathbb{P}[\mathcal{N}_{sp} =1]$.  Note that $\alpha>1$ is not a restrictive condition; even if we allow $\alpha<1$, we will get results similar to the ones presented below. Given a noisy game and its corresponding core-stable partition, we aim to find the prediction probability that a core-stable partition of the \textit{unknown} noise-free game is the same as a core-stable partition of a noisy game. Formally, for a user-given $\zeta \in (0,1]$, we find $\mathbb{P}[\pi = \tilde{\pi} ~| ~noisy~game] \geq \zeta$. 

We want to emphasize that the above prediction probability is the same as the one given in Equation \eqref{eqn: predi_prob_with_core}. Since, in a 2 agent complete information game, the noisy core-stable partition $\tilde{\pi}$ always exists; hence $\tilde{\epsilon} = 0$ in Theorem \ref{thm: prob_same_partition_n_partial_info}. 
So, with $f_T(\textbf{\textit{p}}, \boldsymbol{\alpha}) = \zeta$ we have $\mathbb{P}_{T\sim \mathcal{\widetilde{D}}}[T \text{ does not core blocks } \tilde{\pi} ~ in ~(N, \textbf{\textit{v}})] \geq \zeta$. Consider the following 2 agents' noisy game.
\begin{equation}
\label{eqn: game1}
\tag{game 1}
\tilde{v}_1(12) > \tilde{v}_1(1); \hspace{2 mm} \tilde{v}_2(12) > \tilde{v}_2(2).
\end{equation}
Clearly, $\tilde{\pi} = \{12\}=N$ is the core-stable outcome of the above noisy game. The following Lemma gives the prediction probability for the above game.
\begin{lemma}
\label{lemma: 2_agent_2_support_full_info}
For noisy \ref{eqn: game1} with complete information on $\tilde{\textbf{\textit{v}}}$, and $\mathcal{N}_{sp} = \{1,\alpha\}$, we have
\begin{equation}
\label{eqn: 2_support_prob}
\begin{aligned}
\mathbb{P}[\pi =\tilde{\pi}~|~game~1] = \begin{cases}
1-p(1-p^2), & if ~\alpha \geq \overline{r}
\\
1-p(1-p), & if~ \underline{r} \leq \alpha < \overline{r}
\\
1, & if~ \alpha < \underline{r},
\end{cases}
\end{aligned}
\end{equation}
where 
$\overline{r}= \max \left\lbrace \frac{\tilde{v}_1(12)}{\tilde{v}_1(1)}, \frac{\tilde{v}_2(12)}{\tilde{v}_2(2)} \right\rbrace$, and $ \underline{r} =\min \left\lbrace \frac{\tilde{v}_1(12)}{\tilde{v}_1(1)}, \frac{\tilde{v}_2(12)}{\tilde{v}_2(2)} \right\rbrace$. 

Also, this prediction probability $\mathbb{P}[\pi =\tilde{\pi}~|~game~1]$ is convex in $p$. So, while the minimal value for $\mathbb{P}[\pi =\tilde{\pi}~|~game~1]$ occurs for noise probabilities around $p = 0.5$ (depending on $\alpha,   \overline{r}$  and  $\underline{r}$), the maximal value of it is $1$ at $p = 0$ and $p =1$.
\end{lemma}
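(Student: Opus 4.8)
The plan is to convert the multiplicative-noise identity $\tilde{v}_i(S) = \alpha(S)\,v_i(S)$ into a condition on the three independent noise draws and then count. First I would note that in a $2$-agent full-information game the only partitions are the grand coalition $\{12\}$ and the all-singleton partition $\{\{1\},\{2\}\}$, and that $\{12\}=\tilde{\pi}$ is core-stable in the noise-free game exactly when neither agent strictly prefers its singleton, i.e. $v_1(12)\geq v_1(1)$ and $v_2(12)\geq v_2(2)$. Dividing each inequality by the relevant noise values and writing $r_1=\tilde{v}_1(12)/\tilde{v}_1(1)$ and $r_2=\tilde{v}_2(12)/\tilde{v}_2(2)$ (both $>1$ by the hypotheses of \ref{eqn: game1}), this event becomes $\{r_1\geq \alpha(12)/\alpha(1)\}\cap\{r_2\geq \alpha(12)/\alpha(2)\}$. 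Since $\pi=\tilde{\pi}$ holds precisely on this event, the prediction probability equals the probability of this intersection under the three i.i.d.\ draws $\alpha(12),\alpha(1),\alpha(2)\in\{1,\alpha\}$, each equal to $\alpha$ with probability $p$.

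Next I would characterise when each agent's constraint fails. Each ratio $\alpha(12)/\alpha(i)$ lies in $\{1/\alpha,1,\alpha\}$, and since $r_i>1$ the constraint $r_i\geq \alpha(12)/\alpha(i)$ can be violated only when the ratio attains its top value $\alpha$, i.e.\ when $\alpha(12)=\alpha$ and $\alpha(i)=1$, and additionally $\alpha>r_i$. I would then split on the position of $\alpha$ relative to $\underline{r}$ and $\overline{r}$. If $\alpha<\underline{r}$ neither constraint can fail, giving probability $1$. If $\underline{r}\leq\alpha<\overline{r}$ only the agent with the smaller ratio can fail, and its failure event $\{\alpha(12)=\alpha,\alpha(i)=1\}$ has probability $p(1-p)$, yielding $1-p(1-p)$. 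If $\alpha\geq\overline{r}$ both agents can fail; here the two failure events both involve the common draw $\alpha(12)=\alpha$, so I would use inclusion--exclusion: each has probability $p(1-p)$ while their intersection $\{\alpha(12)=\alpha,\alpha(1)=1,\alpha(2)=1\}$ has probability $p(1-p)^2$, so the no-failure probability is $1-[2p(1-p)-p(1-p)^2]=1-p(1-p^2)$.

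For the convexity and extremal claims I would observe that for any fixed $\alpha$ exactly one regime applies, so as a function of $p$ the prediction probability is one of the polynomials $1$, $1-p+p^2$, or $1-p+p^3$; their second derivatives ($0$, $2$, and $6p$) are nonnegative on $[0,1]$, hence each is convex. Evaluating at the endpoints gives value $1$ at $p=0$ and $p=1$, while the interior stationary point (at $p=1/2$ for the quadratic and $p=1/\sqrt{3}\approx 0.577$ for the cubic) yields the minimum, matching the ``around $p=0.5$'' statement.

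The main obstacle I anticipate is not the algebra but getting the dependence structure right: the two agents' failure events are correlated through the shared grand-coalition noise $\alpha(12)$, so in the $\alpha\geq\overline{r}$ case one cannot multiply survival probabilities and must instead apply inclusion--exclusion. A secondary point requiring care is the boundary $\alpha=r_i$: since core-stability uses the weak inequality $v_i(12)\geq v_i(1)$, at equality the constraint still holds, so this case must be assigned to ``no failure,'' which is why the regimes are split using $\geq$ and $<$ exactly as stated.
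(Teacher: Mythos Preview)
Your approach is correct and genuinely different from the paper's. The paper proceeds by exhaustively enumerating all $2^3=8$ configurations of $(\alpha(1),\alpha(2),\alpha(12))\in\{1,\alpha\}^3$, computing the noise-free values in each, determining whether $\pi=N$ holds, and then summing the probabilities of the favourable cases within each $\alpha$-regime. You instead reduce the problem to the two ``failure'' events $F_i=\{\alpha(12)=\alpha,\ \alpha(i)=1\}$ and compute the success probability via inclusion--exclusion, exploiting the shared draw $\alpha(12)$. Your route is shorter, makes the dependence structure explicit, and scales better (it would extend naturally to more agents or supports), whereas the paper's enumeration is more elementary and self-checking but tedious. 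Both arrive at the same polynomials, and your convexity/endpoint analysis is a clean addition that the paper does not spell out.

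One small point: your last paragraph on the boundary case is slightly off relative to the paper's conventions. The paper's case analysis uses \emph{strict} inequalities for the noise-free game (so $\pi=N$ requires $v_i(12)>v_i(i)$), which makes ``failure'' occur when $\alpha\geq r_i$ rather than $\alpha>r_i$; that is what produces the $\alpha\geq\overline{r}$ and $\underline r\leq\alpha$ boundaries exactly as stated in the lemma. Under your weak-inequality reading the split would instead be $\alpha>\overline{r}$, etc. This is a cosmetic discrepancy in convention rather than a gap in your argument, but your claim that the weak inequality ``explains'' the stated split is reversed.
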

The proof is deferred to Sec. \ref{proof: 2_agent_2_support} of the SM. The above lemma has following insights: the prediction probability depends on three factors, $p$, $\alpha$, and $\tilde{\textbf{\textit{v}}}$. 
Note that $p$ is not known because the noise distribution is unknown. We know $\tilde{\textbf{\textit{v}}}$'s hence $\underline{r}$, and $\overline{r}$ are known. If $p$ is close to $0.5$, then the prediction probability $\mathbb{P}[\pi =\tilde{\pi}~|~game~1]$ is close to $0.62$, for $\alpha\geq \overline{r}$, and close to $0.75$ for $\underline{r} \leq \alpha < \overline{r}$. So, the prediction probability is the least when noise is random. We call this minimum prediction probability the \textit{safety value}. 

Suppose we allow the user-given satisfaction value on the prediction probability, i.e., we relax the condition $\mathbb{P}[\pi =\tilde{\pi}~|~game~1]=1$, and allow a user-given satisfaction value, $\zeta \in (0,1]$ on the  prediction probability, i.e.,
$\mathbb{P}[\pi =\tilde{\pi}~|~game~1]= \zeta$. So, for different ranges of $\alpha$, we get an interval of the noise probabilities $p$ allowable to attain a user-given probability $\zeta$. For example, if $\zeta = 0.9$, then the noise regime, $I^{\star}(\zeta = 0.9)  = [0,0.101] \cup [0.946,1]$, if $\alpha \geq \overline{r}$; it is $[0,0.113]\cup [0.887,1]$, if $\underline{r} \leq \alpha < \overline{r}$; and it is $[0,1]$, if $\alpha < \underline{r}$. So, for the noise set $I^{\star}(\zeta =0.9)$, the core-stable partition of the noise-free game is the same as the core-stable partition of the noisy \ref{eqn: game1} with probability $0.9$. Thus, the noise regime achieving a user given satisfaction value can be non-convex.
Figure \ref{fig: 2_agent_full_info} illustrates these observations.
\begin{figure}[h!]
\centering
\includegraphics[scale = 0.7]{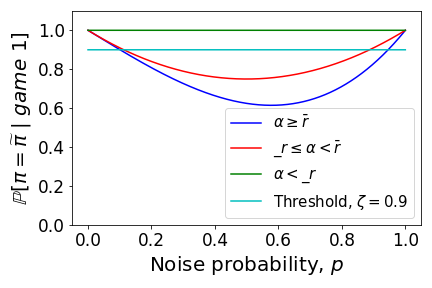}
\caption{In two agent hedonic \ref{eqn: game1} with 2 support noise model, we plot the prediction probability  $\mathbb{P}[\pi  = \tilde{\pi} ~|~ game~1]$ for different ranges of $\alpha$.}
\label{fig: 2_agent_full_info}
\end{figure}
Note that in Equation \eqref{eqn: 2_support_prob} we have obtained the conditional probability $\mathbb{P}[\pi =\tilde{\pi}~|~game~1]$. One can obtain a similar prediction probability for other noisy games, which we call game 2, game 3, and game 4, whose details are available in Sec. \ref{subsec: other_noisy_game_2_agent} of the SM.
We summarize the main observations of 2 agents 2 support noise in the Theorem below:
\begin{theorem}
\label{thm: 2_agents_2_support}
Consider 2 agent noisy hedonic game with 2 support noise model, then the prediction probability that $\pi = \tilde{\pi}$ given any noisy game $k$, $k=1,2,3,4$ is:
\begin{equation}
\label{eqn: grand_coaltition_noise_free_2_support}
\mathbb{P}[\pi =\tilde{\pi}~|~game~k] = \begin{cases} 1, & \text{under any condition in A},
\\
q(p,\tilde{v}_1(\cdot), \tilde{v}_2(\cdot), \alpha), & otherwise,
\end{cases}
\end{equation}
for some function $q(p,\tilde{v}_1(\cdot), \tilde{v}_2(\cdot), \alpha)< 1$, depending on $\tilde{v}_1(\cdot), \tilde{v}_2(\cdot)$ and $\alpha$. The conditions in A are (a) $k=1$, and $\alpha < \underline{r}$; (b) $k=2$ and $\frac{1}{\alpha} \geq \overline{r}$; (c) $k=3$ and $\frac{1}{\alpha} \geq \frac{\tilde{v}_1(12)}{\tilde{v}_1(1)} $; and (d) $k=4$ and $\frac{1}{\alpha} \geq \frac{\tilde{v}_2(12)}{\tilde{v}_2(2)}$. 

Moreover, if $p=0$ or $p=1$, we have $\mathbb{P}[\pi = \tilde{\pi}~|~game~k] =1$, for all $k=1,2,3,4$. 
\end{theorem}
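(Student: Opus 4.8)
The plan is to obtain the theorem by assembling the four per-game computations, since it is a summary statement rather than a fresh computation. The case $k=1$ is already delivered by Lemma~\ref{lemma: 2_agent_2_support_full_info}: reading off its three-branch formula, $\mathbb{P}[\pi=\tilde{\pi}~|~game~1]=1$ holds exactly on the branch $\alpha<\underline{r}$, which is condition (a), while the other two branches equal $1-p(1-p^2)$ and $1-p(1-p)$, each strictly below $1$ for $p\in(0,1)$; these are the admissible forms of $q$ for $k=1$. So the remaining work is to run the analogous argument for $k=2,3,4$ and to verify the boundary claim at $p\in\{0,1\}$.

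For each game $k$ I would first read off the noisy core-stable partition $\tilde{\pi}$ from the noisy preferences (the grand coalition $\{12\}$ when both agents prefer $\{12\}$, and the singleton partition $\{\{1\},\{2\}\}$ otherwise), and then rephrase the event $\{\pi=\tilde{\pi}\}$ as the event that those noise-free comparisons that decide core-stability agree with the noisy ones. The key reduction is that each such comparison can be pulled back through the scaling: writing $v_i(S)=\tilde{v}_i(S)/\alpha(S)$, we get $v_i(1)>v_i(12)\iff \frac{\tilde{v}_i(1)}{\tilde{v}_i(12)}>\frac{\alpha(\{i\})}{\alpha(\{12\})}$, where the random ratio $\alpha(\{i\})/\alpha(\{12\})$ takes the values $1,\alpha,1/\alpha$ with probabilities $(1-p)^2+p^2$, $p(1-p)$, $p(1-p)$. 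Summing the probabilities of the noise realizations that keep every governing comparison pointing the same way produces the closed form $q<1$, and the subset of realizations for which \emph{no} admissible noise choice can flip the governing comparison produces the ``under any condition in A'' branch. One must keep track of the fact that the two agents' comparisons share the common factor $\alpha(\{12\})$, which correlates them; this is exactly why the game~1 answer is $1-p(1-p^2)$ rather than a product of independent terms.

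The conditions in A then drop out of this reduction. For $k=2$ both agents must retain their singleton preference, and the worst-case ratio $\alpha(\{i\})/\alpha(\{12\})=\alpha$ fails to flip either comparison precisely when $\alpha\le \tilde{v}_i(i)/\tilde{v}_i(12)$ for both $i$, i.e. $\tfrac{1}{\alpha}\ge\overline{r}$, which is condition (b). For $k=3$ and $k=4$ only one agent is a potential blocker (the one preferring to stay alone), so core-stability of $\{\{1\},\{2\}\}$ is governed by that single agent's comparison alone; this is why (c) and (d) involve only $\frac{\tilde{v}_1(12)}{\tilde{v}_1(1)}$ or $\frac{\tilde{v}_2(12)}{\tilde{v}_2(2)}$ rather than $\overline{r}$ or $\underline{r}$. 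The boundary claim is then immediate: at $p=0$ every coalition has noise level $1$, so the noisy and noise-free games coincide, and at $p=1$ every coalition is scaled by the common factor $\alpha$, which preserves all relative preferences and hence the core-stable partition; in either case $\mathbb{P}[\pi=\tilde{\pi}~|~game~k]=1$ for all $k$.

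I expect the main obstacle to be the asymmetric games $k=3,4$: one must argue carefully that even though the noise may flip the \emph{non-blocking} agent's preference, the core-stable partition is unchanged because core-stability is decided solely by the blocking agent. This is what forces both the probability $q$ and the corresponding A-condition to collapse onto a single agent's ratio instead of the two-agent extrema $\overline{r},\underline{r}$, and getting the direction of the inequality and the role of the shared factor $\alpha(\{12\})$ right in this collapse is the delicate point.
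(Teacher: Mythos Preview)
Your overall plan is exactly what the paper does: Theorem~\ref{thm: 2_agents_2_support} is a summary of four per-game computations (Lemma~\ref{lemma: 2_agent_2_support_full_info} for $k=1$ and three analogous lemmas in the appendix for $k=2,3,4$), each of which the paper proves by brute-force enumeration of the $2^3=8$ noise configurations $(\alpha(1),\alpha(2),\alpha(12))\in\{1,\alpha\}^3$. Your ratio-based reduction---pulling each comparison $v_i(S)\gtrless v_i(T)$ back to a threshold on $\alpha(S)/\alpha(T)$---is a cleaner substitute for that enumeration and recovers the same polynomials; the boundary argument at $p\in\{0,1\}$ is also correct and is exactly the paper's reasoning.

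There is one genuine slip, in your $k=2$ analysis. You write that ``both agents must retain their singleton preference'', but this is too strong: the partition $\{\{1\},\{2\}\}$ remains core-stable in the noise-free game as soon as \emph{at least one} agent $i$ satisfies $v_i(i)\ge v_i(12)$, since the grand coalition $\{12\}$ blocks only when \emph{both} members strictly prefer it. Hence $\pi\neq\tilde{\pi}$ requires the single configuration $\alpha(1)=\alpha(2)=\alpha$, $\alpha(12)=1$ together with $\alpha>\max\bigl\{\tilde v_1(1)/\tilde v_1(12),\,\tilde v_2(2)/\tilde v_2(12)\bigr\}$, and the probability-$1$ regime is $1/\alpha\ge\underline{r}$, not $1/\alpha\ge\overline{r}$ as your argument yields. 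This is precisely what the paper's own appendix lemma for game~2 states (and it also explains the formula $1-p^2(1-p)$ there); the $\overline{r}$ appearing in condition~(b) of the theorem statement is evidently a typo for $\underline{r}$. Your handling of $k=1,3,4$ is correct, including the observation that in the asymmetric games only the singleton-preferring agent governs core-stability.
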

For a 2 support noise model, the probability that a noise-free game has the same core-stable partition as the noisy game is $1$ in many cases, including $p=1$, i.e., when all values are inflated by $\alpha$. Thus, the allowable noise regimes for high prediction probabilities can include high noise values.
Moreover, $q(p,\tilde{v}_1(\cdot), \tilde{v}_2(\cdot), \alpha) \geq 0.62$ is the safety value. So, we have a lower bound on the prediction probability.

\subsection{3 support noise distribution}
\label{subsec: 3support_full_info}
Next, consider the 3 support noise,  $\mathcal{N}_{sp} = \{1,\alpha_1, \alpha_2\}$, where $\alpha_1 > 1$, and $0<\alpha_2 < 1$. Let $\mathbb{P}[\alpha(S) = \alpha_1] = p_1; ~ \mathbb{P}[\alpha(S) = \alpha_2] = p_2; ~ and~ \mathbb{P}[\alpha(S) = 1] = 1-p_1-p_2$. Given the noisy \ref{eqn: game1}, the prediction probability for 3 support noise is given in the Lemma below. 
\begin{lemma}
\label{lemma: 2_agent_3_support}
For the 3 support noise model the prediction probability $\mathbb{P}[\pi = \tilde{\pi} ~|~ game~1]$ is 
\begin{equation}
\mathbb{P}[\pi = \tilde{\pi} ~|~ game~1] = \begin{cases} g(p_1,p_2),  & ~if~~ \alpha_1 \geq \overline{r}~;~ \frac{1}{\alpha_2} \geq \overline{r} ~; ~ \frac{\alpha_1}{\alpha_2} \geq \overline{r}
\\
1, &~ if~~ \alpha_1 < \underline{r}~;~ \frac{1}{\alpha_2} < \underline{r} ~; ~ \frac{\alpha_1}{\alpha_2} < \underline{r}
\end{cases}
\end{equation}
where 
$g(p_1,p_2) = 
p_1^3 + p_2^3 +   2(p_1 (1-p_1-p_2)^2 +  p_2^2 (1-p_1-p_2) + p_1p_2(1-p_1-p_2) + p_1p_2^2 )    + p_1^2 p_2 + p_1^2 (1-p_1-p_2) + p_2 (1-p_1-p_2)^2 +(1-p_1-p_2)^3$.
\end{lemma}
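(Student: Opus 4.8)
The plan is to reduce the core-stability requirement for the unknown noise-free game to two scalar inequalities involving the observed noisy ratios and the random noise multipliers, and then evaluate the resulting probability by conditioning on the noise applied to the grand coalition.

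First I would note that in a two-agent full-information game the only coalitions are $\{1\}, \{2\}$ and $\{12\}$, and since $\tilde{\pi} = \{12\}$ is the grand coalition, it is core-stable in the noise-free game precisely when neither singleton blocks it, i.e.\ $v_1(12) \ge v_1(1)$ and $v_2(12) \ge v_2(2)$. Using the multiplicative noise model $\tilde{v}_i(S) = \alpha(S)\, v_i(S)$ to rewrite each noise-free value as $v_i(S) = \tilde{v}_i(S)/\alpha(S)$, these two conditions become
\[
r_1 \ge \frac{\alpha(12)}{\alpha(1)} \qquad\text{and}\qquad r_2 \ge \frac{\alpha(12)}{\alpha(2)},
\]
where $r_1 = \tilde{v}_1(12)/\tilde{v}_1(1)$ and $r_2 = \tilde{v}_2(12)/\tilde{v}_2(2)$, so that $\overline{r} = \max\{r_1,r_2\}$, $\underline{r} = \min\{r_1,r_2\}$, and $r_1, r_2 > 1$ because \ref{eqn: game1} holds. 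Thus $\mathbb{P}[\pi=\tilde{\pi}\mid game~1]$ is exactly the probability that both displayed inequalities hold when $\alpha(1), \alpha(2), \alpha(12)$ are drawn i.i.d.\ from $\{1,\alpha_1,\alpha_2\}$ with masses $p_0 := 1-p_1-p_2$, $p_1$, and $p_2$.

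Next I would exploit conditional independence: given the realised value of $\alpha(12)$, the first inequality depends only on $\alpha(1)$ and the second only on $\alpha(2)$, and these are independent. So I would condition on $\alpha(12) \in \{1,\alpha_1,\alpha_2\}$ and, for each value, count which of the three values of $\alpha(i)$ make $r_i \ge \alpha(12)/\alpha(i)$ hold. The only attainable ratios exceeding $1$ are $\alpha_1,\ 1/\alpha_2$ and $\alpha_1/\alpha_2$, and in the regime defining $g$ each of these is $\ge \overline{r} \ge r_i$, so the inequality fails exactly on those realisations. This yields: when $\alpha(12)=1$ the good values of $\alpha(i)$ are $\{1,\alpha_1\}$ (mass $p_0+p_1$); when $\alpha(12)=\alpha_1$ only $\alpha(i)=\alpha_1$ is good (mass $p_1$); and when $\alpha(12)=\alpha_2$ every value is good (mass $1$). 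Since this classification is identical for $i=1$ and $i=2$, squaring each conditional mass and summing against the mass of $\alpha(12)$ gives
\[
p_0\,(p_0+p_1)^2 + p_1\, p_1^2 + p_2\cdot 1,
\]
and writing the last factor $1 = (p_0+p_1+p_2)^2$ to homogenise and expanding reproduces exactly the ten-term expression $g(p_1,p_2)$.

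Finally, in the complementary regime $\alpha_1 < \underline{r}$, $1/\alpha_2 < \underline{r}$, $\alpha_1/\alpha_2 < \underline{r}$, every attainable ratio $\alpha(12)/\alpha(i)$ is strictly below $\underline{r}\le r_i$, so both inequalities hold for all $27$ realisations and the probability is $1$. I expect the main obstacle to be purely combinatorial bookkeeping: enumerating the seven possible values of $\alpha(12)/\alpha(i)$, correctly handling the threshold comparisons against $\overline{r}$ and $\underline{r}$ (including the strict/non-strict conventions at equality), and verifying that the compact conditional-independence sum expands term-by-term into the stated form of $g(p_1,p_2)$.
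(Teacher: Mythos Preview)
Your argument is correct and takes a genuinely different route from the paper. The paper proves the lemma by brute-force enumeration: it lists all $27$ triples $(\alpha(1),\alpha(2),\alpha(12))\in\{1,\alpha_1,\alpha_2\}^3$, determines for each triple whether the resulting noise-free game still has $\pi=N$, and in the extreme regime identifies exactly $14$ triples that always work; summing their probabilities gives the ten-term polynomial $g(p_1,p_2)$. Your approach instead reduces the question to two independent scalar tests $r_i\ge \alpha(12)/\alpha(i)$, conditions on $\alpha(12)$, and uses the conditional independence of $\alpha(1)$ and $\alpha(2)$ to write the answer in the closed form
\[
p_0(p_0+p_1)^2 \;+\; p_1\cdot p_1^2 \;+\; p_2\cdot 1^2,\qquad p_0:=1-p_1-p_2,
\]
which you then expand to recover $g$. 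The two computations match exactly: your three summands correspond to $4$, $1$, and $9$ favourable triples, which are precisely the paper's cases $\{1,3,4,7\}$, $\{8\}$, and $\{9,12,13,15,17,21,23,25,27\}$ respectively. Your method is shorter, scales to larger support without a case explosion, and as a by-product yields a compact closed form for $g$ that makes later analysis (e.g.\ taking derivatives for the convexity discussion) more transparent. The paper's enumeration, on the other hand, makes the intermediate regimes between the two displayed cases more visible, since each of the $13$ ``ambiguous'' triples comes with an explicit threshold condition. Your acknowledged caveat about strict versus non-strict comparisons at the boundary $\alpha_1=\overline r$ etc.\ is real but harmless: the paper treats those ties the same way (it uses strict preference for the deviating singleton), and the formula for $g$ is unaffected in the interior of the stated regime.
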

There are $\mathbf{106}$ more cases in the above Lemma, where in each case, the prediction probability is strictly less than 1 (Sec. \ref{sec: proof_3_suppot_lemma} of SM). Unlike 2 support model (Sec. \ref{subsec: 2 support_full_info}) in this case they are \emph{non-convex}; a counter-example is available in Sec. \ref{subsec: counter_example_3_support} of Appendix

\section{Related work} 
\label{sec: related_work}
\textbf{Stability notions in hedonic games:} Researchers have extensively studied hedonic games in the computational social choice community. Some early works in coalition formation games describing the economic situations include that of \cite{dreze1980hedonic,elkind2009hedonic}. The agents collaborate and have personal preferences on different coalitions. Based on these preferences, agents seek a  partition of the agent set. However, which partition to form led to various notions of stability \citep{bogomolnaia2002stability, banerjee2001core, aziz2012existence}. Some of them are core stability, Nash stability, and perfect. In this work, we use core stability. In particular, the core for the simple hedonic games is available in \cite{banerjee2001core}.

\textbf{Representation of hedonic games:}
In many real-life scenarios, there are multiple agents, so storing the hedonic game in a machine takes exponential space. In literature, various concise representations are used because they (often) only require polynomial space. So, apart from various stability notions, much literature is on representing the hedonic games. Some of them includes individually rational lists of coalitions (IRLC) \citep{ballester2004np}, hedonic coalition nets (HCNs) \citep{elkind2009hedonic}, additively separable games, fractional hedonic games, $\mathcal{B}$-games, $\mathcal{W}$-games, top-responsive games \citep{alcalde2004researching}. A detailed survey of the hedonic games is available in \cite{Aziz_savani,aziz2019fractional, cechlarova2004stability}.
In our work, we only use partial information and  $\epsilon$-PAC stable notion for the existence of partition $\tilde{\pi}$. However, our work is valid for any class of hedonic games as long as both noise-free and noisy values have the same representation.

\textbf{Existence of solution concepts:} Another line of literature focuses on the algorithmic aspects of solution concepts of hedonic games. Regarding solution concepts like core stability and nature of partitions, there are two questions: does there exist a partition $\pi$ satisfying the solution concept's properties; if there is such a $\pi$, find one. To this end, for different classes of hedonic games, there are various algorithms and hardness results such as \cite{sung2010computational, rahwan2009anytime, woeginger2013core}. 

\textbf{PAC learning in hedonic games:} Uncertainty in the agents' preferences in the cooperative games has been carefully analyzed by \cite{balcan2015learning}. The authors used the PAC learning model 
to learn an underlying game. 
A new connection is established between PAC learnability and core stability for various classes of TU cooperative games.
It turned out that only a few classes of TU games are learnable and stable.  \cite{sliwinski2017learning} extended the PAC learning approach to the premise of hedonic games, where complete information about individual preferences is unavailable. We incorporate noise in the preferences and use PAC bounds to obtain the prediction probabilities.
\section{Discussion and looking ahead}
\label{sec: diss_looking_ahead}
This work considers the noisy hedonic game with partial information on preferences. Given a PAC stable partition of the noisy game, we find the prediction probability that \textit{unknown noise-free} game has PAC stable partition. This requires a combinatorial construct called agreement event and its probability. For $l\geq 2$ noise support, we obtain the agreement probability as a function of noise probabilities. For a user-given satisfaction value on agreement probability, we obtain the noise set such that a given partition is noise-robust. An interesting observation is that the prediction probability can be high for some high noise values with high probabilities. In particular, for a 2 agent game with $2$ noise support, we obtain the noise set for which the prediction probability is more than a user-given satisfaction value. 
We have noise robustness for the entire noise probability simplex for the prediction probability below $0.62$, i.e., the safety value. 
However, if the prediction probability function exceeds this safety value, the noise robust regime is non-convex. For the case of $3$ noise support, finding a safety value is difficult as it is a global minimum of a non-convex prediction probability function. We obtain the bounds on the extra noisy samples required to get the PAC stable partition in a noise-free game. These extra samples are polynomial in the number of agents and the user-given satisfaction value on agreement probability. 

The aspects we investigated offer many other rich possibilities; we mention some of them here. Firstly, since the prediction probability function for $3$ support noise distribution is non-convex, which renders the computation of the fundamental limit of noise robustness hard, one may investigate suitable approximations. Another possibility is to consider other noise models where the value of each coalition is perturbed at the individual player level. 



\acks{We would like to thank the anonymous Reviewers for their useful comments and suggestions. While working at this problem Prashant Trivedi was partially supported by the Teaching Assistantship offered by Government of India. Some part of this work was done when Nandyala Hemachandra was visiting IIM Bangalore on a sabbatical leave.}

\bibliography{acml22}

\newpage


\appendix
\vspace{-3mm}
\section{Proof of Theorem \ref{thm: n_agent_l_support_f_T_p}}
\label{proof: f_t_p_alpha_n_player_l_support}
\begin{proof}
We prove this via induction on noise support $l\geq 2$. The base case with $l=2$ support is available in Lemma \ref{lemma: f_t(p,alpha)} of the SM. Let us assume that it is true for $l=k$, i.e., there are sets $\mathcal{I}(\alpha_r,\alpha_s, T) = \left\lbrace \tilde{\pi}(i) \in \mathcal{R}(T) ~\bigg\vert~ \frac{\tilde{v}_i(\tilde{\pi}(i))}{\tilde{v}_i(T)} \geq \frac{\alpha_r}{\alpha_s} \right \rbrace$, such that $\alpha_s<\alpha_r, ~\forall~ 1 \leq s<r \leq k$.

For this $k$ we have $f_T(p_j, \alpha_j; j \in [k]) =: f_T(\textit{\textbf{p}},\boldsymbol{\alpha})$ (by assumption), here $[k] = \{1,2, \dots, k\}$ 
\begin{small}
\begin{equation*}
f_T(\textit{\textbf{p}},\boldsymbol{\alpha})  = \sum_{a=1}^{k} p_a \left(\sum_{b=1}^{a}p_b\right)^{|\mathcal{R}(T)|} + \sum_{r,s\in[k]:\alpha_r>\alpha_s} p_s^{|\mathcal{R}(T)| - |\mathcal{I}(\alpha_r,\alpha_s, T)| + 1} ((p_r + p_s )^{|\mathcal{I}(\alpha_r,\alpha_s,T)|}-p_s^{|\mathcal{I}(\alpha_r,\alpha_s, T)|}).
\end{equation*}
\end{small}
We will now show that this is true for $l=k+1$. To this end, for all $s\in [k]$ such that for $\alpha_{k+1}>\alpha_s$ we define $\mathcal{I}(\alpha_{k+1},\alpha_s, T) = \left\lbrace \tilde{\pi}(i) \in \mathcal{R}(T) ~\bigg\vert~ \frac{\tilde{v}_i(\tilde{\pi}(i))}{\tilde{v}_i(T)} \geq \frac{\alpha_{k+1}}{\alpha_s} \right \rbrace$. Now, there are two cases, $\mathcal{I}(\alpha_{k+1},\alpha_s, T) = \emptyset,~ \forall~ \alpha_s, ~s\in [k]$, or $\mathcal{I}(\alpha_{k+1},\alpha_s, T) \neq \emptyset$ for at least for one $s\in [k]$.

\textbf{Case 01:} [$\mathcal{I}(\alpha_{k+1},\alpha_s, T) = \emptyset, ~ \forall~ \alpha_s, ~s\in [k]$]. With one more element in noise support, apart from the existing  $\{\alpha(\tilde{\pi}(i)\}_{\tilde{\pi}(i) \in \mathcal{R}(T)}$, and $\alpha(T)$ for $k$ support case it will also have $\alpha(T) = \alpha_{k+1}$, and $\alpha(\tilde{\pi}(i)) \in \{\alpha_1, \alpha_2, \dots, \alpha_{k+1}\},~ \forall~ \tilde{\pi}(i) \in \mathcal{R}(T)$. The probability of such $\alpha$'s is $p_{k+1}\left(\sum_{b=1}^{k+1} p_b\right)^{|\mathcal{R}(T)|}$. Therefore, the overall probability is 
\begin{small}
\begin{equation*}
\sum_{a=1}^{k} p_a \left(\sum_{b=1}^{a}p_b\right)^{|\mathcal{R}(T)|} + p_{k+1}\left(\sum_{b=1}^{k+1} p_b\right)^{|\mathcal{R}(T)|}  = \sum_{a=1}^{k+1} p_a \left(\sum_{b=1}^{a}p_b\right)^{|\mathcal{R}(T)|}.
\end{equation*}
\end{small}

\textbf{Case 02:} [$\mathcal{I}(\alpha_{k+1},\alpha_s, T) \neq \emptyset$ for at least for one $s\in [k]$]. 
In this case, apart from the existing $\{\alpha(\tilde{\pi}(i))\}_{\tilde{\pi}(i) \in \mathcal{I}(\alpha_{r},\alpha_s, T)}$, and  $\alpha(T)$ for $k$ support, we also have $\{\alpha(\tilde{\pi}(i))\}_{\tilde{\pi}(i) \in \mathcal{I}(\alpha_{k+1},\alpha_s, T)}$, $\alpha(T)$ such that $\alpha(\tilde{\pi}(i)) = \alpha_s,~ \forall~\tilde{\pi}(i)\in \mathcal{R}(T) \setminus \mathcal{I}(\alpha_{k+1}, \alpha_s, T)$, and $\alpha(T) = \alpha_{k+1}$. Thus, for $k+1$  support the probability is:
\begin{small}
\begin{eqnarray*}
& & \sum_{r,s\in[k]:\alpha_r>\alpha_s} p_s^{|\mathcal{R}(T)| - |\mathcal{I}(\alpha_r,\alpha_s, T)| + 1} ((p_r + p_s)^{|\mathcal{I}(\alpha_r,\alpha_s,T)|}-p_s^{|\mathcal{I}(\alpha_r,\alpha_s, T)|}) 
\\
& & \hspace{15mm} +~ p_s^{|\mathcal{R}(T)| - |\mathcal{I}(\alpha_{k+1},\alpha_s, T)| + 1} ((p_{k+1} + p_s)^{|\mathcal{I}(\alpha_{k+1},\alpha_s,T)|}-p_s^{|\mathcal{I}(\alpha_{k+1}, \alpha_s, T)|}).
\end{eqnarray*}
\end{small}
From case 01 and case 02 above, for $k+1$ support we have,
\begin{small}
\begin{align*}
f_T(p_j, \alpha_j; j \in [k+1]) &= \sum_{r,s\in[k+1]:\alpha_r>\alpha_s} p_s^{|\mathcal{R}(T)| - |\mathcal{I}(\alpha_r,\alpha_s, T)| + 1} \left((p_r + p_s)^{|\mathcal{I}(\alpha_r,\alpha_s,T)|}-p_s^{|\mathcal{I}(\alpha_r,\alpha_s, T)|}\right)
\\
&  ~~+\sum_{a=1}^{k+1} p_a \left(\sum_{b=1}^{a}p_b\right)^{|\mathcal{R}(T)|}   
\end{align*}
\end{small}
Therefore, from the principle of Mathematical induction, this is true for any $l \geq 2$.
\end{proof}

\subsection{$n$ agents $2$ support partial information noise model}
\label{sec: N_agent_2_support_partial_info}
In a two support noise model we have $\mathcal{N}_{sp} = \{1,\alpha\}$ with $\alpha>1$, such that for any coalition $S\subseteq N$,  $\mathbb{P}[\alpha(S) = \alpha] = p = 1-\mathbb{P}[\alpha(S) = 1]$. We derive the agreement probability, $f_T(p, \alpha)$ in the following lemma. Note that this lemma serves as the base case in the Mathematical induction based proof of the Theorem \ref{thm: n_agent_l_support_f_T_p} in the main paper.

\begin{lemma}
\label{lemma: f_t(p,alpha)}
Let $\tilde{\pi}$ be $\tilde{\epsilon}$-PAC stable partition of noisy game $(N, \tilde{\textbf{\textit{v}}})$, and let $\tilde{\pi}$ be a $\epsilon$-PAC stable outcome of the noise-free game $(N, \textbf{\textit{v}})$, where $\epsilon$ is identified in Theorem \ref{thm: prob_same_partition_n_partial_info} of the paper. Then the agreement probability $f_T({p}, {\alpha})$ is given by
\begin{equation*}
f_T({p}, {\alpha}) = \begin{cases} 1, &if ~\tilde{\pi}(i) =T, ~\forall~i\in T
\\
p + (1-p)^{|\mathcal{R}(T)| + 1 - |\mathcal{I}(\alpha, T)|},  & otherwise
\end{cases}
\end{equation*}
where $\mathcal{I}(\alpha, T) = \left\lbrace \tilde{\pi}(i) \in \mathcal{R}(T) ~\bigg|~ \frac{\tilde{v}_i(\tilde{\pi}(i))}{\tilde{v}_i(T)} \geq \alpha \right\rbrace.$
\end{lemma}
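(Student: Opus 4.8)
The plan is to compute $f_T(p,\alpha)=\mathbb{P}[M(\tilde{\pi},T)]$ directly, using the characterization already extracted in Theorem~\ref{thm: prob_same_partition_n_partial_info}, namely $f_T=\mathbb{P}[\cap_{i\in T}\{v_i(\tilde{\pi}(i))\geq v_i(T)\,\cap\,\tilde{v}_i(\tilde{\pi}(i))\geq \tilde{v}_i(T)\}]$. The viewpoint I would adopt is that the observed noisy valuations $\tilde{v}$ are fixed data and the per-coalition labels $\alpha(\cdot)\in\{1,\alpha\}$ (i.i.d., with $\mathbb{P}[\alpha(S)=\alpha]=p$) carry all the randomness, so that the noise-free values are recovered as $v_i(S)=\tilde{v}_i(S)/\alpha(S)$; this is the reading under which $\mathcal{I}(\alpha,T)$ is a fixed set determined by the data. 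First I would dispose of the trivial branch: if $\tilde{\pi}(i)=T$ for every $i\in T$, then each block $\tilde{\pi}(i)$ coincides with $T$, both defining inequalities hold as equalities for every label assignment, and $f_T=1$.

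For the non-trivial branch I would rewrite, for a fixed agent $i$, the two inequalities in terms of the observed ratio $\tilde{\rho}_i:=\tilde{v}_i(\tilde{\pi}(i))/\tilde{v}_i(T)$ and the noise ratio $\alpha(\tilde{\pi}(i))/\alpha(T)$. The noisy inequality $\tilde{v}_i(\tilde{\pi}(i))\geq \tilde{v}_i(T)$ is simply $\tilde{\rho}_i\geq 1$ and is pinned down by the data, while substituting $v_i(S)=\tilde{v}_i(S)/\alpha(S)$ turns the noise-free inequality $v_i(\tilde{\pi}(i))\geq v_i(T)$ into $\tilde{\rho}_i\geq \alpha(\tilde{\pi}(i))/\alpha(T)$. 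Since $\alpha>1$, the ratio $\alpha(\tilde{\pi}(i))/\alpha(T)$ takes only the values $1/\alpha,1,\alpha$, and in the first two cases the noise-free inequality is already implied by $\tilde{\rho}_i\geq 1$. Hence agreement can fail for agent $i$ only in the single configuration $\alpha(\tilde{\pi}(i))=\alpha,\ \alpha(T)=1$ together with $\tilde{\rho}_i<\alpha$, which is exactly the condition $\tilde{\pi}(i)\notin\mathcal{I}(\alpha,T)$.

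I would then collapse the whole intersection into one event: agreement holds for all $i\in T$ if and only if either $\alpha(T)=\alpha$, or $\alpha(T)=1$ and every coalition in $\mathcal{R}(T)\setminus\mathcal{I}(\alpha,T)$ also receives the label $1$ (coalitions in $\mathcal{I}(\alpha,T)$ never obstruct, and any block carrying the label $1$ never obstructs). In the non-trivial branch one has $T\notin\mathcal{R}(T)$, since $T=\tilde{\pi}(j)$ for some $j\in T$ would force $\tilde{\pi}(i)=T$ for all $i\in T$; consequently $\alpha(T)$ is independent of the labels of the distinct coalitions of $\mathcal{R}(T)$, and those labels are mutually independent. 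Conditioning on $\alpha(T)$ and multiplying then gives $f_T(p,\alpha)=\mathbb{P}[\alpha(T)=\alpha]+\mathbb{P}[\alpha(T)=1]\cdot\mathbb{P}[\text{all of }\mathcal{R}(T)\setminus\mathcal{I}(\alpha,T)\text{ get }1]=p+(1-p)(1-p)^{|\mathcal{R}(T)|-|\mathcal{I}(\alpha,T)|}=p+(1-p)^{|\mathcal{R}(T)|+1-|\mathcal{I}(\alpha,T)|}$, as claimed.

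The main obstacle I anticipate is the bookkeeping hidden in indexing: $\mathcal{R}(T)$ and $\mathcal{I}(\alpha,T)$ are sets of \emph{coalitions}, whereas the inequalities and the ratios $\tilde{\rho}_i$ are indexed by \emph{agents}, and several agents of $T$ may share the same block. One must check that the obstruction events factor cleanly over distinct coalitions (so that each contributes an independent factor $(1-p)$ when it lies outside $\mathcal{I}(\alpha,T)$), and that membership in $\mathcal{I}(\alpha,T)$ is well posed when a block is reached by more than one agent. It is also worth making explicit that conditioning on the observed noisy preferences $\tilde{\rho}_i\geq 1$ is consistent with $\tilde{\pi}$ being the selected partition, so that $M(\tilde{\pi},T)$ is nonempty and the two-way split on $\alpha(T)$ above is genuinely exhaustive.
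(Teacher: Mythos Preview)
Your argument is correct and reaches the same formula, but by a more direct route than the paper. The paper proceeds by explicit enumeration: it splits into the cases $\mathcal{I}(\alpha,T)=\emptyset$ and $\mathcal{I}(\alpha,T)\neq\emptyset$, and in each case lists the admissible label configurations $(\{\alpha(\tilde{\pi}(i))\}_{\tilde{\pi}(i)\in\mathcal{R}(T)},\alpha(T))$ that lie in $M(\tilde{\pi},T)$, summing binomial terms such as $\sum_{k=1}^{|\mathcal{I}(\alpha,T)|}\binom{|\mathcal{I}(\alpha,T)|}{k}p^{k}(1-p)^{|\mathcal{R}(T)|-|\mathcal{I}(\alpha,T)|+1}$ and then simplifying the three pieces to $p+(1-p)^{|\mathcal{R}(T)|+1-|\mathcal{I}(\alpha,T)|}$. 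You instead isolate the unique obstruction per agent (namely $\alpha(\tilde{\pi}(i))=\alpha$, $\alpha(T)=1$, $\tilde{\rho}_i<\alpha$), take the complement, and use independence of the labels of $T$ and the distinct blocks in $\mathcal{R}(T)$ to obtain the closed form in a single step; your observation that $T\notin\mathcal{R}(T)$ in the non-trivial branch is exactly what justifies this factorisation and is not made explicit in the paper. Your derivation is cleaner and scales more transparently to the $l$-support case. The bookkeeping concern you raise at the end---that $\mathcal{I}(\alpha,T)$ is indexed by coalitions while the ratios $\tilde{\rho}_i$ are indexed by agents, so membership in $\mathcal{I}(\alpha,T)$ may be ambiguous when a block is reached by several agents with different $\tilde{\rho}_i$---is legitimate, but the paper's proof is equally silent on it, so this is not a gap in your argument relative to the paper.
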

\begin{proof}
Recall from Theorem \ref{thm: prob_same_partition_n_partial_info} in main paper we have the following 
\begin{equation*}
\mathbb{P}_{T\sim \mathcal{\tilde{D}}}[\cup_{i\in T}~ v_i(\tilde{\pi}(i)) \geq v_i(T) ] \geq (1-\tilde{\epsilon}) f_T(\textit{\textbf{p}}, \boldsymbol{\alpha}).
\end{equation*}
Also, recall that the agreement event is defined as 
\begin{small}
\begin{equation}
M(\tilde{\pi}, T) \coloneqq \{( \{\alpha(\tilde{\pi}(i))\}_{\tilde{\pi}(i)\in \mathcal{R}(T)}, \alpha(T)) : \cap_{i\in T} \{v_i(\tilde{\pi}(i)) \geq v_i(T)  ~\cap~ \alpha(\tilde{\pi}(i))  {v}_i(\tilde{\pi}(i))  \geq \alpha(T){v}_i(T)\}\}, \nonumber
\end{equation}
\end{small}
and $f_T({p}, {\alpha})  = \mathbb{P}[M(\tilde{\pi}, T)]$ is the probability of agreement event. Moreover,
\begin{equation*}
\mathcal{R}(T) \coloneqq \{\tilde{\pi}(i) ~|~ i\in T\}; \hspace{3 mm}
\mathcal{I}(\alpha, T) = \left\lbrace \tilde{\pi}(i) \in \mathcal{R}(T) ~\bigg|~ \frac{\tilde{v}_i(\tilde{\pi}(i))}{\tilde{v}_i(T)} \geq \alpha \right\rbrace.
\end{equation*}
To find the agreement probability, $f_T(\textbf{\textit{p}}, \boldsymbol{\alpha})$ we consider two cases $\mathcal{I}(\alpha, T) = \emptyset$, and $\mathcal{I}(\alpha, T) \neq \emptyset$. For these cases we identify the possible noise values $\{\alpha(\tilde{\pi}(i))\}_{\tilde{\pi}(i) \in \mathcal{R}(T)}$, $\alpha(T)$  that are element of $M(\tilde{\pi}, T)$.
\begin{itemize}
\item \textbf{Case 01:} $[\mathcal{I}(\alpha, T) = \emptyset]$. In this case, we have following elements in $M(\tilde{\pi}, T)$.
\begin{itemize}
\item $\alpha(\tilde{\pi}(i)) = 1,~ \forall~ \tilde{\pi}(i)\in \mathcal{R}(T)$ and $ \alpha(T) = 1$. The probability of such choice of $\alpha$'s is
\begin{equation}
\label{eqn: ft_part1_2_support}
    (1-p)^{|\mathcal{R}(T)| + 1}.
\end{equation}
\item $\alpha(\tilde{\pi}(i)) = \alpha$ for \textbf{exactly  one} $\tilde{\pi}(i)\in \mathcal{R}(T)$, and $\alpha(\tilde{\pi}(i)) = 1$ for remaining coalitions in $\mathcal{R}(T)$, and $\alpha(T) = \alpha$. Probability of such choice of $\alpha$'s is $(p \times (1-p)^{|\mathcal{R}(T)|- 1}) \times p$. And there are $|\mathcal{R}(T)| \choose 1$ ways of selecting \textbf{exactly one} coalition $\tilde{\pi}(i) \in \mathcal{R}(T)$. Thus, the probability of above $\alpha$'s is ${|\mathcal{R}(T)| \choose 1} p(1-p)^{|\mathcal{R}(T)| - 1} p$.

In general, for any $k \in \{0, 1, \dots, |\mathcal{R}(T)|\}$ coalitions $\tilde{\pi}(i) \in \mathcal{R}(T)$, take $\alpha(\tilde{\pi}(i))= \alpha$. Moreover, $\alpha(\tilde{\pi}(i))= 1$ for remaining $|\mathcal{R}(T)| - k$ coalitions and take $ \alpha(T) = \alpha$. Further, we have ${|\mathcal{R}(T)| \choose k}$ similar choices. So, the probability of the above choice of $\alpha$'s is 
\begin{small}
\begin{equation}
\label{eqn: ft_part2_2_support}
\begin{aligned}
\sum_{k=0}^{|\mathcal{R}(T)|} \left\lbrace{|\mathcal{R}(T)| \choose k} p^k(1-p)^{|\mathcal{R}(T)| - k} \right\rbrace \times p &= 
p \times \left(\sum_{k=0}^{|\mathcal{R}(T)|} {|\mathcal{R}(T)| \choose k} p^k(1-p)^{|\mathcal{R}(T)| - k}  \right) 
\\
&= p.
\end{aligned}
\end{equation}
\end{small}
This is because for any coalition $S$, we have $\mathbb{P}[\alpha(S) = \alpha] =p = 1-\mathbb{P}[\alpha(S) =1]$ and the fact that binomial probabilities summed up to 1.
\end{itemize}

\item \textbf{Case 02:} $[\mathcal{I}(\alpha, T) \neq \emptyset]$. Then, in addition to the above possible cases, we will have a few other cases, which are:
\begin{itemize}
\item $\alpha(\tilde{\pi}(i)) = \alpha$ for \textbf{exactly one} $\tilde{\pi}(i) \in \mathcal{I}(\alpha, T)$, $\alpha(\tilde{\pi}(i)) = 1$ for remaining coalitions in $\mathcal{R}(T)$ and $\alpha(T) = 1$. Probability of such choice of $\alpha$'s is $p(1-p)^{|\mathcal{R}(T)| - 1} (1- p)  = p(1-p)^{|\mathcal{R}(T)|}$. And there are $|\mathcal{I}(\alpha,T)| \choose 1$ ways of choosing \textbf{exactly one} coalition $\tilde{\pi}(i) \in \mathcal{I}(\alpha, T) $. Thus the overall probability is ${|\mathcal{I}(\alpha,T)| \choose 1} p(1-p)^{|\mathcal{R}(T)|} $.

In general, we have $\alpha(\tilde{\pi}(i)) = \alpha$ for any $k \in \{1, 2, \dots, |\mathcal{I}(\alpha,T)| \}$ coalitions $\tilde{\pi}(i) \in \mathcal{I}(\alpha, T)$. Moreover, $\alpha(\tilde{\pi}(i)) = 1$ for remaining $|\mathcal{R}(T)| - k$ coalitions, and $\alpha(T) = 1$. Probability of such choice of $\alpha$'s is $p^k (1-p)^{|\mathcal{R}(T)| - |\mathcal{I}(\alpha,T)|} (1- p)$. And there are $|\mathcal{I}(\alpha,T)| \choose k$ ways of selecting $k$ coalitions $\tilde{\pi}(i) \in \mathcal{I}(\alpha, T)$. Thus the overall probability is 
\begin{equation}
\label{eqn: ft_part3_2_support}
\sum_{k=1}^{|\mathcal{I}(\alpha,T)|} {|\mathcal{I}(\alpha,T)| \choose k} p^k (1-p)^{|\mathcal{R}(T)| - |\mathcal{I}(\alpha,T)|} (1-p).
\end{equation}
\end{itemize}
\end{itemize}
The probability of event $M(\tilde{\pi}, T)$, i.e., $\mathbb{P}[M(\tilde{\pi}, T)]$ is obtained by adding probabilities given in Equations \eqref{eqn: ft_part1_2_support}, \eqref{eqn: ft_part2_2_support} and \eqref{eqn: ft_part3_2_support}.
\begin{eqnarray*}
\mathbb{P}[M(\tilde{\pi}, T)]
&=& (1-p)^{|\mathcal{R}(T)| + 1}  + p + \sum_{k=1}^{|\mathcal{I}(\alpha,T)|} {|\mathcal{I}(\alpha,T)| \choose k} p^k (1-p)^{|\mathcal{R}(T)| - |\mathcal{I}(\alpha,T)|} (1-p)
\\
&=& (1-p)^{|\mathcal{R}(T)| + 1} + p + (1-p)^{|\mathcal{R}(T)| - |\mathcal{I}(\alpha,T)| + 1} \Bigg[ 1 -  (1-p)^{|\mathcal{I}(\alpha, T)|} \Bigg ]
\\
&=& p + (1-p)^{|\mathcal{R}(T)| - |\mathcal{I}(\alpha,T)| + 1}.
\end{eqnarray*}
This ends the proof.
\end{proof}



If $\tilde{\pi}(i) \neq T$ for at least one $i\in T$, then  $f_T({p}, {\alpha}) = 1,~ \forall~\alpha$ if and only if $p=0$ or $p=1$. 
That is, if the value of all the coalitions are retained, or if values
of all of them are inflated by $\alpha$, then for all $i\in T$, and for all $ \tilde{\pi}(i)\in \mathcal{R}(T)$,  one has $\tilde{\pi}(i) \succeq_i T$, and $\tilde{\pi}(i) \succeq_i^{\prime}  T$. Thus,  $\tilde{\pi}$ is $\epsilon$-PAC stable outcome of \textit{unknown noise-free} game and hence $\tilde{\pi}$ is noise-robust.

\begin{corollary}
When $\tilde{\pi} = N$, i.e., the grand coalition is $\tilde{\epsilon}$-PAC stable outcome in the noisy game, then $\mathcal{R}(T) = \{N\}$ for any coalition $T$. Thus, $\mathcal{I}(\alpha, T) = \emptyset$, or $\mathcal{I}(\alpha, T) = \{N\}$. Therefore, $f_T(p,\alpha)$ simplifies to 
\begin{equation}
f_T({p}, {\alpha}) = \begin{cases} 1, & ~if~ \mathcal{I}(\alpha, T) = \{N\}
\\
(1-p)^2 + p, & ~if~ \mathcal{I}(\alpha, T) = \emptyset.
\end{cases}
\end{equation}
\end{corollary}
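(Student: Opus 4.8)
The plan is to read the result off as the $|\mathcal{R}(T)|=1$ specialization of Lemma \ref{lemma: f_t(p,alpha)}. First I would establish the structural claim: since $\tilde{\pi}=N$ is the grand coalition, the partition consists of the single block $N$, so $\tilde{\pi}(i)=N$ for every $i\in N$. Hence for an arbitrary coalition $T$ the set $\mathcal{R}(T)=\{\tilde{\pi}(i)\mid i\in T\}$ contains only the coalition $N$, giving $|\mathcal{R}(T)|=1$.

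Next I would determine $\mathcal{I}(\alpha,T)$. Because $\mathcal{I}(\alpha,T)\subseteq\mathcal{R}(T)=\{N\}$, it can only be $\emptyset$ or $\{N\}$, so $|\mathcal{I}(\alpha,T)|\in\{0,1\}$; membership of $N$ is governed by whether $\tilde{v}_i(N)/\tilde{v}_i(T)\geq\alpha$. Then I would substitute into the ``otherwise'' branch $f_T(p,\alpha)=p+(1-p)^{|\mathcal{R}(T)|+1-|\mathcal{I}(\alpha,T)|}$ of Lemma \ref{lemma: f_t(p,alpha)}. With $|\mathcal{R}(T)|=1$ the exponent equals $2-|\mathcal{I}(\alpha,T)|$: for $\mathcal{I}(\alpha,T)=\emptyset$ it yields $(1-p)^2+p$, and for $\mathcal{I}(\alpha,T)=\{N\}$ it collapses to $p+(1-p)=1$, matching the two displayed cases exactly.

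The only point needing care is the boundary coalition $T=N$: there $\tilde{v}_i(N)/\tilde{v}_i(N)=1<\alpha$, so $\mathcal{I}(\alpha,N)=\emptyset$, yet $T=N$ also triggers the first branch $\tilde{\pi}(i)=T$ of the Lemma, which returns $f_T=1$ rather than $(1-p)^2+p$. I would resolve this by observing that $T=N$ is the partition $\tilde{\pi}$ itself and therefore cannot block it, so it is not a relevant blocking coalition; for every proper $T\subsetneq N$ the ``otherwise'' branch applies and the dichotomy above is exhaustive and consistent (the $\mathcal{I}(\alpha,T)=\{N\}$ line absorbs the $f_T=1$ value). This mild boundary bookkeeping is the main (and essentially only) obstacle; the remainder is direct substitution into the Lemma.
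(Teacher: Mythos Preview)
Your proposal is correct and is exactly the intended argument: the paper states the corollary without proof, and the natural route is precisely your direct substitution of $|\mathcal{R}(T)|=1$ and $|\mathcal{I}(\alpha,T)|\in\{0,1\}$ into the ``otherwise'' branch of Lemma~\ref{lemma: f_t(p,alpha)}. Your handling of the boundary case $T=N$ is a welcome clarification that the paper glosses over.
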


\subsection{$n$ agents $2$ support partial information noisy games without core}
\label{sec: n_agents_2_support_non_existence}
Suppose $\tilde{\pi}$ is not $\tilde{\epsilon}$-PAC stable partition fo the noisy game $(N, \tilde{\textbf{\textit{v}}})$. Moreover, let the noise support be $\mathcal{N}_{sp} = \{1,\alpha\}$, the following lemma provides the expression of $h_T({p},{\alpha})$. Note that this lemma serves as the base case for the Mathematical induction based proof of Theorem \ref{thm: n_agent_l_support_h_T_p} in the main paper.

\begin{lemma}
\label{lemma: h_T(p,alpha)_2_support}
Suppose $\tilde{\pi}$ is not a $\tilde{\epsilon}$-PAC stable outcome of the noisy game $(N, \tilde{\textbf{\textit{v}}})$, then the agreement probability $h_T(p,\alpha)$ for noise support $\mathcal{N}_{sp} \in \{1,\alpha\}$ is given by
\begin{equation}
h_T({p},{\alpha}) = \begin{cases} 1, & if ~\tilde{\pi}(i) =T, ~\forall~i\in T
\\
(1-p) + p^{|\mathcal{R}(T)| + 1 - |\mathcal{J}(\alpha, T)|},  & otherwise,
\end{cases}
\end{equation}
where $\mathcal{J}(\alpha, T) \coloneqq \left\lbrace \tilde{\pi}(i) \in \mathcal{R}(T) ~\bigg|~ \frac{\tilde{v}_i(\tilde{\pi}(i))}{\tilde{v}_i(T)} \geq \frac{1}{\alpha} \right\rbrace.$
\end{lemma}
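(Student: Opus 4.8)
The plan is to prove the lemma by the same direct enumeration used for Lemma~\ref{lemma: f_t(p,alpha)}, since the $2$ support case with $\tilde{\pi}$ not $\tilde{\epsilon}$-PAC stable is the ``dual'' of the PAC-stable case treated there. First I would record the trivial case: if $\tilde{\pi}(i)=T$ for every $i\in T$ then $\mathcal{R}(T)=\{T\}$ and both defining inequalities of $F(T,\tilde{\pi})$ hold with equality for every realization of the noise, so $h_T(p,\alpha)=1$. For the main case I would fix the coalition $T$, treat the observed noisy ratios $\tilde{v}_i(\tilde{\pi}(i))/\tilde{v}_i(T)$ as known constants, and use the multiplicative relation $\tilde{v}_i(S)=\alpha(S)v_i(S)$ to rewrite the noise-free inequality $v_i(T)\ge v_i(\tilde{\pi}(i))$ of the event $F(T,\tilde{\pi})$ equivalently as $\alpha(\tilde{\pi}(i))/\alpha(T)\ge \tilde{v}_i(\tilde{\pi}(i))/\tilde{v}_i(T)$. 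The accompanying noisy inequality $\tilde{v}_i(T)\ge \tilde{v}_i(\tilde{\pi}(i))$ is supplied by the hypothesis~\eqref{eqn: no_core} that $T$ core blocks $\tilde{\pi}$, so it contributes no extra randomness; this is the step that reduces $h_T$ to a pure noise-counting problem.

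Next I would condition on the noise drawn on $T$, i.e. on $\alpha(T)\in\{1,\alpha\}$, exactly as the appendix proof of Lemma~\ref{lemma: f_t(p,alpha)} conditions on the joint configuration of $\alpha(\tilde{\pi}(i))$ and $\alpha(T)$. When $\alpha(T)=1$ the ratio $\alpha(\tilde{\pi}(i))/\alpha(T)\in\{1,\alpha\}$ is at least $1$, which already dominates every observed ratio (these are $<1$ because $T$ blocks $\tilde{\pi}$ in the noisy game), so $F(T,\tilde{\pi})$ holds for all choices of the remaining noise; this yields the leading $(1-p)$ term. When $\alpha(T)=\alpha$ the only way agent $i$ can fail the noise-free inequality is the adverse configuration $\alpha(\tilde{\pi}(i))=1$, and that configuration breaks the inequality precisely for the coalitions whose observed ratio crosses the threshold $1/\alpha$; the constrained coalitions are then forced to carry a compensating noise level $\alpha$, while the rest are free. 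Collecting the independent per-coalition probabilities over $\mathcal{R}(T)$ and summing the resulting binomial series (the same collapse $\sum_k \binom{m}{k}p^k(1-p)^{m-k}=1$ used for Lemma~\ref{lemma: f_t(p,alpha)}) produces the closed form, which I would finally check against Theorem~\ref{thm: n_agent_l_support_h_T_p} specialized to $l=2$ to confirm the base case of that induction.

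The main obstacle I anticipate is the bookkeeping in the second case: getting right \emph{which} coalitions of $\mathcal{R}(T)$ are constrained and hence which exponent of $p$ appears. Because $\mathcal{J}(\alpha,T)$ is defined through the threshold $1/\alpha$ rather than $\alpha$, it plays the dual role to $\mathcal{I}(\alpha,T)$, and one must be careful about whether $\mathcal{J}(\alpha,T)$ collects the coalitions that \emph{survive} the adverse noise or the ones that must be \emph{compensated}, as well as about the boundary case $\tilde{v}_i(\tilde{\pi}(i))/\tilde{v}_i(T)=1/\alpha$. I would pin this down by first verifying the single-coalition instance $|\mathcal{R}(T)|=1$ by hand, where the four noise configurations can be listed explicitly and the probability computed directly, and only then assert the general count; this sanity check is the cheapest way to make sure the final exponent and the direction of the inequality defining $\mathcal{J}(\alpha,T)$ are mutually consistent.
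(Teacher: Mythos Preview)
Your approach is essentially the paper's: both compute $h_T(p,\alpha)$ by enumerating which noise configurations $(\alpha(T),\{\alpha(\tilde{\pi}(i))\})\in\{1,\alpha\}^{|\mathcal{R}(T)|+1}$ lie in $F(T,\tilde{\pi})$ and then collapsing the resulting independent-Bernoulli sum via the binomial identity. The only difference is organizational---the paper splits first on whether $\mathcal{J}(\alpha,T)$ is empty and then lists admissible patterns, while you condition on $\alpha(T)$ first---and the bookkeeping worry you flag about which side of the $1/\alpha$ threshold renders a coalition ``free'' versus ``constrained'' is precisely the place where you must match the paper's convention (in its Case~02 the coalitions \emph{in} $\mathcal{J}(\alpha,T)$ are the ones permitted to take $\alpha(\tilde{\pi}(i))=1$ when $\alpha(T)=\alpha$) to recover the stated exponent $|\mathcal{R}(T)|+1-|\mathcal{J}(\alpha,T)|$.
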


\begin{proof}
From Theorem \ref{thm: unstab_unstable} of the main paper, we have the following
\begin{equation*}
\mathbb{P}[\cup_{i\in T} v_i(\tilde{\pi}(i)) \geq v_i(T) ] \geq   (1-\tilde{\epsilon}) h_T(\textit{\textbf{p}},\boldsymbol{\alpha}).
\end{equation*}
To get $h_T({p}, {\alpha})  \coloneqq \mathbb{P}[F(T, \tilde{\pi})]$ we consider two cases viz. $\mathcal{J}(\alpha, T) = \emptyset$, and $\mathcal{J}(\alpha, T) \neq \emptyset$. For these cases, we identify the possible noise values elements of $F(T, \tilde{\pi})$.

\begin{itemize}
\item \textbf{Case 01:} [$\mathcal{J}(\alpha, T) = \emptyset$]. In this case, we have the following possibilities:
\begin{itemize}
\item $\alpha(\tilde{\pi}(i)) = \alpha,~ \forall~ \tilde{\pi}(i)\in \mathcal{R}(T)$, and $ \alpha(T) = \alpha$. Probability of such a choice of $\alpha$'s is
\begin{equation}
\label{eqn: ht_part1_2_support}
p^{|\mathcal{R}(T)| + 1}.
\end{equation}
\item $\alpha(\tilde{\pi}(i)) = 1$ for $k\in \{0, 1, \dots, |\mathcal{R}(T)|\}$ coalitions $\tilde{\pi}(i)\in \mathcal{R}(T)$, and $\alpha(\tilde{\pi}(i)) = \alpha$ for remaining $|\mathcal{R}(T)| - k$ coalitions. Moreover, $\alpha(T) = 1$. Probability of such choice of $\alpha$'s is $(1-p)^{k} p^{|\mathcal{R}(T)| - k} (1-p)$. Further, there are $|\mathcal{R}(T)| \choose k$ ways of selecting $k$ coalitions  $\tilde{\pi}(i)$ from $\mathcal{R}(T)$. Thus, the overall probability is 
\begin{equation}
\label{eqn: ht_part2_2_support}
\sum_{k=0}^{|\mathcal{R}(T)|} {|\mathcal{R}(T)| \choose k} (1-p)^{k} p^{|\mathcal{R}(T)| - k} (1-p) = 1-p.
\end{equation}
\end{itemize}

\item \textbf{Case 02:} [$\mathcal{J}(\alpha, T) \neq \emptyset$]. In addition to the above possible cases, we have a few other cases:
\begin{itemize}
\item $\alpha(\tilde{\pi}(i)) = 1$ for 
any $k\in \{1, 2, \dots, |\mathcal{J}(\alpha, T)| \}$ coalitions $\tilde{\pi}(i) \in \mathcal{J}(\alpha, T)$. Moreover,  $\alpha(\tilde{\pi}(i)) = \alpha$ for remaining coalitions in $\mathcal{R}(T)$. Also, $\alpha(T) = \alpha$. Probability of such choice of $\alpha$'s is $(1-p)^k p^{|\mathcal{R}(T)| - k} p  = (1-p)^{k} p^{|\mathcal{R}(T)|-k+1} $. And there are $|\mathcal{J}(\alpha,T)| \choose k$ ways of selecting $k$ coalitions $\tilde{\pi}(i) \in \mathcal{J}(\alpha,T)$. Thus the overall probability is
\begin{equation}
\label{eqn: ht_part3_2_support}
\sum_{k=1}^{|\mathcal{J}(\alpha, T)|}{|\mathcal{J}(\alpha,T)| \choose k} (1-p)^{k} p^{|\mathcal{R}(T)|-k+1}.  
\end{equation}
\end{itemize}
\end{itemize}
The probability $\mathbb{P}[F(T, \tilde{\pi})]$ is obtained by adding probabilities given in Equations \eqref{eqn: ht_part1_2_support}, \eqref{eqn: ht_part2_2_support} and \eqref{eqn: ht_part3_2_support}.
\begin{eqnarray*}
\mathbb{P}[F(T, \tilde{\pi})]
&=& p^{|\mathcal{R}(T)| + 1} + (1-p) + \sum_{k=1}^{|\mathcal{J}(\alpha, T)|}{|\mathcal{J}(\alpha,T)| \choose k} (1-p)^{k} p^{|\mathcal{R}(T)|-k+1}
\\
&=& p^{|\mathcal{R}(T)| + 1} + (1-p) + p^{|\mathcal{R}(T)| - |\mathcal{J}(\alpha,T)| + 1} \Bigg[ 1 - p^{|\mathcal{J}(\alpha, T)|} \Bigg ]
\\
&=& (1-p) + p^{|\mathcal{R}(T)| - |\mathcal{J}(\alpha,T)| + 1}.
\end{eqnarray*}
This ends the proof.
\end{proof}


If $\tilde{\pi}(i) \neq T$ for at least one $i\in T$, then  $h_T({p}, {\alpha}) = 1,~ \forall~\alpha$ if $p=0$ or $p=1$. That is, if the value of all coalitions are retained, or if value of all of them are inflated by $\alpha$, then coalition $T \succeq_i \tilde{\pi}(i) $, and $T \succeq_i^{\prime} \tilde{\pi}(i)$ for all $i\in T$. Thus, neither noise-free nor noisy game will have $\tilde{\pi}$ as PAC stable outcome. Moreover, if we allow $h_T(p,\alpha)= \eta$ for some user-given satisfaction $\eta$, we get a noise set in accordance to the Remark \ref{remark: non_existence_zeta} in the main paper. In this case, the noise set also depends on $|\mathcal{R}(T)|$, and $|\mathcal{J}(\alpha, T)|$ for coalition $T$. Hence, the partition is $\eta$ noise-robust non core-stable for the noise set $I^{\star}(T, \eta)$.

\subsection{Proof of Theorem \ref{thm: n_agent_l_support_h_T_p} }
\label{proof: h_t_p_alpha_n_player_l_support}

\begin{proof}
We will prove this via Mathematical induction on the noise support $l\geq 2$. Clearly, this is true for $l=2$ (from Lemma \ref{lemma: h_T(p,alpha)_2_support} above). Let us assume that it is true for $l=k$, i.e.; there are sets 
\begin{equation*}
\mathcal{J}(\alpha_r,\alpha_s, T) = \left\lbrace \tilde{\pi}(i) \in \mathcal{R}(T) ~\bigg\vert~ \frac{\tilde{v}_i(\tilde{\pi}(i))}{\tilde{v}_i(T)} \geq \frac{\alpha_s}{\alpha_r} \right \rbrace,
\end{equation*} 
such that the support $\alpha(S)= \{\alpha_1,\dots,\alpha_k\}, ~\forall~ S\subseteq N$ where $\alpha_s<\alpha_r, ~ \forall~ 1\leq s<r \leq k$. For this $k$ we have $f_T(p_j,\alpha_j: j\in [k]) =: h_T(\textit{\textbf{p}}, \boldsymbol{\alpha}) $ (by assumption)
\begin{small}
\begin{equation*}
h_T(\textit{\textbf{p}}, \boldsymbol{\alpha})  = \sum_{a=1}^{k} p_a \left(\sum_{b=a}^{k}p_b\right)^{|\mathcal{R}(T)|} + \sum_{r,s\in[k]:\alpha_r>\alpha_s} p_r^{|\mathcal{R}(T)| - |\mathcal{J}(\alpha_r,\alpha_s, T)| + 1} ((p_r + p_s )^{|\mathcal{J}(\alpha_r,\alpha_s,T)|}-p_r^{|\mathcal{J}(\alpha_r,\alpha_s, T)|}).
\end{equation*}
\end{small}
We will now show that this is true for $l=k+1$. To this end define $\mathcal{J}(\alpha_{k+1},\alpha_s, T)$  for all $s\in [k]$ such that $\alpha_{k+1}>\alpha_s$
\begin{equation*}
\mathcal{J}(\alpha_{k+1},\alpha_s, T) = \left\lbrace \tilde{\pi}(i) \in \mathcal{R}(T) ~\bigg\vert~ \frac{\tilde{v}_i(\tilde{\pi}(i))}{\tilde{v}_i(T)} \geq \frac{\alpha_{s}}{\alpha_{k+1}} \right \rbrace.
\end{equation*}
Now, there are two cases, $\mathcal{J}(\alpha_{k+1},\alpha_s, T) = \emptyset,~ \forall~ \alpha_s, ~s\in [k]$, or $\mathcal{J}(\alpha_{k+1},\alpha_s, T) \neq \emptyset$ for at least for one $s\in [k]$.

\textbf{Case 01:} [$\mathcal{J}(\alpha_{k+1},\alpha_s, T) = \emptyset,~ \forall~ \alpha_s, ~s\in [k]$]. Apart from the existing $\{\alpha(\tilde{\pi}(i)\}_{\tilde{\pi}(i) \in \mathcal{R}(T)}$ and  $\alpha(T)$ for $k$ support case, with this extra $k+1$, it will also have $\alpha(T) = \alpha_{k+1}$ and $\alpha(\tilde{\pi}(i))= \alpha_{k+1},~ \forall~ \tilde{\pi}(i) \in \mathcal{R}(T)$. The probability of such extra $\alpha$'s is 
$p_{k+1}\left(\sum_{b=k+1}^{k+1} p_b\right)^{|\mathcal{R}(T)|}$. Therefore, the overall probability is 
$$
\sum_{a=1}^{k} p_a \left(\sum_{b=a}^{k}p_b\right)^{|\mathcal{R}(T)|} + p_{k+1}\left(\sum_{b=k+1}^{k+1} p_b\right)^{|\mathcal{R}(T)|}  = \sum_{a=1}^{k+1} p_a \left(\sum_{b=a}^{k+1}p_b\right)^{|\mathcal{R}(T)|}.
$$

\textbf{Case 02:} [$\mathcal{J}(\alpha_{k+1},\alpha_s, T) \neq \emptyset$ for at least one $s\in [k]$]. In this case, apart from all $\alpha(T)$ and  $\{\alpha(\tilde{\pi}(i))\}_{\tilde{\pi}(i) \in \mathcal{J}(\alpha_{r},\alpha_s, T)}$, we have $\{\alpha(\tilde{\pi}(i))\}_{\forall~\tilde{\pi}(i) \in \mathcal{J}(\alpha_{k+1},\alpha_r, T)}, \alpha(T)$. For this set, the possible pairs are such that $\alpha(\tilde{\pi}(i)) = \alpha_r,~~ \forall~\tilde{\pi}(i)\in \mathcal{R}(T) \setminus \mathcal{J}(\alpha_{k+1}, \alpha_r, T)$, and $\alpha(T) = \alpha_{k+1}$. Thus, their combined probability is $p_{k+1}^{|\mathcal{R}(T)| - |\mathcal{J}(\alpha_{k+1},\alpha_s, T)| + 1} ((p_{k+1} - p_s)^{|\mathcal{J}(\alpha_{k+1}, \alpha_r, T)|}-p_{k+1}^{|\mathcal{J}(\alpha_{k+1}, \alpha_r, T)|})$. Hence for $k+1$  support, the probability is
\begin{eqnarray*}
& &\sum_{r,s\in[k]:\alpha_r>\alpha_s} p_r^{|\mathcal{R}(T)| - |\mathcal{J}(\alpha_r,\alpha_r, T)| + 1} ((p_r + p_s)^{|\mathcal{J}(\alpha_r,\alpha_s,T)|}-p_r^{|\mathcal{J}(\alpha_r,\alpha_s, T)|}) 
\\
& & \hspace{8 mm} + ~ p_{k+1}^{|\mathcal{R}(T)| - |\mathcal{J}(\alpha_{k+1},\alpha_s, T)| + 1} ((p_{k+1} + p_s)^{|\mathcal{I}(\alpha_{k+1},\alpha_s,T)|}-p_{k+1}^{|\mathcal{J}(\alpha_{k+1}, \alpha_s, T)|}).
\end{eqnarray*}
From case 01 and case 02 with $k+1$ support, we have 
\begin{small}
\begin{eqnarray*}
h_T(p_j, \alpha_j; j \in [k+1]) &=& \sum_{r,s\in[k+1]:\alpha_r>\alpha_s} p_r^{|\mathcal{R}(T)| - |\mathcal{J}(\alpha_r,\alpha_s, T)| + 1} \left((p_r + p_s)^{|\mathcal{J}(\alpha_r,\alpha_s,T)|}-p_r^{|\mathcal{J}(\alpha_r,\alpha_s, T)|}\right)
\\
& & ~~~~~
+ \sum_{a=1}^{k+1} p_a \left(\sum_{b=a}^{k+1}p_b\right)^{|\mathcal{R}(T)|}.
\end{eqnarray*}
\end{small}
Furthermore, it is true for $k+1$ support. Thus, from the principle of Mathematical induction, this is true for any $l \geq 2$.
\end{proof}

\subsection{2 agent 2 support model}
\label{sec: 2_agent_full_info_suppli}
In this Section, we will provide further details about the 2 agents' full information noisy game with 2 support of the noise distribution. First, we consider the following noisy game.
\begin{equation*}
\label{eqn: noisy_game}
\tag{game 1}
\tilde{v}_1(12) > \tilde{v}_1(1); \hspace{2 mm}
\tilde{v}_2(12) > \tilde{v}_2(2).
\end{equation*}
We also consider the other possible noisy games with 2 agents in later subsections. 
\subsubsection{Proof of Lemma \ref{lemma: 2_agent_2_support_full_info}}
\label{proof: 2_agent_2_support}
\begin{proof}
For noisy \ref{eqn: game1}, we have $\tilde{\pi} = N$. Now, consider the noise support $\mathcal{N}_{sp} = \{1, \alpha\}$, where $\alpha > 1$ such that $\mathbb{P}[\alpha(S) = \alpha] = p = 1 - \mathbb{P}[\alpha(S) = 1]$, for some fixed and unknown $p$. Given noisy \ref{eqn: noisy_game}, there are 8 possible combinations of $\alpha$'s (because each coalition has two options). We will now enumerate all such possibilities:
\begin{enumerate}
\item $\alpha(1) =1; \alpha(2) = 1; ~\alpha(12) = 1$. The probability of such alpha is $(1-p)^3$. Thus, the noise-free values are $v_1(1) = \tilde{v}_1(1); ~v_2(2) = \tilde{v}_2(2), v_1(12) = \tilde{v}_1(12) ~ and ~ v_2(12) = \tilde{v}_2(12) $. Therefore, The noise-free game is:
\begin{eqnarray*}
	{v}_1(12) > {v}_1(1); \hspace{2 mm}
	{v}_2(12) > {v}_2(2).
\end{eqnarray*}
From this game we have $\pi = \tilde{\pi} = N$. 

\item $\alpha(1) = 1; \alpha(2) = 1; ~\alpha(12) = \alpha$ Probability of such alpha's is $p(1-p)^2$. Thus the actual values are $v_1(1) = \tilde{v}_1(1); ~v_2(2) = \tilde{v}_2(2), v_1(12) = \frac{\tilde{v}_1(12)}{\alpha} ~ and ~ v_2(12) = \frac{\tilde{v}_2(12)}{\alpha} $.	Therefore, the actual preferences will depend on the relative values of $
\alpha$ and $\tilde{\textbf{\textit{v}}}$. If $\alpha$ and $\tilde{\textbf{\textit{v}}}$'s are such that $\frac{\tilde{v}_1(12)}{\alpha} > \tilde{v}_1(1)$ and $\frac{\tilde{v}_2(12)}{\alpha} > \tilde{v}_2(2)$, then $\pi  = N $, otherwise $\pi = \{\{1\},\{2\}\}$.

\item $\alpha(1) =1; \alpha(2) = \alpha; ~\alpha(12) = 1$. The probability of such alpha is $p(1-p)^2$. Thus, the actual values are $v_1(1) = \tilde{v}_1(1); ~v_2(2) = \frac{\tilde{v}_2(2)}{\alpha}, v_1(12) = \tilde{v}_1(12) ~ and ~ v_2(12) = \tilde{v}_2(12) $. Since, $\tilde{v}_2(12) > \tilde{v}_2(2) > \frac{\tilde{v}_2(2)}{\alpha}$. The noise-free game is:
\begin{eqnarray*}
	{v}_1(12) > {v}_1(1); \hspace{2 mm}
	{v}_2(12) > {v}_2(2).
\end{eqnarray*}
So, we have $\pi = \tilde{\pi}  =  N$.

\item $\alpha(1) =\alpha;~ \alpha(2) = 1; ~\alpha(12) = 1$. Probability of such alpha's is $p(1-p)^2$. Thus, the actual values are $v_1(1) = \frac{\tilde{v}_1(1)}{\alpha};~ ~v_2(2) = \tilde{v}_2(2), v_1(12) = \tilde{v}_1(12)~ and ~ v_2(12) = \tilde{v}_2(12)$. Since, $\tilde{v}_1(12) > \tilde{v}_1(1) > \frac{\tilde{v}_1(1)}{\alpha}.$ Therefore The noise-free game is:
\begin{eqnarray*}
	{v}_1(12) > {v}_1(1); \hspace{2 mm}
	{v}_2(12) > {v}_2(2).
\end{eqnarray*}
From this game we have $\pi = \tilde{\pi} = N$.

\item $\alpha(1) =1; \alpha(2) = \alpha; ~\alpha(12) = \alpha$. The probability of this alpha is $p^2(1-p)$. Thus, the actual values are $v_1(1) = \tilde{v}_1(1); ~v_2(2) = \frac{\tilde{v}_2(2)}{\alpha}, v_1(12) = \frac{\tilde{v}_1(12)}{\alpha} ~ and ~ v_2(12) = \frac{\tilde{v}_2(12)}{\alpha} $. The actual preferences will depend on the relative values of $\alpha$ and $\tilde{\textbf{\textit{v}}}$. If $\alpha$ and $\tilde{\textbf{\textit{v}}}$'s are such that $\frac{\tilde{v}_1(12)}{\alpha} > \tilde{v}_1(1)$, then $\pi  = N $, otherwise $\pi = \{\{1\},\{2\}\}$.

\item $\alpha(1) =\alpha; \alpha(2) = 1; ~\alpha(12) = \alpha$. The probability of such alpha is $p^2(1-p)$. Thus, the actual values are $v_1(1) = \frac{\tilde{v}_1(1)}{\alpha}, ~~ v_2(2) = \tilde{v}_2(2); ~v_1(12) = \frac{\tilde{v}_1(12)}{\alpha} ~ and ~ v_2(12) = \frac{\tilde{v}_2(12)}{\alpha} $. The actual preferences will depend on the relative values of $\alpha$ and $\tilde{\textbf{\textit{v}}}$. If $\alpha$ and $\tilde{\textbf{\textit{v}}}$'s are such that  $\frac{\tilde{v}_2(12)}{\alpha} > \tilde{v}_2(2)$, then $\pi  = N $ otherwise $\pi = \{\{1\},\{2\}\}$.

\item $\alpha(1) =\alpha; \alpha(2) = \alpha; ~\alpha(12) = 1$. Probability of such alpha's is $p^2(1-p)$. Thus, the actual values are $v_1(1) = \frac{\tilde{v}_1(1)}{\alpha}, ~~ v_2(2) = \frac{\tilde{v}_2(2)}{\alpha}; ~v_1(12) = \tilde{v}_1(12) ~ and ~ v_2(12) = \tilde{v}_2(12)$. Since, $\tilde{v}_1(12) > \tilde{v}_1(1) > \frac{\tilde{v}_1(1)}{\alpha}.$  and, $\tilde{v}_2(12) > \tilde{v}_2(2) > \frac{\tilde{v}_2(2)}{\alpha}.$ The noise-free game is: 
\begin{eqnarray*}
	{v}_1(12) > {v}_1(1); \hspace{2 mm}
	{v}_2(12) > {v}_2(2).
\end{eqnarray*}
From this game we have $\pi = \tilde{\pi} = N$.

\item $\alpha(1) =\alpha; \alpha(2) = \alpha; ~\alpha(12) = \alpha$. The probability of such alpha is $p^3$. Thus, the actual values are $v_1(1) = \frac{\tilde{v}_1(1)}{\alpha}, ~~ v_2(2) = \frac{\tilde{v}_2(2)}{\alpha}; ~v_1(12) = \frac{\tilde{v}_1(12)}{\alpha} ~ and ~ v_2(12) = \frac{\tilde{v}_2(12)}{\alpha}$. Therefore, the noise-free game is:
\begin{eqnarray*}
	{v}_1(12) > {v}_1(1); \hspace{2 mm}
	{v}_2(12) > {v}_2(2).
\end{eqnarray*}
From this game it is clear that $\pi = \tilde{\pi} = N$.
\end{enumerate}
Recall, $\overline{r}= \max \left\lbrace \frac{\tilde{v}_1(12)}{\tilde{v}_1(1)}, \frac{\tilde{v}_2(12)}{\tilde{v}_2(2)} \right\rbrace$, and $\underline{r}=\min \left\lbrace \frac{\tilde{v}_1(12)}{\tilde{v}_1(1)}, \frac{\tilde{v}_2(12)}{\tilde{v}_2(2)} \right\rbrace$. Out of 8 cases there are 5 cases (case 1,3,4,7,8) in which the grand coalition $\pi= \tilde{\pi} = N$ is formed in noise-free game. In these conditions, the relative value of $\tilde{v}_1(\cdot), \tilde{v}_2(\cdot)$ should satisfy $\alpha \geq \overline{r}$, and this constitute the first expression $p^3 + p^2(1-p) +  2p(1-p)^2 + (1-p)^3$ of $\mathbb{P}[\pi =\tilde{\pi}~|~game~1]$. 
Apart from this, if the inflation interval is $\underline{r} \leq \alpha < \overline{r}$, then $\pi = \tilde{\pi} = N$ is also possible from case (6) with probability $p^2(1-p)$. Thus,  $p^2(1-p)$ will be added to the above prediction probability. So, we have $\mathbb{P}[\pi =\tilde{\pi}~|~game~1]$ corresponding to it. Moreover, finally, if $\alpha <\underline{r}$, all cases are allowable, and hence the grand coalition will always form in the noise-free game. Thus,
\begin{equation}
\mathbb{P}[\pi =\tilde{\pi}~|~game~1]= \begin{cases}
p^3 + p^2(1-p) +  2p(1-p)^2 + (1-p)^3, & if ~\alpha \geq \overline{r}
\\
p^3 + 2p^2(1-p) +  2p(1-p)^2 + (1-p)^3, & if~ \underline{r} \leq \alpha < \overline{r}  
\\
1, & if~ \alpha < \underline{r}.
\end{cases}  
\end{equation}
Simplifying these polynomials, we have
\begin{equation}
\begin{aligned}
\mathbb{P}[\pi =\tilde{\pi}~|~game~1] = \begin{cases}
1-p(1-p^2), & if ~\alpha \geq \overline{r}
\\
1-p(1-p), & if~ \underline{r} \leq \alpha < \overline{r}
\\
1, & if~ \alpha < \underline{r}.
\end{cases}
\end{aligned}
\end{equation}
This ends the proof.
\end{proof}
If we allow some user given satisfaction $\zeta$ on the prediction probability, i.e., $\mathbb{P}[\pi =\tilde{\pi}~|~game~1] = \zeta$, we get the following noise interval
\begin{equation}
\label{eqn: zeta=0.9_game_1}
    I^{\star}(\zeta = 0.9)  = \begin{cases} [0,0.101] \cup [0.946,1], & ~if~ \alpha \geq \overline{r};
    \\
    [0,0.113]\cup [0.887,1], & ~if~ \underline{r} \leq \alpha < \overline{r}
    \\
    1, & ~if ~ \alpha < \underline{r}.
    \end{cases}
\end{equation}

\subsubsection{Details of the other 2 agent noisy games}
\label{subsec: other_noisy_game_2_agent}
Here we will give the prediction probabilities for other possible noisy games with 2 agents and 2 noise support.

\noindent \textbf{Both agents prefer staying alone in noisy game}

As opposed to the noisy \ref{eqn: game1}, in noisy \ref{eqn: game_2} both agents prefer to stay alone. The noisy preferences of agents are as follows:
\begin{equation}
\label{eqn: game_2}
\tag{game 2}
\tilde{v}_1(1) > \tilde{v}_1(12); \hspace{2 mm} \tilde{v}_2(2) > \tilde{v}_2(12).
\end{equation}
Clearly $\tilde{\pi} = \{\{1\},\{2\}\} \neq N$ is the core-stable outcome. The following lemma provides prediction probability, $\mathbb{P}[\pi = \tilde{\pi} ~|~game~2]$ for noisy \ref{eqn: game_2}.

\begin{lemma}
\label{lemma: singletons_prefered_both}
For noisy \ref{eqn: game_2} with full information of $\tilde{\textbf{\textit{v}}}$'s, the prediction probability that \textit{unknown} noise-free game has $\pi = \tilde{\pi}$ as a core-stable outcome is
\begin{equation}
\label{eqn: both_alone_eta}
\begin{aligned}
\mathbb{P}[\pi =\tilde{\pi}~|~ game~2] = \begin{cases}
1-p^2(1-p), & if ~\frac{1}{\alpha} < \underline{r}	
\\
1, &  if ~\frac{1}{\alpha} \geq \underline{r}.
\end{cases}
\end{aligned}
\end{equation}
Moreover, the minimal and maximal values of above prediction probability are $0.85$ (when $p = 2/3$), and $1$, respectively.
\end{lemma}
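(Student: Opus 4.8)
The plan is to mirror the case-enumeration argument used for Lemma \ref{lemma: 2_agent_2_support_full_info} (carried out in Sec. \ref{proof: 2_agent_2_support}), now applied to noisy \ref{eqn: game_2}, whose core-stable outcome is the singleton partition $\tilde{\pi} = \{\{1\},\{2\}\}$. First I would note that a two-agent game admits only two partitions, the grand coalition $N=\{12\}$ and the singletons $\{\{1\},\{2\}\}$; hence $\pi = \tilde{\pi}$ in the noise-free game exactly when the singleton partition is core-stable there, i.e. when $N$ does \emph{not} core block it. By the definition of core blocking, $N$ blocks iff both agents strictly prefer $N$ to staying alone: $v_1(12) > v_1(1)$ and $v_2(12) > v_2(2)$. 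So the prediction probability $\mathbb{P}[\pi =\tilde{\pi}~|~game~2]$ equals $1$ minus the probability that both of these strict inequalities hold in the noise-free game.

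Next I would translate each blocking inequality into a condition on the realized noise, using $v_i(S) = \tilde{v}_i(S)/\alpha(S)$. For agent $i$, the inequality $v_i(12) > v_i(i)$ is equivalent to $r_i > \frac{\alpha(12)}{\alpha(i)}$, where $r_1 := \frac{\tilde{v}_1(12)}{\tilde{v}_1(1)}$ and $r_2 := \frac{\tilde{v}_2(12)}{\tilde{v}_2(2)}$, so that $\underline{r} = \min\{r_1,r_2\}$. Because \ref{eqn: game_2} gives $r_1 < 1$ and $r_2 < 1$, while $\frac{\alpha(12)}{\alpha(i)} \in \{1/\alpha,\,1,\,\alpha\}$, the reversal $r_i > \frac{\alpha(12)}{\alpha(i)}$ can hold only when $\frac{\alpha(12)}{\alpha(i)} = 1/\alpha$, that is, when $\alpha(12)=1$ and $\alpha(i)=\alpha$, and then only if $r_i > 1/\alpha$. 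This is the key step: for \emph{both} agents to be reversed simultaneously, the unique admissible noise realization is $\alpha(1)=\alpha,\ \alpha(2)=\alpha,\ \alpha(12)=1$, and even then the block occurs only when $\underline{r} = \min\{r_1,r_2\} > 1/\alpha$.

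Then I would enumerate the $2^3 = 8$ realizations of $(\alpha(1),\alpha(2),\alpha(12))$ with their i.i.d.\ probabilities and confirm that in each of the other seven realizations at least one agent has $\frac{\alpha(12)}{\alpha(i)} \ge 1$, forcing $v_i(12) \le v_i(i)$ (since $r_i<1$) and hence $\pi = \tilde{\pi}$; this parallels the five grand-coalition cases of Sec. \ref{proof: 2_agent_2_support}. The single reversing realization carries probability $p \cdot p \cdot (1-p) = p^2(1-p)$. Consequently, if $1/\alpha < \underline{r}$ the partition fails to be preserved with probability exactly $p^2(1-p)$, giving $\mathbb{P}[\pi=\tilde{\pi}~|~game~2] = 1 - p^2(1-p)$; whereas if $1/\alpha \ge \underline{r}$ the reversing realization no longer strictly blocks (some agent is indifferent or strictly prefers its singleton), so the probability is $1$.

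Finally, for the extremal values I would maximize $p^2(1-p)$ on $[0,1]$: its derivative $2p - 3p^2 = p(2-3p)$ vanishes at $p = 2/3$, where $p^2(1-p) = 4/27$, so the prediction probability attains its minimum $1 - 4/27 = 23/27 \approx 0.85$; at the endpoints $p \in \{0,1\}$ one has $p^2(1-p)=0$, yielding the maximum value $1$. The only delicate point is the bookkeeping of strict versus weak inequalities, ensuring that the boundary case $1/\alpha = \underline{r}$ falls into the probability-$1$ branch; this follows immediately from the strict-preference requirement in the definition of core blocking.
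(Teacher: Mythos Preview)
Your proposal is correct and follows essentially the same case-enumeration approach the paper uses for \ref{eqn: game1} in Sec.~\ref{proof: 2_agent_2_support}; the paper does not write out a separate proof for \ref{eqn: game_2}, but your argument is exactly the intended adaptation. Your presentation is slightly more streamlined than the paper's template, since you first identify analytically that the only realization capable of reversing both agents is $(\alpha(1),\alpha(2),\alpha(12))=(\alpha,\alpha,1)$ and then confirm the remaining seven cases, whereas the paper's style is to walk through all eight cases one by one; the content is identical.
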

Similar to \ref{eqn: game1}, the probability of formation of partition $\pi = \{\{1\},\{2\}\}$ in an \textit{unknown} noise-free game is always more than $0.85$. So, the safety value is $0.85$. The prediction probability is $1$ when $\frac{1}{\alpha} \geq \underline{r}$ for any noise probability $p$. Moreover, for some user-given satisfaction $\zeta$, we obtain the corresponding $p$ by setting $\mathbb{P}[\pi =\{\{1\},\{2\}\}~|~ game~2] = \zeta$. In particular, we have 
\begin{equation}
\label{eqn: zeta=0.9_game_2}
I^{\star}(\zeta = 0.9) = \begin{cases} [0,0.413] \cup [0.867, 1], & ~if ~\frac{1}{\alpha} < \underline{r}
\\
[0,1], & ~\frac{1}{\alpha} \geq \underline{r}.
\end{cases}
\end{equation}
It is easy to see that the allowable $p$ is  larger than the interval given in Equation \eqref{eqn: zeta=0.9_game_1} for \ref{eqn: game1}. So, the partition $\tilde{\pi} = \{\{1\},\{2\}\}$ is noise robust for larger number of inflation probabilities $p$. Again the noise set will shrink if we increase the satisfaction $\zeta$.
\begin{figure}[h!]
\centering
\includegraphics[scale = 0.7]{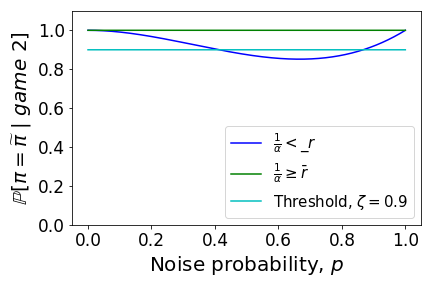}
\caption{The prediction probability $\mathbb{P}[\pi = \tilde{\pi} ~|~ game~2]$. For $\zeta = 0.9$, the  noise regimes are given in Equation \eqref{eqn: zeta=0.9_game_2}.}
\label{fig: game_2}
\end{figure}

\noindent \textbf{Agent 1 prefers to stay alone and agent 2 prefers grand coalition in noisy game}

\textcolor{black}{Now, we consider a noisy game where agent 1 prefers to stay alone, whereas agent 2 prefers the grand coalition. In particular, the preferences in the noisy game are
\begin{equation}
\label{eqn: game_3}
\tag{game 3}
\tilde{v}_1(1) > \tilde{v}_1(12); \hspace{2 mm}   \tilde{v}_2(12) > \tilde{v}_2(2).
\end{equation}
Again $\tilde{\pi} = \{\{1\},\{2\}\} \neq N$ is noisy core-stable outcome. The prediction probability, $\mathbb{P}[\pi =\tilde{\pi}~|~ game~3] $ is given in the Lemma below.}
\begin{lemma}
\label{lemma: singelton_by_one}
For noisy \ref{eqn: game_3} with full information of $\tilde{\textbf{\textit{v}}}$'s, the prediction probability that unknown noise-free game has $\pi = \tilde{\pi}$ as a core-stable outcome is given by:
\begin{equation}
\label{eqn: agent_1_alone_eta}
\begin{aligned}
\mathbb{P}[\pi =\tilde{\pi}~|~ game~3] = \begin{cases}
1-p(1-p), & if ~ \frac{1}{\alpha} < \frac{\tilde{v}_1(12)}{\tilde{v}_1(1)} 
\\
1, & if ~ \frac{1}{\alpha} \geq \frac{\tilde{v}_1(12)}{\tilde{v}_1(1)}. 
\end{cases}
\end{aligned}
\end{equation}
Moreover, the minimal and maximal values of above prediction probability are $0.75$ (when $p = 0.5$), and $1$, respectively.
\end{lemma}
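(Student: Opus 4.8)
The plan is to mirror the exhaustive enumeration used for \ref{eqn: game1} in the proof of Lemma \ref{lemma: 2_agent_2_support_full_info}. First I would record that in the noisy \ref{eqn: game_3} agent $1$ strictly prefers its singleton, so the grand coalition cannot be core-stable and $\tilde{\pi} = \{\{1\},\{2\}\}$ is the unique noisy core-stable outcome. Since there are only two agents, the noise-free game admits exactly the two candidate partitions $\{\{1\},\{2\}\}$ and $N=\{12\}$, and generically exactly one is core-stable; hence $\pi = \tilde{\pi}$ holds precisely when the grand coalition does \emph{not} core block $\{\{1\},\{2\}\}$, i.e. when it is \emph{not} the case that both $v_1(12) > v_1(1)$ and $v_2(12) > v_2(2)$. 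Thus the prediction probability equals one minus the probability that both agents strictly prefer the grand coalition in the noise-free game.

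Next I would enumerate the $2^3 = 8$ independent noise realizations of $(\alpha(1),\alpha(2),\alpha(12))$, each entry equal to $\alpha$ with probability $p$ and to $1$ with probability $1-p$, and write the induced noise-free values $v_1(1) = \tilde{v}_1(1)/\alpha(1)$, $v_2(2) = \tilde{v}_2(2)/\alpha(2)$ and $v_i(12) = \tilde{v}_i(12)/\alpha(12)$. The key reduction I expect is that agent $1$'s preference is the binding constraint: the inequality $v_1(12) > v_1(1)$ rewrites as $\tilde{v}_1(12)/\tilde{v}_1(1) > \alpha(12)/\alpha(1)$, and since the noisy preference gives $\tilde{v}_1(12)/\tilde{v}_1(1) < 1$, this can hold only when $\alpha(12)/\alpha(1) < 1$, i.e. $\alpha(1) = \alpha$ and $\alpha(12) = 1$, and then only under the threshold $1/\alpha < \tilde{v}_1(12)/\tilde{v}_1(1)$. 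Moreover, on that event $\alpha(12) = 1$, so agent $2$'s condition $\tilde{v}_2(12)/\tilde{v}_2(2) > \alpha(12)/\alpha(2) = 1/\alpha(2)$ holds automatically, because $\tilde{v}_2(12)/\tilde{v}_2(2) > 1 \geq 1/\alpha(2)$. Hence the grand coalition blocks $\{\{1\},\{2\}\}$ exactly on the event $\{\alpha(1) = \alpha,\ \alpha(12) = 1\}$ (with $\alpha(2)$ unconstrained) whenever the threshold holds, and on no realization otherwise.

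With this reduction the case split is immediate. When $1/\alpha \geq \tilde{v}_1(12)/\tilde{v}_1(1)$ agent $1$ never prefers $N$, the grand coalition never blocks, and $\mathbb{P}[\pi = \tilde{\pi}\mid game~3] = 1$; at the boundary $1/\alpha = \tilde{v}_1(12)/\tilde{v}_1(1)$ agent $1$ is indifferent and does not strictly block, so this case is correctly placed in the probability-one branch. When $1/\alpha < \tilde{v}_1(12)/\tilde{v}_1(1)$, the blocking event $\{\alpha(1)=\alpha,\ \alpha(12)=1\}$ has probability $p(1-p)^2 + p^2(1-p) = p(1-p)$, so the prediction probability is $1 - p(1-p)$. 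Finally I would minimise $1 - p(1-p)$ over $p\in[0,1]$: the quadratic $p(1-p)$ peaks at $p=1/2$ with value $1/4$, giving the safety value $0.75$, while $p\in\{0,1\}$ gives the maximum $1$. The only real care needed is the bookkeeping verifying that agent $2$'s preference is automatically satisfied precisely on the realizations where agent $1$'s preference flips; once that is in place, the remaining steps are a direct probability sum.
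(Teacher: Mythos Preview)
Your proposal is correct and follows the same enumeration strategy the paper uses for \ref{eqn: game1} (the paper does not write out a separate proof for \ref{eqn: game_3}, treating it as analogous). Your argument is in fact a bit tighter than a raw eight-case table: by first isolating that agent~$1$'s strict preference can flip only on $\{\alpha(1)=\alpha,\ \alpha(12)=1\}$ and then checking that agent~$2$'s preference is automatically preserved there, you collapse the enumeration to a single relevant event of probability $p(1-p)$, which is exactly what the explicit case-by-case count would produce.
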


\textcolor{black}{Similar to \ref{eqn: game1} and \ref{eqn: game_2} the probability of formation of partition $\pi = \{\{1\},\{2\}\}$ in an \textit{unknown} noise-free game is always more than $0.75$ that is the safety value for \ref{eqn: game_3}. The prediction probability is $1$ when $\frac{1}{\alpha} \geq \frac{\tilde{v}_1(12)}{\tilde{v}_1(1)}$ for any noise probability $p$. Moreover, for some user-given satisfaction, $\zeta$ we obtain the corresponding $p$ by setting $\mathbb{P}[\pi =\{\{1\},\{2\}\}~|~ game~3] = \zeta$. In particular, 
\begin{equation}
\label{eqn: zeta=0.9_game_3}
I^{\star}(\zeta = 0.9) = \begin{cases} [0,0.113] \cup [0.887, 1], & ~if ~ \frac{1}{\alpha} < \frac{\tilde{v}_1(12)}{\tilde{v}_1(1)} 
\\
[0,1], & ~if~ \frac{1}{\alpha} \geq \frac{\tilde{v}_1(12)}{\tilde{v}_1(1)}.
\end{cases}
\end{equation}}
The following figure shows the prediction probabilities for \ref{eqn: game_3}.
\begin{figure}[h!]
\centering
\centering
\includegraphics[scale = 0.7]{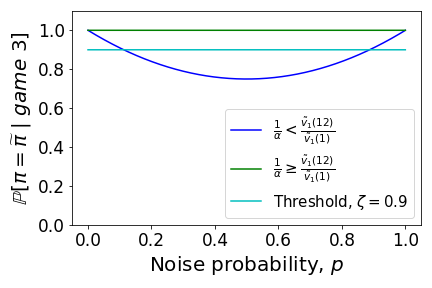}
\caption{The prediction probability $\mathbb{P}[\tilde{\pi} = \pi ~|~ game~3]$. For $\zeta = 0.9$, we obtain the noise regimes as given in Equation \eqref{eqn: zeta=0.9_game_3}.}
\label{fig: game_3}
\end{figure}

\noindent \textbf{Agent 1 prefers grand coalition and agent 2 prefers to stay alone}

Finally, consider a noisy game symmetric to \ref{eqn: game_3}. Here agent 1 prefers a grand coalition, and agent 2 prefers to stay alone. In particular, we have the following preferences.
\begin{equation}
\label{eqn: game_4}
\tag{game 4}
\tilde{v}_1(12) > \tilde{v}_1(1); \hspace{2mm} \tilde{v}_2(2) > \tilde{v}_2(12).
\end{equation}
Again $\tilde{\pi} = \{\{1\},\{2\}\} \neq 
N$ is a noisy core-stable outcome. In the following lemma, we find the prediction probability when noisy \ref{eqn: game_4} is considered.
\begin{lemma}
\label{lemma: singelton_by_two}
For noisy \ref{eqn: game_4} with full information of $\tilde{\textbf{\textit{v}}}$'s, the prediction probability that noise-free game has $\pi = \tilde{\pi}$ as as core-stable outcome is given by:
\begin{equation}
\label{eqn: agent_2_alone_eta}
\begin{aligned}
\mathbb{P}[\pi =\tilde{\pi}~|~ game~4] = \begin{cases}
1-p(1-p), & if ~ \frac{1}{\alpha} < \frac{\tilde{v}_2(12)}{\tilde{v}_2(2)} 
\\
1, & if ~ \frac{1}{\alpha} \geq \frac{\tilde{v}_2(12)}{\tilde{v}_2(2)}.
\end{cases}
\end{aligned}
\end{equation}
So, the minimal and maximal values of  above prediction probability are $0.75$ (when $p = 0.5$) and  $1$ respectively.
\end{lemma}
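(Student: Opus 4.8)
The plan is to exploit the symmetry between \ref{eqn: game_4} and \ref{eqn: game_3}: relabelling agents $1$ and $2$ turns \ref{eqn: game_3} (agent $1$ alone, agent $2$ grand) into \ref{eqn: game_4} (agent $2$ alone, agent $1$ grand). Hence the entire analysis of Lemma \ref{lemma: singelton_by_one} carries over verbatim with the two agents' roles interchanged, so that the threshold ratio $\frac{\tilde{v}_1(12)}{\tilde{v}_1(1)}$ that governs agent $1$ in \ref{eqn: game_3} is replaced by $\frac{\tilde{v}_2(12)}{\tilde{v}_2(2)}$ governing agent $2$ in \ref{eqn: game_4}, immediately yielding the stated two-case expression. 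I would nonetheless give the direct verification below, both as a check and because it isolates the one noise configuration that matters.

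First I would record that $\tilde{\pi}=\{\{1\},\{2\}\}$ is indeed the noisy core-stable outcome: the grand coalition $N$ is blocked by $\{2\}$ since agent $2$ strictly prefers to be alone ($\tilde{v}_2(2)>\tilde{v}_2(12)$), whereas $\{\{1\},\{2\}\}$ can only be blocked by $N$, which requires \emph{both} agents to prefer $N$, and agent $2$ does not. The event $\pi=\tilde{\pi}$ in the noise-free game is therefore exactly the event that $N$ does \emph{not} block $\{\{1\},\{2\}\}$, i.e.\ that it is \textbf{not} the case that both $v_1(12)>v_1(1)$ and $v_2(12)>v_2(2)$ hold. Using $v_i(S)=\tilde{v}_i(S)/\alpha(S)$, agent $1$ prefers $N$ iff $\frac{\tilde{v}_1(12)}{\tilde{v}_1(1)}>\frac{\alpha(12)}{\alpha(1)}$ and agent $2$ prefers $N$ iff $\frac{\tilde{v}_2(12)}{\tilde{v}_2(2)}>\frac{\alpha(12)}{\alpha(2)}$, with the two support values lying in $\{1,\alpha\}$, $\alpha>1$.

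The main step is to show that agent $2$ can flip to preferring $N$ in only a single configuration. Since $\frac{\tilde{v}_2(12)}{\tilde{v}_2(2)}<1$ (from the noisy game), the inequality $\frac{\tilde{v}_2(12)}{\tilde{v}_2(2)}>\frac{\alpha(12)}{\alpha(2)}$ forces $\frac{\alpha(12)}{\alpha(2)}<1$, which is possible only when $\alpha(12)=1$ and $\alpha(2)=\alpha$, giving ratio $\frac{1}{\alpha}$; agent $2$ then prefers $N$ precisely when $\frac{1}{\alpha}<\frac{\tilde{v}_2(12)}{\tilde{v}_2(2)}$. Conversely, whenever $\alpha(12)=1$ one has $\frac{\alpha(12)}{\alpha(1)}\le 1<\frac{\tilde{v}_1(12)}{\tilde{v}_1(1)}$, so agent $1$ automatically prefers $N$. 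Thus, when $\frac{1}{\alpha}<\frac{\tilde{v}_2(12)}{\tilde{v}_2(2)}$, the ``bad'' event that $N$ blocks is exactly $\{\alpha(12)=1,\ \alpha(2)=\alpha\}$ with $\alpha(1)$ free, of probability $(1-p)\,p\,[(1-p)+p]=p(1-p)$, so the prediction probability equals $1-p(1-p)$. When $\frac{1}{\alpha}\ge\frac{\tilde{v}_2(12)}{\tilde{v}_2(2)}$ no configuration flips agent $2$, the bad event is empty, and the probability is $1$.

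Finally, the minimum/maximum claim follows because $p\mapsto 1-p(1-p)=1-p+p^2$ is convex with minimum $\tfrac34$ at $p=\tfrac12$ and value $1$ at the endpoints $p\in\{0,1\}$. I do not anticipate a genuine obstacle; the only delicate point is the bookkeeping in the previous paragraph, namely verifying that in the unique configuration flipping agent $2$ the other agent also prefers $N$ (so that $N$ truly blocks), which is exactly what collapses the two bad sub-cases to total probability $p(1-p)$.
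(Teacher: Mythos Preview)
Your proof is correct. The symmetry argument is exactly what the paper relies on (it introduces \ref{eqn: game_4} as ``a noisy game symmetric to \ref{eqn: game_3}'' and states the lemma without further proof), so on that level the approaches coincide.

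Your direct verification, however, is genuinely different in style from the paper's template proof (the one given in full for \ref{eqn: game1}, Lemma~\ref{lemma: 2_agent_2_support_full_info}). There the paper enumerates all $2^3=8$ noise configurations, works out the noise-free game in each, and tallies probabilities case by case. You instead isolate the unique ``bad'' configuration by arguing that agent~2 can flip to preferring $N$ only when $\alpha(12)/\alpha(2)<1$, forcing $(\alpha(12),\alpha(2))=(1,\alpha)$, and then observe that agent~1 automatically prefers $N$ in that configuration. This collapses the eight-case bookkeeping to a single probability computation $p(1-p)$. The enumeration buys explicitness (one sees every case), while your argument buys brevity and makes transparent why only the ratio $\tilde{v}_2(12)/\tilde{v}_2(2)$ enters the threshold and why the answer does not depend on agent~1's ratio at all.
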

In this case also, the noise regime can be obtained using $\mathbb{P}[\pi =\tilde{\pi}~|~ game~4] = \zeta$. In particular,
\begin{equation}
\label{eqn: zeta=0.9_game_4}
I^{\star}(\zeta = 0.9) = \begin{cases} [0,0.113] \cup [0.887, 1], & ~if ~ \frac{1}{\alpha} < \frac{\tilde{v}_2(12)}{\tilde{v}_2(2)} 
\\
[0,1], & ~if~ \frac{1}{\alpha} \geq \frac{\tilde{v}_2(12)}{\tilde{v}_2(2)}.
\end{cases}
\end{equation}
Figure \ref{fig: game_4} shows the prediction probabilities for \ref{eqn: game_4}.
\begin{figure}[h!]
\centering
\includegraphics[scale = 0.7]{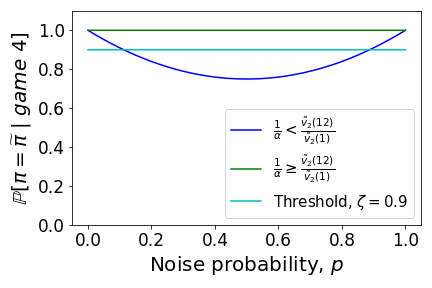}
\caption{The prediction probability $\mathbb{P}[\pi = \tilde{\pi}~|~ game~4]$. For $\zeta = 0.9$, we obtain the noise regimes as given in Equation \eqref{eqn: zeta=0.9_game_4}.}
\label{fig: game_4} 
\end{figure} 

\subsection{2 agents 3 support noise model}
\label{app: 3_supprt_full_info}
In this section, we consider two player noisy hedonic game with three support noise model, i.e., $\mathcal{N}_{sp} = \{1,\alpha_1, \alpha_2\}$, with $\alpha_1 > 1$, and $\alpha_2 < 1$. Note that $\alpha_1, \alpha_2 > 0$. Let $\mathbb{P}[\alpha(S) = \alpha_1] = p_1; ~ \mathbb{P}[\alpha(S) = \alpha_2] = p_2; ~ and~ \mathbb{P}[\alpha(S) = 1] = 1-p_1-p_2$. That is the value of each coalition is either inflated with probability $p_1$, or deflated with probability $p_2$ or retained with probability $1-p_1-p_2$. The following lemma provides the prediction probability for \ref{eqn: game1}. 

\subsubsection{Proof of Lemma \ref{lemma: 2_agent_3_support}}
\label{sec: proof_3_suppot_lemma}

\begin{proof}
For \ref{eqn: game1}, with $l=3$ support of noise there are 27 possible cases for $\alpha$'s. Since there are 3 coalitions, each coalition's value can either be retained, inflated by $\alpha_1$, or deflated by $\alpha_2$. We will now enumerate all of them: 
\begin{enumerate}
	\item $\alpha(1) =1; \alpha(2) = 1; ~\alpha(12) = 1$ Probability of such alpha's is $(1-p_1-p_2)^3$. Thus, the actual values are $v_1(1) = \tilde{v}_1(1); ~v_2(2) = \tilde{v}_2(2), v_1(12) = \tilde{v}_1(12) ~ and ~ v_2(12) = \tilde{v}_2(12) $. The noise-free game is: ${v}_1(12) > {v}_1(1)$; \hspace{1 mm} ${v}_2(12) > {v}_2(2)$.
	So, $\pi = \tilde{\pi}$ in this case.
	
	\item $\alpha(1) = 1; \alpha(2) = 1; ~\alpha(12) = \alpha_1$. Probability of such alpha's is $p_1(1-p_1-p_2)^2$. Thus, the actual values are $v_1(1) = \tilde{v}_1(1); ~v_2(2) = \tilde{v}_2(2), v_1(12) = \frac{\tilde{v}_1(12)}{\alpha_1} ~ and ~ v_2(12) = \frac{\tilde{v}_2(12)}{\alpha_1} $. The noise-free game preferences are unclear; they will depend on the relative values of $\alpha_1$ and $\tilde{v}$. If $\alpha_1$ and $\tilde{v}$'s are such that $\frac{\tilde{v}_1(12)}{\alpha_1} > \tilde{v}_1(1)$ and $\frac{\tilde{v}_2(12)}{\alpha_1} > \tilde{v}_2(2)$ then $\pi  = N $, otherwise $\pi = \{\{1\},\{2\}\}$.
	
	\item $\alpha(1) =1; \alpha(2) = \alpha_1; ~\alpha(12) = 1$. Probability of such alpha's is $p_1(1-p_1-p_2)^2$. Thus, the actual values are $v_1(1) = \tilde{v}_1(1); ~v_2(2) = \frac{\tilde{v}_2(2)}{\alpha_1}, v_1(12) = \tilde{v}_1(12) ~ and ~ v_2(12) = \tilde{v}_2(12) $. Since $\tilde{v}_2(12) > \tilde{v}_2(2) > \frac{\tilde{v}_2(2)}{\alpha_1}.$ The noise-free game is: ${v}_1(12) > {v}_1(1)$; \hspace{1 mm} ${v}_2(12) > {v}_2(2)$.
	So, $\pi = \tilde{\pi}$ in this case.
	
	\item $\alpha(1) =\alpha_1;~ \alpha(2) = 1; ~\alpha(12) = 1$. Probability of such alpha's is $p_1(1-p_1-p_2)^2$. Thus, the actual values are $v_1(1) = \frac{\tilde{v}_1(1)}{\alpha_1};~ ~v_2(2) = \tilde{v}_2(2), v_1(12) = \tilde{v}_1(12)~ and ~ v_2(12) = \tilde{v}_2(12)$. Since, $\tilde{v}_1(12) > \tilde{v}_1(1) > \frac{\tilde{v}_1(1)}{\alpha_1}$. The noise-free game is: ${v}_1(12) > {v}_1(1)$; \hspace{1 mm} ${v}_2(12) > {v}_2(2)$.
	So, $\pi = \tilde{\pi}$ in this case.
	
	\item $\alpha(1) =1; \alpha(2) = \alpha_1; ~\alpha(12) = \alpha_1$. Probability of such alpha's is $p_1^2(1-p_1-p_2)$. Thus, the actual values are $v_1(1) = \tilde{v}_1(1); ~v_2(2) = \frac{\tilde{v}_2(2)}{\alpha_1}, v_1(12) = \frac{\tilde{v}_1(12)}{\alpha_1} ~ and ~ v_2(12) = \frac{\tilde{v}_2(12)}{\alpha_1} $. The noise-free game preferences are unclear; they will depend on the relative values of $\alpha_1$ and $\tilde{v}$. If $\alpha_1$ and $\tilde{v}$'s are such that $\frac{\tilde{v}_1(12)}{\alpha_1} > \tilde{v}_1(1)$, then $\pi  = N $, otherwise $\pi = \{\{1\},\{2\}\}$.
	
	\item $\alpha(1) =\alpha_1; \alpha(2) = 1; ~\alpha(12) = \alpha_1$. Probability of such alpha's is $p_1^2(1-p_1-p_2)$. Thus, the  actual values are $v_1(1) = \frac{\tilde{v}_1(1)}{\alpha_1}, ~~ v_2(2) = \tilde{v}_2(2); ~v_1(12) = \frac{\tilde{v}_1(12)}{\alpha_1} ~ and ~ v_2(12) = \frac{\tilde{v}_2(12)}{\alpha_1} $. The noise-free game preferences are unclear; they will depend on the relative values of $\alpha_1$ and $\tilde{v}$. If $\alpha_1$ and $\tilde{v}$'s are such that  $\frac{\tilde{v}_2(12)}{\alpha_1} > \tilde{v}_2(2)$, then $\pi  = N $, otherwise $\pi = \{\{1\},\{2\}\}$.
	
	\item $\alpha(1) =\alpha_1; \alpha(2) = \alpha_1; ~\alpha(12) = 1$. Probability of such alpha's is $p_1^2(1-p_1-p_2)$. Thus, the actual values are $v_1(1) = \frac{\tilde{v}_1(1)}{\alpha_1}, ~~ v_2(2) = \frac{\tilde{v}_2(2)}{\alpha_1}; ~v_1(12) = \tilde{v}_1(12) ~ and ~ v_2(12) = \tilde{v}_2(12)$. Since, $\tilde{v}_1(12) > \tilde{v}_1(1) > \frac{\tilde{v}_1(1)}{\alpha_1}.$  and, $\tilde{v}_2(12) > \tilde{v}_2(2) > \frac{\tilde{v}_2(2)}{\alpha_1}.$  The noise-free game is: ${v}_1(12) > {v}_1(1)$; \hspace{1 mm} ${v}_2(12) > {v}_2(2)$.
	So, $\pi = \tilde{\pi}$ in this case.
	
	\item $\alpha(1) =\alpha_1; \alpha(2) = \alpha_1; ~\alpha(12) = \alpha_1$. The probability of such alpha is $p_1^3$. Thus, the actual values are $v_1(1) = \frac{\tilde{v}_1(1)}{\alpha_1}, ~~ v_2(2) = \frac{\tilde{v}_2(2)}{\alpha_1}; ~v_1(12) = \frac{\tilde{v}_1(12)}{\alpha_1} ~ and ~ v_2(12) = \frac{\tilde{v}_2(12)}{\alpha_1}$. The noise-free game is: ${v}_1(12) > {v}_1(1)$; \hspace{1 mm} ${v}_2(12) > {v}_2(2)$.
	So, $\pi = \tilde{\pi}$ in this case.

	\item $\alpha(1) = 1; \alpha(2) = 1; ~\alpha(12) = \alpha_2$. Probability of such alpha's is $p_2(1-p_1-p_2)^2$. Thus, the actual values are $v_1(1) = \tilde{v}_1(1); ~v_2(2) = \tilde{v}_2(2), v_1(12) = \frac{\tilde{v}_1(12)}{\alpha_2} ~ and ~ v_2(12) = \frac{\tilde{v}_2(12)}{\alpha_2} $. Since $\alpha_2<1$, thus $\frac{\tilde{v}_1(12)}{\alpha_2} > \tilde{v}_1(12) > \tilde{v}_1(1)  = v_1(1)$. Similarly, $\frac{\tilde{v}_2(12)}{\alpha_2} > \tilde{v}_2(12) > \tilde{v}_2(2)  = v_2(2)$. The noise-free game is: ${v}_1(12) > {v}_1(1)$; \hspace{1 mm} ${v}_2(12) > {v}_2(2)$.
	So, $\pi = \tilde{\pi}$ in this case.
	
	\item $\alpha(1) =1; \alpha(2) = \alpha_2; ~\alpha(12) = 1$. Probability of these alpha's is $p_2(1-p_1-p_2)^2$. Thus, the actual values are $v_1(1) = \tilde{v}_1(1); ~v_2(2) = \frac{\tilde{v}_2(2)}{\alpha_2}, v_1(12) = \tilde{v}_1(12) ~ and ~ v_2(12) = \tilde{v}_2(12) $. The noise-free game preferences are unclear; they will depend on the relative values of $\alpha_2$ and $\tilde{v}$. If $\alpha_2$ and $\tilde{v}$'s are such that  $\tilde{v}_2(12) > \frac{\tilde{v}_2(2)}{\alpha_2}$  then $\pi  = N $, otherwise $\pi = \{\{1\},\{2\}\}$.
	
	\item $\alpha(1) =\alpha_2;~ \alpha(2) = 1; ~\alpha(12) = 1$. Probability of such alpha's is $p_2(1-p_1-p_2)^2$. Thus, the actual values are $v_1(1) = \frac{\tilde{v}_1(1)}{\alpha_2};~ ~v_2(2) = \tilde{v}_2(2), v_1(12) = \tilde{v}_1(12)~ and ~ v_2(12) = \tilde{v}_2(12)$. The noise-free game preferences are unclear; they will depend on the relative values of $\alpha_2$ and $\tilde{v}$. If $\alpha_2$ and $\tilde{v}$'s are such that  $\tilde{v}_1(12) > \frac{\tilde{v}_1(1)}{\alpha_2}$ then $\pi  = N $, otherwise $\pi = \{\{1\},\{2\}\}$.
	
	\item $\alpha(1) =1; \alpha(2) = \alpha_2; ~\alpha(12) = \alpha_2$. probability of such alpha's is $p_2^2(1-p_1-p_2)$. Thus, the actual values are $v_1(1) = \tilde{v}_1(1); ~v_2(2) = \frac{\tilde{v}_2(2)}{\alpha_2}, v_1(12) = \frac{\tilde{v}_1(12)}{\alpha_2} ~ and ~ v_2(12) = \frac{\tilde{v}_2(12)}{\alpha_2} $
	Since $\alpha_2<1$, thus $\frac{\tilde{v}_1(12)}{\alpha_2} > \tilde{v}_1(12) > \tilde{v}_1(1)  = v_1(1)$, and $\frac{\tilde{v}_2(12)}{\alpha_2} > \frac{\tilde{v}_2(2)}{\alpha_2}$. The noise-free game is: ${v}_1(12) > {v}_1(1)$; \hspace{1 mm} ${v}_2(12) > {v}_2(2)$.
	So, $\pi = \tilde{\pi}$ in this case.
	
	\item $\alpha(1) =\alpha_2; \alpha(2) = 1; ~\alpha(12) = \alpha_2$. Probability of such alpha's is $p_2^2(1-p_1-p_2)$. Thus, the actual values are $v_1(1) = \frac{\tilde{v}_1(1)}{\alpha_2}, ~~ v_2(2) = \tilde{v}_2(2); ~v_1(12) = \frac{\tilde{v}_1(12)}{\alpha_2} ~ and ~ v_2(12) = \frac{\tilde{v}_2(12)}{\alpha_2} $. Since $\alpha_2<1$ thus $\frac{\tilde{v}_2(12)}{\alpha_2} > \tilde{v}_2(12) > \tilde{v}_2(2)  = v_2(2)$, and $\frac{\tilde{v}_1(12)}{\alpha_2} > \frac{\tilde{v}_1(1)}{\alpha_2}$. The noise-free game is: ${v}_1(12) > {v}_1(1)$; \hspace{1 mm} ${v}_2(12) > {v}_2(2)$.
So, $\pi = \tilde{\pi}$ in this case.
	
	\item $\alpha(1) =\alpha_2; \alpha(2) = \alpha_2; ~\alpha(12) = 1$. Probability of such alpha's is $p_2^2(1-p_1-p_2)$. Thus, the actual values are $v_1(1) = \frac{\tilde{v}_1(1)}{\alpha_2}, ~~ v_2(2) = \frac{\tilde{v}_2(2)}{\alpha_2}; ~v_1(12) = \tilde{v}_1(12) ~ and ~ v_2(12) = \tilde{v}_2(12)$. The noise-free game preferences are unclear; they will depend on the relative values of $\alpha_2$ and $\tilde{v}$. If $\alpha_2$ and $\tilde{v}$'s are such that $\tilde{v}_1(12) > \frac{\tilde{v}_1(1)}{\alpha_2}   $ and $\tilde{v}_1(12) > \frac{\tilde{v}_2(2)}{\alpha_2} $ then $\pi  = N $, otherwise $\pi = \{\{1\},\{2\}\}$.
	
	\item $\alpha(1) =1; \alpha(2) = \alpha_1; ~\alpha(12) = \alpha_2$. Probability of such alpha's is $ p_1 p_2 (1-p_1-p_2)$. Thus, the actual values are $v_1(1) = \tilde{v}_1(1), ~~ v_2(2) = \frac{\tilde{v}_2(2)}{\alpha_1}; ~v_1(12) = \frac{\tilde{v}_1(12)}{\alpha_2} ~ and ~ v_2(12) = \frac{\tilde{v}_2(12)}{\alpha_2}$. Since $\frac{\tilde{v}_1(12)}{\alpha_2} > \tilde{v}_1(12) > \tilde{v}_1(1)$ and $\frac{\tilde{v}_2(12)}{\alpha_2} > \tilde{v}_2(12) > \tilde{v}_2(2)  > \frac{\tilde{v}_2(2)}{\alpha_1} $. The noise-free game is: ${v}_1(12) > {v}_1(1)$; \hspace{1 mm} ${v}_2(12) > {v}_2(2)$.
	So, $\pi = \tilde{\pi}$ in this case.
	
	\item $\alpha(1) =1; \alpha(2) = \alpha_2; ~\alpha(12) = \alpha_1$. Probability of such alpha's is $p_1 p_2 (1-p_1-p_2)$. Thus, actual values are $v_1(1) = \tilde{v}_1(1), ~~ v_2(2) = \frac{\tilde{v}_2(2)}{\alpha_2}; ~v_1(12) = \frac{\tilde{v}_1(12)}{\alpha_1} ~ and ~ v_2(12) = \frac{\tilde{v}_2(12)}{\alpha_1}$. The noise-free game preferences are unclear; it will depend on the relative values $\alpha_1$, $\alpha_2$ and $\tilde{v}$. If $\alpha_1$, $\alpha_2$ and $\tilde{v}$'s are such that $\frac{\tilde{v}_1(12)}{\alpha_1} > \tilde{v}_1(1) $ and $\frac{\tilde{v}_2(12)}{\alpha_1} > \frac{\tilde{v}_2(2)}{\alpha_2}$ then $\pi  = N $, otherwise $\pi = \{\{1\},\{2\}\}$.
	
	\item $\alpha(1) =\alpha_1; \alpha(2) = 1; ~\alpha(12) = \alpha_2$. Probability of such alpha's is $p_1 p_2 (1-p_1-p_2)$. Thus, the actual values are $v_1(1) = \frac{\tilde{v}_1(1)}{\alpha_1}, ~~ v_2(2) = \tilde{v}_2(2); ~v_1(12) = \frac{\tilde{v}_1(12)}{\alpha_2} ~ and ~ v_2(12) = \frac{\tilde{v}_2(12)}{\alpha_2}$. Since $\frac{\tilde{v}_1(12)}{\alpha_2} > \tilde{v}_1(12) > \tilde{v}_1(1)$ and $\frac{\tilde{v}_2(12)}{\alpha_2} > \tilde{v}_2(12) > \tilde{v}_2(2)  $. The noise-free game is: ${v}_1(12) > {v}_1(1)$; \hspace{1 mm} ${v}_2(12) > {v}_2(2)$.
So, $\pi = \tilde{\pi}$ in this case.
	
	\item $\alpha(1) =\alpha_2; \alpha(2) = 1; ~\alpha(12) = \alpha_1$. Probability of such alpha's is $p_1 p_2 (1-p_1-p_2)$. Thus, the actual values are $v_1(1) = \frac{\tilde{v}_1(1)}{\alpha_2}, ~~ v_2(2) = \tilde{v}_2(2); ~v_1(12) = \frac{\tilde{v}_1(12)}{\alpha_1} ~ and ~ v_2(12) = \frac{\tilde{v}_2(12)}{\alpha_1}$. The noise-free game preferences are unclear; it will depend on the relative values of $\alpha_1$, $\alpha_2$, and $\tilde{v}$. If $\alpha_1$, $\alpha_2$ and $\tilde{v}$'s are such that $\frac{\tilde{v}_1(12)}{\alpha_1} > \frac{\tilde{v}_1(1)}{\alpha_2} $ and $\frac{\tilde{v}_2(12)}{\alpha_1} > \tilde{v}_2(2)$ then $\pi  = N $, otherwise $\pi = \{\{1\},\{2\}\}$.

	\item $\alpha(1) =\alpha_1; \alpha(2) = \alpha_2; ~\alpha(12) = 1$. Probability of such alpha's is $p_1 p_2 (1-p_1-p_2)$. Thus, the actual values are $v_1(1) = \frac{\tilde{v}_1(1)}{\alpha_1}, ~~ v_2(2) = \frac{\tilde{v}_2(2)}{\alpha_2}; ~v_1(12) = \tilde{v}_1(12)~ and ~ v_2(12) = \tilde{v}_2(12)$. The noise-free game preferences are unclear; it will depend on the relative values of $\alpha_1$, $\alpha_2$, and $\tilde{v}$. If $\alpha_1$, $\alpha_2$ and $\tilde{v}$'s are such that $\tilde{v}_2(12) > \frac{\tilde{v}_2(2)}{\alpha_2} $ then $\pi  = N $ otherwise $\pi = \{\{1\},\{2\}\}$.

	\item $\alpha(1) =\alpha_2; \alpha(2) = \alpha_1; ~\alpha(12) = 1$. Probability of such alpha's is $p_1 p_2 (1-p_1-p_2)$. Thus, the actual values are $v_1(1) = \frac{\tilde{v}_1(1)}{\alpha_2}, ~~ v_2(2) = \frac{\tilde{v}_2(2)}{\alpha_1}; ~v_1(12) = \tilde{v}_1(12)~ and ~ v_2(12) = \tilde{v}_2(12)$. The noise-free game preferences are unclear; it will depend on the relative values of $\alpha_1$, $\alpha_2$, and $\tilde{v}$. If $\alpha_1$, $\alpha_2$ and $\tilde{v}$'s are such that $\tilde{v}_1(12) > \frac{\tilde{v}_1(1)}{\alpha_2}$ then $\pi  = N $, otherwise $\pi = \{\{1\},\{2\}\}$
	
	\item $\alpha(1) =\alpha_1; \alpha(2) = \alpha_1; ~\alpha(12) = \alpha_2$. Probability of such alpha's is $p_1^2 p_2$. Thus, the actual values are $v_1(1) = \frac{\tilde{v}_1(1)}{\alpha_1}, ~~ v_2(2) = \frac{\tilde{v}_2(2)}{\alpha_1}; ~v_1(12) = \frac{\tilde{v}_1(12)}{\alpha_2}~ and ~ v_2(12) = \frac{\tilde{v}_2(12)}{\alpha_2}$. Since, $\frac{\tilde{v}_1(12)}{\alpha_2} > \tilde{v}_1(12) > \tilde{v}_1(1) > \frac{\tilde{v}_1(1)}{\alpha_1}$, and $\frac{\tilde{v}_2(12)}{\alpha_2} > \tilde{v}_2(12) > \tilde{v}_2(2) > \frac{\tilde{v}_2(2)}{\alpha_1}$. The noise-free game is: ${v}_1(12) > {v}_1(1)$; \hspace{1 mm} ${v}_2(12) > {v}_2(2)$.
	So, $\pi = \tilde{\pi}$ in this case.
	
	\item $\alpha(1) =\alpha_1; \alpha(2) = \alpha_2; ~\alpha(12) = \alpha_1$. Probability of such alpha's is $p_1^2 p_2$. Thus, the actual values are $v_1(1) = \frac{\tilde{v}_1(1)}{\alpha_1}, ~~ v_2(2) = \frac{\tilde{v}_2(2)}{\alpha_2}; ~v_1(12) = \frac{\tilde{v}_1(12)}{\alpha_1}~ and ~ v_2(12) = \frac{\tilde{v}_2(12)}{\alpha_1}$. The noise-free game preferences are unclear; it will depend on the relative values of $\alpha_1$, $\alpha_2$, and $\tilde{v}$. If $\alpha_1$, $\alpha_2$ and $\tilde{v}$'s are such that $\frac{\tilde{v}_2(12)}{\alpha_1} > \frac{\tilde{v}_2(2)}{\alpha_2} $ then $\pi  = N $ otherwise $\pi = \{\{1\},\{2\}\}$.
	
	\item $\alpha(1) =\alpha_1; \alpha(2) = \alpha_2; ~\alpha(12) = \alpha_2$. Probability of such alpha's is $p_1 p_2^2$. Thus, the actual values are $v_1(1) = \frac{\tilde{v}_1(1)}{\alpha_1}, ~~ v_2(2) = \frac{\tilde{v}_2(2)}{\alpha_2}; ~v_1(12) = \frac{\tilde{v}_1(12)}{\alpha_2}~ and ~ v_2(12) = \frac{\tilde{v}_2(12)}{\alpha_2}$. Since, $\frac{\tilde{v}_1(12)}{\alpha_2} > \tilde{v}_1(12) > \tilde{v}_1(1) > \frac{\tilde{v}_1(1)}{\alpha_1}$. The noise-free game is: ${v}_1(12) > {v}_1(1)$; \hspace{1 mm} ${v}_2(12) > {v}_2(2)$.
	So, $\pi = \tilde{\pi}$ in this case.
	
	\item $\alpha(1) =\alpha_2; \alpha(2) = \alpha_1; ~\alpha(12) = \alpha_1$. Probability of such alpha's is $p_1^2 p_2$. Thus, the actual values are $v_1(1) = \frac{\tilde{v}_1(1)}{\alpha_2}, ~~ v_2(2) = \frac{\tilde{v}_2(2)}{\alpha_1}; ~v_1(12) = \frac{\tilde{v}_1(12)}{\alpha_1}~ and ~ v_2(12) = \frac{\tilde{v}_2(12)}{\alpha_1}$ Clearly, the preferences in the noise-free game are not clear; it will depend on the relative values of $\alpha_1$, $\alpha_2$ and $\tilde{v}$. If $\alpha_1$, $\alpha_2$ and $\tilde{v}$'s are such that $\frac{\tilde{v}_1(12)}{\alpha_1} > \frac{\tilde{v}_1(1)}{\alpha_2} $ then $\pi  = N $ otherwise $\pi = \{\{1\},\{2\}\}$.

	\item $\alpha(1) =\alpha_2; \alpha(2) = \alpha_1; ~\alpha(12) = \alpha_2$. Probability of such alpha's is $p_1 p_2^2$. Thus, the actual values are $v_1(1) = \frac{\tilde{v}_1(1)}{\alpha_2}, ~~ v_2(2) = \frac{\tilde{v}_2(2)}{\alpha_1}; ~v_1(12) = \frac{\tilde{v}_1(12)}{\alpha_2}~ and ~ v_2(12) = \frac{\tilde{v}_2(12)}{\alpha_2}$. Since, $\frac{\tilde{v}_2(12)}{\alpha_2} > \tilde{v}_2(12) > \tilde{v}_2(2) > \frac{\tilde{v}_2(2)}{\alpha_1}$. The noise-free game is: ${v}_1(12) > {v}_1(1)$; \hspace{1 mm} ${v}_2(12) > {v}_2(2)$.
	So, $\pi = \tilde{\pi}$ in this case.
	
	\item $\alpha(1) =\alpha_2; \alpha(2) = \alpha_2; ~\alpha(12) = \alpha_1$. Probability of such alpha's is $p_1 p_2^2$. Thus, the actual values are $v_1(1) = \frac{\tilde{v}_1(1)}{\alpha_2}, ~~ v_2(2) = \frac{\tilde{v}_2(2)}{\alpha_2}; ~v_1(12) = \frac{\tilde{v}_1(12)}{\alpha_1}~ and ~ v_2(12) = \frac{\tilde{v}_2(12)}{\alpha_1}$. The noise-free game preferences are unclear; it will depend on the relative values $\alpha_1$, $\alpha_2$ and $\tilde{v}$. If $\alpha_1$, $\alpha_2$ and $\tilde{v}$'s are such that $\frac{\tilde{v}_1(12)}{\alpha_1} > \frac{\tilde{v}_1(1)}{\alpha_2} $ and $\frac{\tilde{v}_2(12)}{\alpha_1} > \frac{\tilde{v}_2(2)}{\alpha_2} $	then $\pi  = N $ otherwise $\pi = \{\{1\},\{2\}\}$. 

	\item $\alpha(1) =\alpha_2; \alpha(2) = \alpha_2; ~\alpha(12) = \alpha_2$. The probability of such alpha is $ p_2^3$. Thus, the actual values are $v_1(1) = \frac{\tilde{v}_1(1)}{\alpha_2}, ~~ v_2(2) = \frac{\tilde{v}_2(2)}{\alpha_2}; ~v_1(12) = \frac{\tilde{v}_1(12)}{\alpha_2}~ and ~ v_2(12) = \frac{\tilde{v}_2(12)}{\alpha_2}$. The noise-free game is: ${v}_1(12) > {v}_1(1)$; \hspace{1 mm} ${v}_2(12) > {v}_2(2)$.
	So, $\pi = \tilde{\pi}$ in this case.
\end{enumerate}
Since $\overline{r}= \max \left\lbrace \frac{\tilde{v}_1(12)}{\tilde{v}_1(1)}, \frac{\tilde{v}_2(12)}{\tilde{v}_2(2)} \right\rbrace$, and $\underline{r}=\min \left\lbrace \frac{\tilde{v}_1(12)}{\tilde{v}_1(1)}, \frac{\tilde{v}_2(12)}{\tilde{v}_2(2)} \right\rbrace$. From above cases, we see that in 14 out of 27 cases (case 1,3,4,7,8,9,12,13,15,17,21,23,25,27) we have $\pi= \tilde{\pi} = N$ in noise-free game. In these cases, the relative value of $\tilde{v}_1(\cdot), \tilde{v}_2(\cdot)$ should satisfy $\alpha_1 \geq \overline{r}, \frac{1}{\alpha_2} \geq \overline{r}, \frac{\alpha_1}{\alpha_2} \geq \overline{r}$. The prediction probability in this case is given below as $g(p_1, p_2)$. Whereas if we allow for the cases, say $\alpha_1 < \underline{r}~;~ \frac{1}{\alpha_2} < \underline{r} ~; ~ \frac{\alpha_1}{\alpha_2} < \underline{r}$, then the prediction probability is 1. So, these are the two extreme cases. However, if we take any other range of $\alpha$'s, the prediction probability will be more than $g(p_1, p_2)$ and less than 1. Thus,
\begin{equation}
\label{eqn: pred_prob_3_supp_suppli}
\mathbb{P}[\pi = \tilde{\pi} ~|~ game~1] = \begin{cases} g(p_1,p_2),  & ~if~~ \alpha_1 \geq \overline{r}~;~ \frac{1}{\alpha_2} \geq \overline{r} ~; ~ \frac{\alpha_1}{\alpha_2} \geq \overline{r}
\\
1, &~ if~~ \alpha_1 < \underline{r}~;~ \frac{1}{\alpha_2} < \underline{r} ~; ~ \frac{\alpha_1}{\alpha_2} < \underline{r},
\end{cases}
\end{equation}
where $g(p_1,p_2) =
p_1^3 + p_2^3 +   2(p_1 (1-p_1-p_2)^2 +  p_2^2 (1-p_1-p_2) + p_1p_2(1-p_1-p_2) + p_1p_2^2 ) + p_1^2 p_2 + p_1^2 (1-p_1-p_2) + p_2 (1-p_1-p_2)^2 +(1-p_1-p_2)^3$. \end{proof}

\subsubsection{Safety value via global minima for 2 agents and 3 support noise model}
\label{subsec: counter_example_3_support}

Here we will show that the above prediction probability given in Equation (\ref{eqn: pred_prob_3_supp_suppli}) can be non-convex in $p_1,p_2$. So, the global minima are difficult to hope for. 

Note that $\frac{\partial g(p_1,p_2) }{\partial p_1} = 3 p_1^2 - ( p_2 - 1)^2$ and $\frac{\partial g(p_1,p_2)}{\partial p_2} = -2p_1(p_2-1) - 3 p_2^2 + 6p_2 - 2$. Hence, we have $\frac{\partial^2 g(p_1,p_2) }{\partial^2 p_1} = 6p_1$, $\frac{\partial^2 g(p_1,p_2) }{\partial p_1 p_2} = \frac{\partial^2 g(p_1,p_2) }{\partial p_2 p_1} = -2 (p_2 -1 )$, and $\frac{\partial^2 g(p_1,p_2) }{\partial p_2^2} = -2p_1 - 6p_2 + 6$. Thus, the Hessian of $g(p_1, p_2)$ is
\begin{equation*}
H(g(p_1, p_2)) = 
\begin{bmatrix}
6 p_1  & -2 (p_2 -1 )
\\
-2 (p_2 -1 ) & -2p_1 - 6p_2 + 6
\end{bmatrix}.
\end{equation*}
For $p_1 = 0.3$ and $p_2 = 0.5$, we have
\begin{equation*}
H(g(p_1, p_2)) = 
\begin{bmatrix}
0.18  & 1
\\
1 & 2.4
\end{bmatrix}.
\end{equation*}
The eigenvalues are $\lambda_1 = 2.78$, and $\lambda_2 = -0.20$. So, $g(p_1,p_2)$ is not a convex function. Therefore, finding the global minima is difficult.

Though the above prediction probability is non-convex, one can get the noise set such that the prediction probability is more than a given satisfaction $\zeta$. Similar to the 2 support cases, where the prediction probability was a convex function, but the noise regimes were disjoint intervals, in 3 support cases also, we get disjoint sets. However, computing the exact safety value is problematic because it is the global minima of the non-convex prediction probability function. Note that the safety value is a fundamental limit such that below a user-given satisfaction $\zeta$, the partition is noise robust in the entire noise probability simplex. 

As earlier, in the noise regimes where the prediction probability is more than $\zeta$, a partition $\tilde{\pi}$ that is core-stable in a noisy game will remain core-stable in a noise-free game.

\end{document}